\newcommand \sg[1] {{\color{black}#1}}
\definecolor{blue}{rgb}{0,0,0}
\newcommand{\etal}{{\em et~al.~}}
\newcommand{\abs}[1]{\lvert #1 \rvert}
\newcommand{\E}{\mathcal{E}}			
\newcommand{\bigO}{\mathcal{O}}			
\newcommand{\minmin}{\textsc{Min-Min}}			
\newcommand{\Reff}{{R}_{\text{eff}}}
\newcommand{\Reffk}[1]{{R}^{(#1)}_{\text{eff}}}	
\DeclareMathOperator*{\argmin}{arg\,min}
\newtheorem{theorem}{Theorem}
\newtheorem{lemma}{Lemma}
\newtheorem{corollary}{Corollary}
\newtheorem{definition}{Definition}
\algnewcommand\INPUT{\item[{\textbf{Input:}}]}
\algnewcommand\RETURN{\item[{\textbf{Return:}}]}
\title{\vspace{-40pt} Electrical Flows over Spanning Trees \vspace{15 pt}}
\author[1]{Swati Gupta}
\author[2] {Ali Khodabakhsh}
\author[1]{Hassan Mortagy}
\author[2] {Evdokia Nikolova}
\affil[1]{\small Georgia Institute of Technology, Atlanta GA 30332, USA \protect \\
{\small \tt \{swatig,hmortagy\}@gatech.edu}}
\affil[2]{\small University of Texas at Austin, Austin TX 78712, USA \protect \\
{ \tt {ali.kh@utexas.edu} \protect \\
{nikolova@austin.utexas.edu}}}
\date{}
\begin{document}
\maketitle
\vspace{-30pt}
\begin{abstract}
The {\it network reconfiguration} problem seeks to find a rooted tree $T$ such that the energy of the (unique) feasible electrical flow over $T$ is minimized. The tree requirement on the support of the flow is motivated by operational constraints in electricity distribution networks. The bulk of existing results on convex optimization over vertices of polytopes and on the structure of electrical flows do not easily give guarantees for this problem, while many heuristic methods have been developed in the power systems community as early as 1989. Our main contribution is to give the first provable approximation guarantees for the network reconfiguration problem. We provide novel lower bounds and corresponding approximation factors for various settings ranging from $\min\{\bigO(m-n), \bigO(n)\}$ for general graphs, to $\bigO(\sqrt{n})$ over grids with uniform resistances on edges, and $\bigO(1)$ for grids with uniform edge resistances and demands. To obtain the result for general graphs, we propose a new method for (approximate) spectral graph sparsification, which may be of independent interest. \sg{Using insights from our theoretical results}, we propose a general heuristic for the network reconfiguration problem that is orders of magnitude faster than existing methods in the literature, while obtaining comparable performance.
\vspace{5pt}\\
{\bf Keywords} Electrical flows, distribution network reconfiguration, approximation algorithms, iterative rounding, spectral sparsification
\end{abstract}


%


\section{Introduction}\label{sec:intro}
Electricity distribution and transmission networks have been a rich source of non-convex problems with combinatorial structure that have helped discover limitations of, as well as advance, the theory of discrete and continuous optimization \cite{coffrin2015qc,gan2014exact,molzahn2019survey}. In this paper, we consider the {\it network reconfiguration problem} which seeks to find a rooted tree such that the energy of the electrical flow over the tree is minimized. This problem is motivated by the operational requirements of electricity distribution networks. Distribution networks contain low-voltage power lines connecting substations to end consumers.  They are typically built as mesh networks, i.e., containing cycles, but operated as radial networks, i.e., power is sent along a tree rooted at a substation, with leaves corresponding to end consumers. A real-world example of such radial configuration is shown in Figure~\ref{fig:realnetwork}.
The tree structure is achieved 
by turning switches on or off so that there are no cycles \sg{(i.e., by adding or deleting edges to configure a specific tree)}. This tree structure is desirable for the security of an electricity distribution network: a fault can be more easily isolated in practice when the downstream edge from any fault gets disconnected from the root, while cycles in the operational network might result in compromising large parts of the network. Distribution network reconfiguration is a key tool used by operators to balance the electric load across power lines and mitigate power losses.\footnote{{Throughout the paper we use energy and loss interchangeably.}} Minimizing power loss becomes a major concern in distribution networks as they operate with low voltage power lines, and hence can admit fairly big energy losses \cite{sarfi1994survey} 
(around 10\%, in contrast with the much lower losses at the high-voltage transmission level of about 1-3\%).
We give a formal description of the network reconfiguration problem next.

\subsection{Problem Formulation} {Let $G=(V,E)$ be an undirected graph ($\abs{V}=n$, $\abs{E}=m$), with root $r\in V$}, resistances $r_e>0$ for each edge\footnote{Edges are referred to as {\it lines}, root is the location of the {\it substation}, and demands are often referred to as {\it loads} in the power systems community.} $e\in E$ and demands $d_i\geq 0$ for each node $i\in V\backslash \{r\}$ supplied by the root node (thus $d_r = - \sum_{i\in V\setminus \{r\}} d_i$). {Let $\delta^+(v)$ and $\delta^-(v)$ denote the sets of incoming and outgoing edges of $v$, after fixing an arbitrary orientation on the edges}. Then, the {\em network reconfiguration} problem is to minimize the energy of the feasible flow {(also, referred to as the power loss)} such that the support of the flow  is acyclic:
\begin{align}
    \tag{P0} \min \quad& \E(f) := \sum_{e \in E} r_e f_e^2\label{eq:p0}\\
    \text{subject to} \quad&\sum_{e \in \delta^+(i)} f_e - \sum_{e\in \delta^-(i)} f_e = d_i, ~\forall i\in V,\label{flow}\\
    & \text{support}(f) \text{ is acyclic} \label{support}, 
\end{align}
where $f$ is any feasible flow satisfying the demands \eqref{flow}\footnote{Here we use a simplified linear flow model, similar to \cite{Madry_2010}. In reality, power flow equations are nonlinear and result in non-convex optimization problems  \cite{low2014convex,madani2014convex,molzahn2014investigation,peng2016distributed}. We refer the reader to a recent survey on relaxations and approximations of power flow equations \cite{molzahn2019survey}. In contrast, here we aim to relax the non-linearity of the power flow model, and instead focus on the combinatorial aspect of the optimization problem in \eqref{eq:p0}.}, the support of $f$ is constrained to be acyclic (and therefore, a tree rooted at $r$) \eqref{support}, and the objective \eqref{eq:p0} is to minimize the energy of the flow. While there may be more than one feasible flow satisfying the demands in general, an electrical flow minimizes the energy subject to meeting the demands. 
Moreover, {given an $r$-rooted tree}, there is a unique flow $f$ on the tree that satisfies the demands: $f_e = \sum_{i \in \text{succ}(e)} d_i$, where $\text{succ}(e)$ is the set of nodes that connect to the root through $e$. {Also, note that any $r$-rooted tree can be augmented to be a spanning tree in the graph at no additional cost.\footnote{Contract the support of the flow and add edges with 0 flow to construct a spanning tree.}} Therefore, the network reconfiguration problem \eqref{eq:p0} {is equivalent to}:\footnote{We assume that the demands are real numbers, while in energy systems, demands are usually complex numbers $d=p+\mathbf{i}q$, capturing the active $(p)$ and reactive $(q)$ parts of the demand. In this case, the objective function can be decomposed into two additive parts, in which one is only a function of real demands ($p$), and the other is only a function of the reactive part ($q$). All our approximation guarantees hold for this more realistic objective as well, because the proposed solutions would guarantee the same approximation factor for both parts of the objective.}
\begin{equation*}
\min_{T\in \mathcal{T}} \quad \sum_{e\in T} r_e\left( \sum_{i\in \text{succ}(e)} d_i\right)^2, 
\tag{P1}
\label{eq:p1}
\end{equation*}
where $\mathcal{T}$ is the set of all spanning trees of $G$.

Although distribution grid practitioners and power systems researchers have used a broad range of heuristics to get reasonable solutions in practice (see e.g., \cite{civanlar1988distribution,baran1989network,kumar2014power}), little is known about provable bounds for the network reconfiguration problem. 
Khodabakhsh \etal \cite{khodabakhsh2018submodular} provided an integer programming formulation (with a cubic objective) for \eqref{eq:p1} involving $\bigO(n^3)$ constraints and variables using Martin's extended formulation for spanning trees \cite{martin1991}. Note that the network reconfiguration problem requires minimizing a quadratic function over the vertices of the general flow polytope (Theorem 7.4 in \cite{bertsimas_97})\footnote{{\sg{The support of the vertices of the general flow polytope is a tree}}. We use the term `polytope' instead of `polyhedron' because we assume $r_e > 0$ for all edges $e \in E$.}. In particular, if all the resistances are uniform, then the problem is equivalent to finding a vertex with the smallest Euclidean norm, which is known to be NP-hard (see for example Lemma 4.1.4 in \cite{haddock_2018}). We next present an overview of the related work on this problem.

\subsection{Related Work} Khodabakhsh~\etal \cite{khodabakhsh2018submodular} showed that \eqref{eq:p1} is NP-hard even for the \emph{uniform} case where all the resistances and demands are equal to one, i.e., $r_e=1$ for all $e\in E$ and $d_i=1$ for all $i\in V\backslash \{r\}$. They showed that \eqref{eq:p1} can be cast as a supermodular minimization problem subject to a matroid base constraint. Inspired by submodular maximization, a local search algorithm was also proposed in \cite{khodabakhsh2018submodular}. Their local search algorithm is equivalent to a well-known heuristic called \emph{branch exchange}, which moves locally by swapping edges starting with a spanning tree. The best known guarantees for this local search come from the generic result on constrained submodular maximization \cite{lee2010maximizing}, however the approximation factor depends on an upper bound of the objective function (and thus can be arbitrarily bad). A brute-force method was proposed by \sg{Morton and Mareels} \cite{morton2000efficient}, via enumerating all spanning trees and calculating the losses by adjusting the losses of the previous spanning tree. 
Khodr and Martinez \cite{khodr2009integral} used Benders decomposition to study the joint problem of network reconfiguration and optimal power flow. Using this decomposition, they decomposed the problem into master and slave subproblems, where they used a mixed-integer non-linear solver for the master problem. 
These methods are computationally intractable for large networks. For general polytopes, the \sg{best approximation} one can hope to get \sg{in polynomial time}, for minimizing \sg{a general} strongly convex function over integral points in a polytope is $O(n^2-n)$ where $n$ is the dimension of the polytope \cite{Baes_2012}. Moreover, for minimizing quadratic functions (like our objective), Hildebrand et al. \cite{Hildebrand_2016} give an FPTAS for rounding to an integer point in the flow polytope; see Appendix \ref{sec:convex} for more details. However, 
\sg{different techniques are required for the network reconfiguration problem, since we need to minimize the quadratic function over {\it vertices} of the flow polytope, and not over {\it feasible integral} points.}

Besides distribution networks, switching problems have also been studied in electricity transmission networks, such as optimal transmission switching \cite{fisher2008optimal,fattahi2018bound,kocuk2017new} and maximum transmission switching \cite{grastien2018maximum}. Despite the hardness of these problems \cite{kocuk2016cycle,lehmann2014complexity}, Grastien~\etal \cite{grastien2018maximum} show how to achieve a 2-approximation for maximum transmission switching on cacti graphs. However, for transmission networks there is no requirement on the support to be acyclic, 
thus making our problem structurally very different from those above.

In {the} network reconfiguration problem, relaxing the tree constraint \eqref{support} results in the well-studied problem of computing electrical flows as they uniquely minimize energy \cite{peng_2011,kelner_2013,cohen_2019}. However, existing results in spectral sparsification \cite{spielman_2008,spielman_spars}, \sg{that sparsify a graph without changing the energy much}, do not extend to our setting since they change the resistances on the remaining edges to compensate for edge deletions. Many existing heuristics involve iterative edge-deletion ({e.g., \cite{shirmohammadi1989reconfiguration}}) using the electrical flow values in the resultant graph, but offer no provable guarantees. It is also known that the total stretch of a tree can bound the ratio between the energy of that tree with the original graph \cite{kelner_2013}; however, using low-stretch trees would only provide an $\tilde{\bigO}(m)$ approximation \cite{ofer_2012,karp_1995}. We give a more detailed review of related work in Appendix~\ref{sec:relwork} and a summary of contributions next. 

\vspace{-1em}
\subsection{Summary of Contributions}
Ours is the first paper to provide a provable approximation guarantee for the network reconfiguration problem, to the best of our knowledge. We delve deeper into the problem structure to construct novel lower bounds and give new ways of graph sparsification while maintaining original edge-resistances. Our theoretical insights also lead to a significant improvement in computations. We summarize our contributions below: 

\begin{itemize}[leftmargin=*]
    \item[(a)] {\it Flow Relaxation and new {\sc RIDe} algorithm:} We relax the spanning tree constraint, reducing the problem to finding a minimum energy electrical flow; we call this the {\it flow relaxation}. In Section~\ref{sec:flow_relax}, we construct instances that have a gap between the energy of the optimal tree and the flow relaxation of the order $\Theta(\sqrt{n}/\log n)$ over grid graphs and $\Theta(\Delta)$ in general graphs, where $n$ and $\Delta$ are the number of nodes and maximum degree in the graph (which can be linear in $n$) respectively. Further, we propose a randomized iterative edge-deletion algorithm, \textsc{RIDe}, that sparsifies the graph by deleting edges sampled according to a specific probability distribution dependent on the {\it effective} resistances of remaining edges. We show that this method can guarantee $\bigO(m-n)$ approximation with respect to the flow relaxation. This technique of sparsifying graphs may be of independent interest.
\end{itemize}

\begin{itemize}[leftmargin=*]    
    \item[(b)] {\it New Lower Bounds:}  We next exploit the combinatorial structure of the graph to obtain novel lower bounds in Section \ref{sec:n_and_logn}. We first show that any shortest-path tree, with respect to resistances, gives an $\bigO(n)$ approximation. \sg{Though a simple argument, this is already better than the $O(m^2-m)$ bound using \cite{Baes_2012} and the approximation using {\sc RIDe} for dense graphs}. Second, we show that we can improve this approximation factor by finding a laminar family of cuts. In particular, for \sg{certain} grid instances that have uniform edge resistances, a selection of laminar cuts gives an $\bigO(\sqrt{n})$-approximation. These results serve as preliminaries for our \minmin\ algorithm.
\end{itemize} 
 
\begin{itemize}[leftmargin=*]
    \item[(c)] {\it Constant-factor approximation using new \minmin\ algorithm:} \sg{Real-life distribution networks often resemble subgraphs of mesh-like networks. For such networks, like grid graphs with $n$ nodes and uniform demands, the above mentioned techniques, are able to provide only an $\Omega(\sqrt{n}/{\log n})$ approximation. Motivated by this, we construct a purely combinatorial} algorithm \minmin\, in Section \ref{sec:2apx}, that finds a specific shortest path tree over an $n \times n$ grid with uniform resistances and a root at one of the corners of the grid. We show that \minmin\ gives a $(2+\bigO(\frac{1}{\ln(n)}))(\frac{d_{\max}}{d_{\min}})^2$ approximation when the demands are in $[d_{\min}, d_{\max}]$. In particular, for uniform demands \minmin\ gives an asymptotic 2-approximation. 
  \end{itemize} 
\begin{itemize}[leftmargin=*]
    \item[(d)] {\it Layered Matching Heuristic and Computational Results:} Inspired by the algorithmic ideas in {\sc Min-Min}, we propose a layered matching heuristic called {LM}. \sg{This heuristic can be used to find approximate solutions in the most general setting, without assumptions on the structure of network, resistances or demands.} Using computational experiments over \sg{randomly} sparsified grid \sg{networks}, we find that {\sc LM} performs very well in practice. In addition, the algorithms proposed in this paper take orders of magnitude less time than the best known heuristic for this problem, the branch exchange heuristic, while obtaining comparable performance. For example, on $25 \times 25$ grid instances with sparsification probability $p = 0.2$, the mean time taken by LM is 1.35 seconds, whereas the mean time it takes the Branch Exchange heuristic to attain the same cost as LM is around 10 hours. \sg{We believe this improvement will be crucial in enabling} system operators to reconfigure distribution networks more frequently in practice. 
\end{itemize}

\section{Preliminaries for Electrical Flows}\label{sec:preliminaries}

\sg{Relaxing the support constraints in the network reconfiguration problem reduces the problem to computing an electrical flow, that we refer to as the {\it flow relaxation}.  Electrical flows have been shown to be efficiently computable in near-linear time \cite{Madry_2010,peng_2011,Madry_2016,cohen_2019}, used to speed up the computation of maximum flow \cite{Madry_2016}, 
and even bound the integrality gap of the asymmetric traveling salesman problem {(ATSP)} \cite{anari2015effective}.} We briefly review preliminaries on electrical flows in this section, and defer details to Appendix \ref{sec:background}. 

Let $G = (V,E)$ be a connected and undirected graph with $\abs{V}=n$, $\abs{E}=m$. Let $B \in \mathbb{R}^{n \times m}$ be the vertex-edge incidence matrix upon orienting each edge in $E$ arbitrarily. Let $R$ be an $m\times m$ diagonal resistance matrix where $R_{e,e} = r_{e}$. \sg{A key matrix that will play a fundamental role in the analysis of our algorithms is the weighted Laplacian} $L := BCB^T$, where $C = R^{-1}$. It is well known that if $G$ is connected, the only vector in the nullspace of the Laplacian $L$ is the all-ones vector $\mathbf{1}$. 

In what follows, we will invert the Laplacian \sg{matrix} using the Moore-Penrose pseudoinverse denoted by $L^\dagger$; in which case $L L^\dagger$ is a projection matrix that projects onto the span of the columns of $L$, which we denote by $\mathrm{im}(L)$. Let $b  \in \mathbb{R}^n$ be the feasible node-demand vector. The optimality conditions of the \sg{flow relaxation problem} imply the existence of a vector of potentials on the nodes (dual variables) $\phi  \in \mathbb{R}^n$ such that $\phi = L^\dagger b $ (this is well-defined since $\mathbf{1}^T b = 0$) and the optimal electrical flow $f = CB^T\phi$. Using these facts, one can show that the optimal energy $\E(f) = R^TfR =  \phi^T L \phi = b^T \phi = b^T L^\dagger b$. For any pair of vertices $u,v$ the effective resistance $\Reff(u,v)$ is the energy of sending \emph{one} unit of electrical flow from $u$ to $v$. In particular, for any vertex $u \in V$, if we let $\mathbbm{1}_u \in  \mathbb{R}^{n}$ be the characteristic vector of $u$, then $\Reff(u,v) = \chi_{uv}^T L^\dagger\chi_{uv}$, where $\chi_{uv} = \mathbbm{1}_v - \mathbbm{1}_u$ and can be thought of as the demand vector in this case. \sg{While the above notation suffices for our purposes, for more background on electrical flows, we refer the reader to \cite{williamson2019network} and \cite{lyons_2017}. We next discuss our novel randomized iterative edge-deletion algorithm, {\sc{RIDe}}, that rounds a fractional point (i.e., minimum energy flow, relaxing the support constraint in \eqref{eq:p1}) in the flow polytope, while maintaining a provable increase in the energy.}

\section{A Randomized Iterative Edge-deletion Algorithm} \label{sec:flow_relax}
The key idea of {\sc RIDe} is to delete edges iteratively following a specific probability distribution, while maintaining the graph connectivity, until the resultant graph is a spanning tree. This is done as follows: sample an edge $e$ at random to delete from the graph with probability $p_e$ proportional to $1 - c_e \Reff(e)$, where $c_e = {1}/{r_e}$ is the conductance of an edge. The normalization constant of this probability distribution \sg{is known to be} $\sum_{e \in E} (1 - c_e\Reff(e)) = m - (n - 1)$ \sg{(since $\big(c_e \Reff(e)\big)_e$ is in Edmonds' spanning tree polytope)}  \cite{Durfee_2016}
~(see Appendix \ref{sec:Uniform Spanning Trees}). Intuitively, the quantity $c_e\Reff(e)$ fully characterizes the graph's ability to \emph{reconfigure} the flow upon deleting edge $e$ (as we show later in Lemma \ref{recur}). The smaller the $c_e\Reff(e)$ (thus, the larger probability of deleting the edge), the better the graph's ability to re-route the flow of edge $e$ upon deleting the edge, without significantly increasing the energy. Moreover, this sampling procedure ensures that connectivity is maintained, since $p_e = 0$ for any bridge edge $e$, as in such a case we have $\Reff(e)=r_e$. To implement this algorithm efficiently, we show that the resultant graph Laplacian and effective resistances can be efficiently updated after every deletion, included in Algorithm \ref{alg:iterative deletion}. 


\begin{center}
\begin{algorithm}[t]
\caption{Randomized iterative deletion (\textsc{RIDe}) algorithm }
\label{alg:iterative deletion}
\begin{algorithmic}[1]
\INPUT A graph $G_0 = (V,E)$ and resistances $r : E \to \mathbb{R}_{++}$.\vspace{2 pt}
\For{$k = 0, \dots, m - n$}
        \State Compute $\Reffk{k} (e)$ for all $e \in E$, i.e., the effective resistance for each $e$ in $G_k$ 
        \State Sample edge $e \in E$ according to probabilities $p^{(k)}_e= (1 - c_e \Reffk{k}(e))/(m - k - (n - 1))$
        \State $G_{k+1} \leftarrow G_k \setminus\{{e}\}$ 
\EndFor
\vspace{5 pt}
 \RETURN Spanning tree $T = G_{m-n +1}$
\end{algorithmic}
\end{algorithm}
\vspace{-10 pt}
\end{center}


\sg{We will also show that} {\sc RIDe} gives an $\bigO(m-n)$ approximation factor in expectation with respect to the cost of the flow relaxation.  \sg{This is the first approximation guarantee for randomized edge deletion heuristics (such as \cite{shirmohammadi1989reconfiguration}), to the best of our knowledge.} 

\begin{theorem}\label{rand theorem}
    The randomized iterative edge-deletion algorithm, \textsc{RIDe}, gives an $\bigO(m-n)$ approximation in expectation with respect to the cost of the flow relaxation $\E(f_G)$ on the given graph $G$ {as $\mathbb{E}[\E(f_{ T})] \leq  \E(f_G)(m-n+2).$}
\end{theorem}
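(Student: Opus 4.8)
The plan is to track the energy of the electrical flow as \textsc{RIDe} peels off one edge at a time, and to show that in expectation the energy grows by only a tame multiplicative factor per iteration, so that the telescoped product collapses to $m-n+2$. Write $G_0=G,G_1,\dots,G_{m-n+1}=T$ for the random sequence of graphs the algorithm produces, and set $X_k:=\E(f_{G_k})$ for the energy of the electrical (minimum-energy) flow on $G_k$, so that $X_0=\E(f_G)$ is the flow-relaxation cost and $X_{m-n+1}=\E(f_T)$ is the energy of the returned tree (on a tree the feasible flow is unique, so this is well defined). The content of the argument is concentrated in a single-edge deletion identity, after which the probabilities are engineered so that everything cancels.

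The first step is to establish (this should be \ref{recur}) that if $G'=G\setminus\{e\}$ stays connected, then
\[
\E(f_{G'})-\E(f_G)=\frac{r_e\,(f_e^*)^2}{1-c_e\Reff(e)},
\]
where $f_e^*$ is the electrical-flow value on $e$ in $G$. I would derive this from the Laplacian downdate $L'=L-c_e\chi_e\chi_e^T$ (writing $\chi_e:=\chi_{uv}$ for $e=(u,v)$), applying Sherman--Morrison on the subspace $\mathbf{1}^\perp$ on which $L$ is invertible, and then using $\chi_e^T L^\dagger\chi_e=\Reff(e)$ together with Ohm's law to identify $b^T L^\dagger\chi_e=\phi_v-\phi_u=r_e f_e^*$; the factor $c_e r_e^2=r_e$ produces the stated numerator.

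The second step is the expected per-step increase. Conditioning on $G_k$, which deterministically has $m-k$ edges, the sampling law $p_e^{(k)}=(1-c_e\Reffk{k}(e))/(m-k-n+1)$ is chosen precisely so that $1-c_e\Reffk{k}(e)$ cancels the denominator in the deletion identity:
\[
\mathbb{E}[X_{k+1}-X_k\mid G_k]=\sum_{e\in G_k}p_e^{(k)}\frac{r_e(f_e^*)^2}{1-c_e\Reffk{k}(e)}=\frac{1}{m-k-n+1}\sum_{e\in G_k}r_e(f_e^*)^2=\frac{X_k}{m-k-n+1}.
\]
Here $(p_e^{(k)})$ is a valid distribution by Foster's identity $\sum_e c_e\Reff(e)=n-1$ on the connected graph $G_k$, and since every bridge has $\Reff(e)=r_e$ hence $p_e^{(k)}=0$, $G_{k+1}$ remains connected and each $X_k$ is well defined. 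Rearranging gives $\mathbb{E}[X_{k+1}\mid G_k]=\tfrac{m-k-n+2}{m-k-n+1}X_k$; because the multiplier depends only on the (deterministic) index $k$, taking total expectations yields $\mathbb{E}[X_{k+1}]=\tfrac{m-k-n+2}{m-k-n+1}\mathbb{E}[X_k]$. Substituting $j=m-k-n+1$, which runs over $j=m-n+1,\dots,1$ as $k=0,\dots,m-n$, telescopes the product:
\[
\mathbb{E}[\E(f_T)]=X_0\prod_{j=1}^{m-n+1}\frac{j+1}{j}=\E(f_G)\,(m-n+2).
\]

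The main obstacle I expect is the single-edge deletion identity: getting the rank-one Sherman--Morrison update correct for the \emph{pseudoinverse} rather than a true inverse (working on $\mathbf{1}^\perp$ and checking $b,\chi_e\perp\mathbf{1}$), and verifying the $c_e r_e^2=r_e$ simplification. Once that identity and the per-step normalization via Foster's theorem are in hand, the cancellation of $1-c_e\Reff$ and the telescoping are a short computation, so the substantive work really lives in that one update and in confirming connectivity is preserved at every iteration.
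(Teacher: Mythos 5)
Your proposal follows essentially the same route as the paper's proof: the rank-one pseudoinverse downdate giving the single-edge deletion identity (the paper's Lemmas \ref{lap update} and \ref{recur}), the cancellation of $1-c_e\Reffk{k}(e)$ against the sampling probabilities normalized via Foster's identity, and a telescoping of the per-step factor $\frac{m-k-n+2}{m-k-n+1}$ (the paper phrases this as an induction, but it is the same computation).

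One slip needs fixing: in the per-step expectation you sum over \emph{all} edges of $G_k$, but for bridge edges the deletion identity is undefined (its denominator $1-c_e\Reffk{k}(e)$ vanishes) and $p_e^{(k)}=0$, so the conditional expectation must be restricted to the non-bridge edges. After this restriction the cancellation gives
\begin{equation*}
\mathbb{E}[X_{k+1}-X_k\mid G_k]\;=\;\frac{1}{m-k-n+1}\sum_{e\ \text{non-bridge}} r_e f_k(e)^2\;\leq\;\frac{X_k}{m-k-n+1},
\end{equation*}
an inequality rather than your claimed equality, because bridges may carry flow that contributes to $X_k$ but not to the restricted sum. Propagating the $\leq$ through your telescoping yields $\mathbb{E}[\E(f_T)]\leq \E(f_G)(m-n+2)$, which is exactly the theorem's claim; your final equality is false in general (e.g., on instances where bridges carry demand). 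This is precisely how the paper handles the issue: it defines $E_k'=\{e\in E_k \mid p_e^{(k)}>0\}$, sums only over $E_k'$, and carries an inequality through the induction.
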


\sg{Note that the above theorem implies that for planar graphs, {\sc RIDe} gives an $O(n)$ approximation (since number of edges is linear in the number of nodes). In general, it seems one cannot obtain better than $\Omega(n)$ performance using \sg{rounding of} electrical flows, unless the flow relaxation is strengthened using new inequalities.}  
\sg{For instances with maximum degree $\Delta$, the gap from the flow relaxation can be $\Omega(\Delta)$. Consider a graph with two  nodes $r,t$ ($r$ is the root, and $t$ is the only node with positive demand, say 1 unit) with $n-2$ edge-disjoint paths between them; see Figure~\ref{fig:relax} (left). In this case, the electrical flow sends $1/\Delta$ units of flow along each of the $\Delta$ disjoint $r$-$t$ paths. The energy of the electric flow is then $\Theta(1/\Delta)$, however, the energy of the optimal spanning tree is $2$. Therefore, the gap from flow relaxation can be as large as $\Omega(n)$, since $\Delta$ can be as large as $\Theta(n)$.}

Moreover, the gap of the optimal tree compared to the flow relaxation can be large even for graphs with \sg{small $\Delta$}. Consider a $\sqrt{n} \times \sqrt{n}$ grid (with $n$ nodes) where the root $r$ is in the top left corner, all edges have unit resistances, the node in the bottom right corner, call it $t$, has a demand of one, and all other nodes (excluding the root) have zero demand; see Figure~\ref{fig:relax}~(middle) for an example {(here $\Delta=4$)}. Recall that the potential drop between $r$ and $t$ on sending \emph{one} unit of current from $r$ to $t$ is equal to the effective resistance $\Reff(r,t)$ between $r$ and $t$. Hence, using Ohm's Law this implies that the energy of the electrical flow is equal to $\Reff(r,t)$. In this instance, it is known that (see Proposition 10.11 in \cite{Levin_2006}) 
$$ \frac{1}{2} \log \sqrt{n} \leq  \Reff(r,t) \leq 2\log \sqrt{n}.$$
Furthermore, the cost of any optimal tree is $2 (\sqrt{n}-1)$, since any $r$-$t$ path has hop-length $2 (\sqrt{n}-1)$ and the path from $r$-$t$ can be grown into a spanning tree without incurring any additional cost. Combining these two facts, we get the gap for grid instances is $\Theta(\sqrt{n}/\log \sqrt{n}) = \Theta(\sqrt{n}/\log n)$.\sg{\footnote{We will improve upon this factor in Section \ref{sec:minmin}.}}

\begin{figure}
\centering
\begin{minipage}{.33\textwidth}
  \centering
  \includegraphics[scale = 0.40]{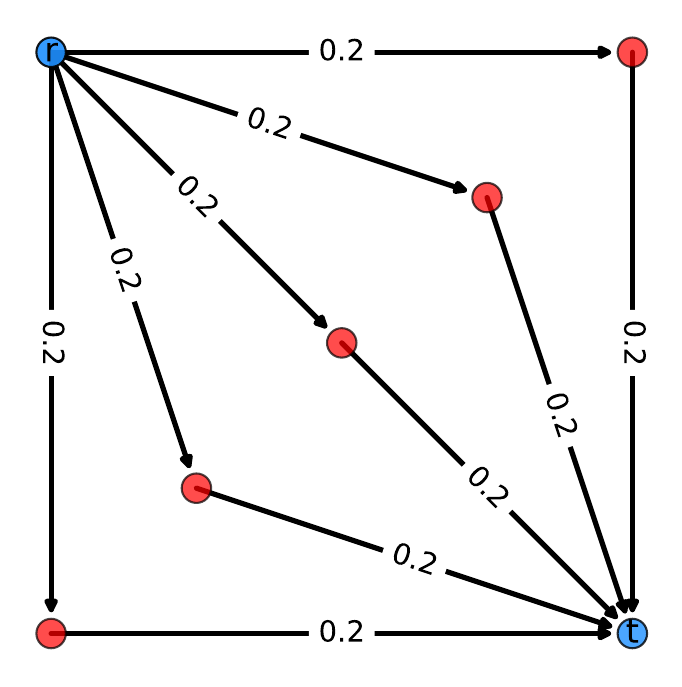}
\end{minipage}%
\begin{minipage}{.33\textwidth}
  \centering
  \includegraphics[scale = 0.40]{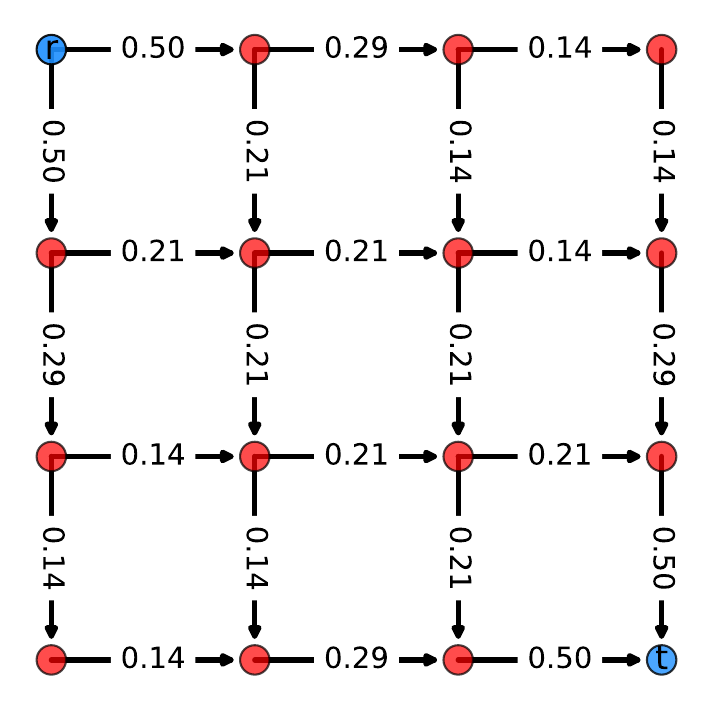}
\end{minipage}
\begin{minipage}{.33\textwidth}
  \centering
  \includegraphics[scale = 0.40]{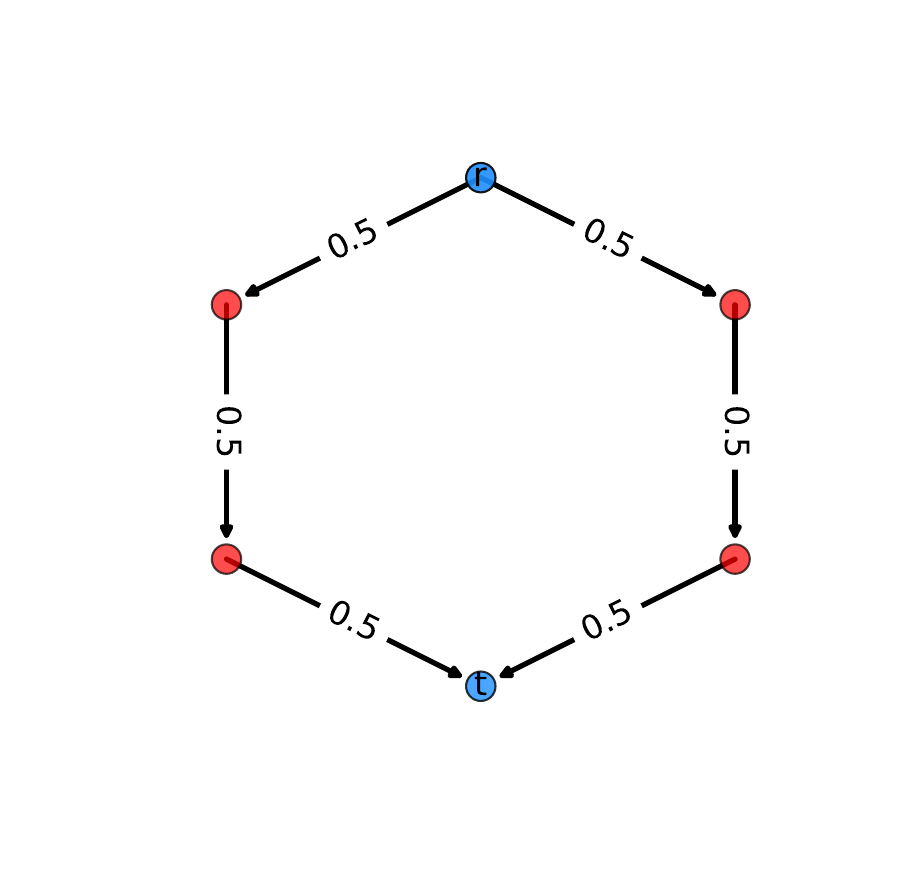}
\end{minipage}
\caption{The electrical flow on three graph instances, where all edges have unit resistance, the root is node $r$, node $t$ has unit demand and all other nodes (excluding $r$) have zero demand.}
 \label{fig:relax}
\end{figure}

Even existing work on analyzing the stretch\footnote{The stretch of a tree is a metric used to analyze how well a tree preserves distances between the endpoints of edges in the original graph.} of trees that has been used to bound the energy of a tree with respect to the flow relaxation \cite{kelner_2013}, does not give compelling approximation bounds, since there exist instances with {stretch $\Omega(m \log n)$} \cite{karp_1995}; we refer the reader to Appendix \ref{sec:relwork} for more details. {We next present our sparsification approach followed by the analysis of \textsc{RIDe}.} 

\subsection{Iterative Edge Deletions and Updates} \label{sec:iter del}
{Before we delve into the proof for the performance of {\sc RIDe}, we discuss one} of the main components in designing and analyzing the algorithm: determination of how the energy of the electrical flow changes after deleting an edge from the graph. \sg{As mentioned in preliminaries}, electrical flows are fully determined by the Laplacian and its pseudoinverse. Thus, to determine how electrical flows change upon edge deletions, we first obtain a closed form expression for how the Laplacian pseudoinverse changes from one iteration to the next, using the following extension of the Sherman-Morrison formula:
\begin{lemma}[Theorem A.70 in \cite{rao_1999}] \label{pseudo update} Suppose that $A \in  \mathbb {R} ^{n\times n}$ is symmetric matrix, $u,v\in \mathbb {R} ^n$ are vectors in $\mathrm{im}(A)$, and $1 - v^T A^\dagger u \neq 0$. Then, we have that  
$(A - uv^T)^\dagger  = A^\dagger + \frac{A^\dagger uv^TA^\dagger}{ 1 - v^T A^\dagger u}.$
\end{lemma}

We are now ready to prove the following result about updating the pesudoinverse of the Laplacian after an edge is deleted:

\begin{lemma}\label{lap update}
Let $L$ be the weighted Laplacian (weighted by conductances) of a connected graph $G = (V,E)$ and $L^\dagger$ be the Moore-Penrose pseudoinverse of $L$. Let $G^\prime = G \setminus \{e\}$ be the graph obtained by deleting an edge $e \in E$ that does not disconnect $G$. Then, the Moore-Penrose pseudoinverse of the weighted Laplacian of $G^\prime$ is:
\begin{equation} \label{new lap}
    (L')^\dagger = L^\dagger + \frac{L^\dagger \chi_e c_e \chi_e^T L^\dagger}{1 - c_e \chi_e^T L^\dagger\chi_e}.
\end{equation}
\end{lemma}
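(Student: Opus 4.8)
The plan is to recognize the edge deletion as a rank-one downdate of the Laplacian and adapt the Sherman--Morrison formula to the pseudoinverse. Writing $L = \sum_{e} c_e \chi_e \chi_e^T$ as a sum over edges, deleting $e$ yields $L' = L - c_e \chi_e \chi_e^T$, a symmetric rank-one perturbation. If $L$ and $L'$ were invertible, the claimed identity would be exactly Sherman--Morrison; the real work is to justify the same formula for the Moore--Penrose pseudoinverse, where $L$ is singular with $\ker L = \mathrm{span}(\mathbf{1})$.

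The crucial structural fact I would establish first is that deleting $e$ does not disconnect $G$, so $G'$ is still connected and $\ker L' = \mathrm{span}(\mathbf{1}) = \ker L$. Hence $\mathrm{im}(L') = \mathrm{im}(L) = \mathbf{1}^\perp$, and the two pseudoinverses invert their respective operators on the \emph{same} subspace $\mathbf{1}^\perp$, with the \emph{same} orthogonal projection $\Pi := L L^\dagger = I - \tfrac{1}{n}\mathbf{1}\mathbf{1}^T$ onto that subspace. I would also record the two facts that make the cross terms collapse: $\chi_e \in \mathbf{1}^\perp$ (so $\Pi \chi_e = \chi_e$, and $L^\dagger \mathbf{1} = 0$ gives $\chi_e^T L^\dagger \mathbf{1} = 0$), and $\chi_e^T L^\dagger \chi_e = \Reff(e)$ by the definition of effective resistance.

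With these in hand I would verify directly that the right-hand side $M := L^\dagger + \frac{c_e L^\dagger \chi_e \chi_e^T L^\dagger}{1 - c_e \Reff(e)}$ is the pseudoinverse of $L'$. Since $M$ is manifestly symmetric and satisfies $M\mathbf{1}=0$, it has the correct kernel and range, so it suffices to check $L' M = \Pi$. Expanding $L'M = (L - c_e\chi_e\chi_e^T)M$ and substituting $LL^\dagger = \Pi$, $\Pi\chi_e = \chi_e$, and the scalar $\Reff(e) = \chi_e^T L^\dagger\chi_e$, every term except $\Pi$ is proportional to $c_e \chi_e\chi_e^T L^\dagger$, with scalar coefficient $\tfrac{1}{1-c_e\Reff(e)} - 1 - \tfrac{c_e\Reff(e)}{1-c_e\Reff(e)}$, which is identically zero. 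This gives $L'M = \Pi$ and hence $M = (L')^\dagger$.

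Finally I would note that the formula is well defined: since $e$ is not a bridge, $\Reff(e) < r_e = 1/c_e$, so the denominator $1 - c_e\Reff(e) > 0$. The main obstacle is precisely the singularity of the Laplacian --- Sherman--Morrison does not apply verbatim to pseudoinverses in general, and it is the preservation of the range $\mathbf{1}^\perp$ (guaranteed by the connectivity of $G'$) that rescues the identity. Had the deletion disconnected the graph, $\mathrm{im}(L')$ would strictly shrink and the rank-one update formula would break down.
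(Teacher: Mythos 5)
Your proof is correct, but it takes a genuinely different route from the paper's. The paper performs the same rank-one downdate $L' = L - (\sqrt{c_e}\chi_e)(\sqrt{c_e}\chi_e)^T$ and then \emph{cites} a known extension of the Sherman--Morrison formula to Moore--Penrose pseudoinverses (Theorem A.70 in Rao and Toutenburg): for symmetric $A$, vectors $u,v \in \mathrm{im}(A)$, and $1 - v^T A^\dagger u \neq 0$, one has $(A - uv^T)^\dagger = A^\dagger + \frac{A^\dagger u v^T A^\dagger}{1 - v^T A^\dagger u}$. The paper's work then reduces to checking the two hypotheses: $\sqrt{c_e}\chi_e \in \mathrm{im}(L)$ (it is a column of $BC^{1/2}$, whose image equals $\mathrm{im}(L)$), and nonvanishing of the denominator (otherwise $\Reff(e) = r_e$, forcing $e$ to be a bridge, contradicting connectivity of $G'$). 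You instead prove the identity from scratch: connectivity of $G'$ gives $\ker L' = \ker L = \mathrm{span}(\mathbf{1})$, so both pseudoinverses act as inverses on the common image $\mathbf{1}^\perp$ with the common projector $\Pi = LL^\dagger$, and you then verify that your candidate $M$ satisfies $L'M = \Pi$ by direct expansion. Your reduction is valid: $L'M = \Pi$ together with symmetry of $M$ and $M\mathbf{1} = 0$ yields all four Penrose axioms (e.g.\ $ML'M = M\Pi = M$ because $M\mathbf{1}=0$, and $ML' = (L'M)^T = \Pi$ by symmetry), and your scalar cancellation $\frac{1}{1-c_e\Reff(e)} - 1 - \frac{c_e\Reff(e)}{1-c_e\Reff(e)} = 0$ checks out. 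What each approach buys: the paper's citation is shorter and delegates the singular-matrix subtleties to a reference, while yours is self-contained and makes explicit exactly where connectivity enters --- through preservation of the image $\mathbf{1}^\perp$, not merely through a nonzero denominator --- which also explains transparently why the formula breaks when $e$ is a bridge. Both arguments handle the denominator identically, via $\Reff(e) < r_e$ for non-bridge edges.
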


\begin{proof}
Let $e$ be the edge chosen for deletion. Observe that deleting $e$ from $G$ results in a rank-one update to $L$:
$$L^\prime = L - \chi_e c_e \chi_e^T = L - (\sqrt{c_e}\chi_e) (\sqrt{c_e}\chi_e)^T,$$
since $L = BCB^T$. {Using Lemma \ref{pseudo update}}, it suffices to show that $\sqrt{c_e}\chi_e \in \mathrm{im}(L)$ and $1 - c_e \chi_e^T L^\dagger\chi_e \neq 0$ (as $L \in \mathbb {R} ^{n\times n}$ is symmetric). First, since $C$ is a positive definite matrix we can write $L = BCB^T = (BC^{1/2}) (BC^{1/2})^T$. Let $U = BC^{1/2}$ and $u = \sqrt{c_e}\chi_e$. By construction $u$ is a column of $U$, {which implies that $u \in \mathrm{im}(U)$. Furthermore, since\footnote{{For any matrix $A \in \mathbb{R}^{m \times n}$, we have $\mathrm{im}(A) = \mathrm{im}(A A^T)$ (see e.g., Thm A.25 (iv) in \cite{rao_1999}}).} $\mathrm{im}(U) = \mathrm{im}(U U^T)$, it follows that $u$ is also in $\mathrm{im}(U U^T) = \mathrm{im}(L)$}. We further claim that $1 - c_e \chi_e^T L^\dagger\chi_e \neq 0$. To see this, suppose $1 - c_e \chi_e^T L^\dagger\chi_e = 0$. This implies $\chi_e^T L^\dagger\chi_e = 1/c_e$ or $\Reff(e) = r_e$. In other words, when we send one unit of electrical flow between the endpoints of $e$, all the flow goes through edge $e$. But this happens if and only if $e$ is a bridge (see Appendix \ref{intro to elec flows} for more details), which contradicts the assumption that $G^\prime$ was connected. Now, applying the pseudoinverse update formula given in {Lemma \ref{pseudo update}} yields the result.
\end{proof}

We can now use Lemma \ref{lap update} to obtain a closed form expression for how the energy increases from one iteration to the next. Suppose that we have performed $k$ iterations and deleted $k$ edges from the original graph $G$ to get a modified connected graph $G_k = (V, E_k)$ with $m-k$ edges. Let $B_k$, and $L_k$ respectively be the incidence and Laplacian matrices of $G_k$. Also, Let $f_k(e)$ and $\Reffk{k}(e)$ respectively be the electrical flow and effective resistance for edge $e$ in iteration $G_k$. Now, in the $(k+1)^\mathrm{th}$ iteration we wish to delete another edge $e$ from $G_{k}$ to obtain a connected graph $G_{k+1}$. We want to determine how much $\E(f_{k+1})$ changes compared to $\E(f_{k})$.

\begin{lemma} \label{recur}
  Assume $G_{k}$ is connected and $G_{k+1} = G_k \setminus \{e\}$ is obtained by deleting an edge $e$ from $G_k$ such that $G_{k+1}$ is also connected. Then $\E(f_{k+1})$ can be recursively obtained from $\E(f_{k})$ using: 
\begin{equation} \label{recursion1}
   \E(f_{k+1}) = \E(f_{k}) +\frac{r_e f_k(e)^2}{1 - c_e \Reffk{k}(e)}.
\end{equation}
\end{lemma}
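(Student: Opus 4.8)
The plan is to reduce everything to the energy--potential identity \eqref{energy-potentials} together with the rank-one Laplacian update just established in Lemma \ref{lap update}. Since deleting an edge does not change the demand vector $b$, the energy of the electrical flow in iteration $k$ is simply $\E(f_k) = b^T L_k^\dagger b$, and likewise $\E(f_{k+1}) = b^T L_{k+1}^\dagger b$. So the entire statement will follow by left- and right-multiplying the formula for $(L_{k+1})^\dagger$ by $b$ and identifying the two resulting pieces.

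Concretely, I would apply Lemma \ref{lap update} with $L = L_k$ and $e$ the deleted edge (this is legitimate because $G_k$ is connected, $G_{k+1}$ is connected, and hence $e$ is not a bridge, so the hypotheses of the lemma hold), giving
\begin{equation*}
\E(f_{k+1}) = b^T L_k^\dagger b + \frac{c_e \, b^T L_k^\dagger \chi_e \, \chi_e^T L_k^\dagger b}{1 - c_e\, \chi_e^T L_k^\dagger \chi_e}.
\end{equation*}
The first term is exactly $\E(f_k)$, and the denominator is $1 - c_e \Reffk{k}(e)$ directly from the definition \eqref{effective resis} of effective resistance. It remains to simplify the numerator, which factors as $c_e \bigl(\chi_e^T L_k^\dagger b\bigr)^2$.

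The one substantive step -- and the only place any real structure enters -- is evaluating $\chi_e^T L_k^\dagger b$. Here I would invoke the potential characterization $\phi = L_k^\dagger b$ established in Section \ref{intro to elec flows}, so that for $e = (u,v)$ the quantity $\chi_e^T L_k^\dagger b = \phi_v - \phi_u$ is precisely the potential drop across $e$. By Ohm's Law \eqref{ohm}, $f_k(e) = (\phi_v - \phi_u)/r_e$, hence $\chi_e^T L_k^\dagger b = r_e f_k(e)$. Substituting gives numerator $c_e\,(r_e f_k(e))^2 = c_e r_e^2 f_k(e)^2 = r_e f_k(e)^2$, using $c_e r_e = 1$, which yields exactly \eqref{recursion1}.

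I do not anticipate a genuine obstacle: the argument is a direct computation once Lemma \ref{lap update} is in hand. The only thing to be careful about is bookkeeping the sign/orientation convention in $\chi_e$ so that the potential-drop identity lines up with Ohm's Law, but since the numerator appears squared the sign is immaterial. The conceptual payoff worth emphasizing is that the increment depends on $r_e f_k(e)^2$ (the energy that edge $e$ was carrying) divided by $1 - c_e \Reffk{k}(e)$, which quantifies how readily the flow on $e$ can be rerouted -- motivating the sampling probabilities used in \textsc{RIDe}.
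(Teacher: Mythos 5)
Your proposal is correct and follows essentially the same route as the paper: both rest on Lemma \ref{lap update}, the energy identity $\E(f_k)=b^TL_k^\dagger b$, and the Ohm's-law identification $\chi_e^TL_k^\dagger b = r_ef_k(e)$. Your version is marginally more streamlined (you sandwich the pseudoinverse update with $b$ directly and use symmetry of $L_k^\dagger$, where the paper first updates the potentials $\phi_{k+1}$ and then invokes $b=L_k\phi_k$ and $L_kL_k^\dagger\chi_e=\chi_e$), but this is the same decomposition, not a different argument.
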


\begin{proof}
Since $G_k$ and $G_{k+1}$ are connected, using Lemma \ref{lap update} we get: 
$$ (L_{k+1})^\dagger = L^\dagger_k + \frac{L^\dagger_k \chi_e c_e \chi_e^T L^\dagger_k}{1 - c_e \chi_e^T L^\dagger_k\chi_e}.$$
Therefore, the new potentials are given by
$$\phi_{k+1} = L_{k+1}^\dagger b \\
= L^\dagger_k b + \frac{L^\dagger_k \chi_e c_e \chi_e^T L^\dagger_k b}{1 - c_e \chi_e^T L^\dagger_k\chi_e}\\
= \phi_k + \frac{L^\dagger_k \chi_e c_e \chi_e^T \phi_k}{1 - c_e \chi_e^T L^\dagger_k\chi_e}= \phi_k + \frac{L^\dagger_k \chi_e f_k(e)}{1 - c_e \chi_e^T L^\dagger_k\chi_e},$$
where we used the fact that $f_k = C_kB_k^T\phi_k$ (i.e. $f_k(e) = c_e \chi_e^T \phi_k$) in the last equality. By connectivity and feasibility assumptions we know that $b \in \mathrm{im}(L_k)$ and $\chi_e \in \mathrm{im}(L_k)$. This implies that $L_k \phi_k = b$ and similarly, $L_k L^\dagger_k \chi_e = \chi_e$. Then, using $\E(f_{k+1}) = b^T \phi_{k+1}$, we have
\begin{align*}
     \E(f_{k+1}) = b^T \phi_{k+1} &= \E(f_k) +  \frac{ b^T L^\dagger_k \chi_e f_k(e)}{1 - c_e \chi_e^T L^\dagger_k\chi_e}\\
     &\overset{\small (a)}{=} \E(f_{k}) +\frac{(L_k \phi_k)^T L^\dagger_k \chi_e  f_k(e) }{1 - c_e \Reffk{k}(e)}
     \overset{\small (b)}{=}  \E(f_{k}) +\frac{ \phi_k^T \chi_e f_k(e)}{1 - c_e \Reffk{k}(e)},
\end{align*}
where (a) follows from the fact that $L_k \phi_k = b$ and (b) from $L_k L^\dagger_k \chi_e = \chi_e$. The result then follows  from the optimality conditions $\chi_e^T \phi_k = f_k(e)/ c_e =r_e f_k(e)$.
\end{proof}

We would like to remark that the expression above holds for any edge $e \in E$ we delete from the graph as long as $e$ does not disconnect the graph as previously mentioned. We next discuss the performance of {\sc RIDe}, where an edge $e$ is deleted by randomly sampling it proportional to $1-c_e\Reff(e)$.

\subsection{Performance of {\sc{RIDe}} Algorithm} \label{Ride alg} 
 We are now ready to present the proof for Theorem \ref{rand theorem}. 
 
 \begin{proof} Let $G_k=(V, E_k)$ be the resultant graph in iteration $k$ after $k$ edges have been deleted, and let $\E(f_k)$ be the energy of the electrical flow in $G_k$. In particular, $\E(f_0)$ denotes the cost of the flow relaxation. We will first show 
    \begin{equation} \label{induc}
        \mathbb{E}[\E(f_{k})] \leq  \E(f_0) \left( \frac{m - n + 2}{m-k-n +2}\right), 
    \end{equation}
    and that $G_k$ remains connected throughout the iterations of the algorithm by induction on $k$ ($0 \leq k \leq m-n+1$). The base case when $k = 0$ holds trivially. For the inductive step assume the result holds true for all iterations $k < m-n + 1$. Now consider iteration $k + 1$ of the \textsc{RIDe} algorithm applied to the graph $G_k$. Recall that we sample an edge $e \in E_k$ at random to delete from the graph with probability $p^{(k)}_e = \frac{1 - c_e \Reffk{k}(e)}{(m-k) - (n - 1)}$. Since by the induction hypothesis $G_k$ is connected, and this sampling procedure does not disconnect the graph, $G_{k+1}$ is also connected. Moreover, using Lemma \ref{recur} to compute the expected increase in energy upon deleting edge $e$ (which does not disconnect the graph), we get
\begin{equation*} \label{formula}
    \E(f_{k+1}) = \E(f_{k}) +\frac{r_e f_k(e)^2}{1 - c_e \Reffk{k}(e)}.
\end{equation*}
Let $Z^{(k)}$ be a random variable denoting the increase in the energy from iteration $k$ to $k+1$. Upon deleting an edge $e \in E_k$ that does not disconnect $G_k$, we have $Z_e^{(k)} = r_e f_k(e)^2/(1 - c_e \Reffk{k}(e))$. Also, let $E_k^\prime = \{e \in E_k \mid p^{(k)}_e >0 \}$ be the set of edges that are not bridges. Then, since $E_k^\prime \subseteq E_k$, we have
\begin{align*}
    \mathbb{E}[Z^{(k)} \mid E_k] = \sum_{e \in E_k^\prime} Z_e^{(k)} p^{(k)}_e 
    &= \sum_{e \in E_k^\prime} \frac{r_e f_k(e)^2}{1 - c_e \Reffk{k}(e)}  \frac{1 - c_e \Reffk{k}(e)}{(m-k) - (n - 1)}\\
    &\leq \frac{\E(f_{k})}{m - k - n + 1}, 
\end{align*}
which in turn gives, using iterated expectations:
$$\mathbb{E}[Z^{(k)}] = \mathbb{E}[\mathbb{E}[Z^{(k)} \mid E_k]] \leq  \frac{\mathbb{E}[\E(f_{k})]}{m - k - n + 1}.$$
Therefore,
\begin{align*}
\mathbb{E}[\E(f_{k+1})] &= \mathbb{E}[\E(f_{k}) + Z^{(k)}] \leq  \mathbb{E}[\E(f_{k})] \left(1 + \frac{1}{m-k-n +1}\right)\\
&\leq {\E(f_0)  \left( \frac{m -n + 2}{m-k-n +2}\right) \left( \frac{m -k-n + 2}{m-k-n +1}\right)} \\
&= {\E(f_0) \left( \frac{m -n + 2}{m-(k+1)-n +2}\right)}, 
\end{align*}
where we used the induction hypothesis in the second inequality. This concludes the induction and proves the correctness of the algorithm. Lastly, to obtain the final expected cost of the algorithm, we use \eqref{induc}  with $k = m- n + 1$ to obtain $\mathbb{E}[\E(f_{ m- n +1})] \leq \E(f_0)(m-n+2)$, as claimed. {Since the cost of the optimal spanning tree is lower bounded by the cost of the flow relaxation, $\E(f_0)$, the result follows.}
\end{proof}



\sg{The approximation factor for {\sc RIDe} proved above is in fact tight, as can be seen from the following simple example}. Consider a cycle of $n$ nodes, where $n$ is even, fix a root $r$ arbitrarily and assume all the resistances are one. Now assign the node of hop-length $m/2$ from the root, call it $t$, a unit demand and all other nodes (excluding the root) zero demand. Then, it is easy to see that the flow relaxation will send a flow of 0.5 from $r$ to $t$ along the two disjoint $r$-$t$ paths of hop-length $m/2$ in the cycle; see Figure~\ref{fig:relax}~(right) for an example. Thus, the cost of the flow relaxation is ${m/4}$. Moreover, the optimal tree will just send a flow of 1 from $r$ to $t$ along on a path of hop-length $m/2$, which implies the cost of the optimal tree is $m/2$. On the other hand, since each edge in the cycle has the same effective resistance, using the \textsc{RIDe} algorithm, each edge is equally likely to be deleted. Using Theorem \ref{rand theorem} and noting that $m = n$ in this case, we obtain a spanning tree whose cost is a 2-approximation from the flow relaxation in expectation. This matches the gap between the optimal spanning tree and the flow relaxation.

In light of this discussion, it is imperative to note that our approximation factor is with respect to the \emph{flow relaxation}, which is a loose lower bound as discussed previously. For planar graphs in particular, the gap of optimal from the flow relaxation can be  $\Omega(n)$ (e.g., when the maximum degree is linear), thus in some sense {\sc RIDe} is optimal {up to} a constant factor for the planar case. However, this does not preclude the possibility of obtaining a better lower bound and approximation using electrical flows and this remains an open question. To strengthen the lower bound and consequently obtain better approximation factors, we proceed by exploiting the combinatorial structure in certain graphs as well as demand scenarios.

\section{New Lower Bounds}\label{sec:n_and_logn}

As discussed in Section \ref{sec:flow_relax}, relaxing the spanning tree constraint can lead to a {weak} lower bound, mainly because we allow the demand of a single node $v$ to be delivered via multiple paths from root $r$ to node $v$. \sg{A natural question at this point is if we can strengthen the flow relaxation by exploiting the tree constraints.} We first answer this question by accounting for the minimum loss each node $v$ creates to get connected to the root, in the absence of all other nodes, and show how to use this lower bound to achieve an $n$-approximation algorithm. Next, we consider cuts in a graph to lower bound the energy of an optimal tree by considering the demand it separates. We show that by constructing a laminar family of cuts, one can derive a new lower bound by accounting for the loss of a spanning tree which is balanced over all these cuts. We then use this lower bound to get a $\sqrt{n}$-approximation for grid graphs.
\subsection{Shortest-path Trees}

The major source of hardness in \eqref{eq:p1} is the quadratic loss function, which introduces a cross-term for any two nodes that share an edge on their path to the root. If the loss was a linear function of the flow, 
and in the absence of
cross-terms, the problem would decompose into $n$ disjoint problems, which {could} be solved via shortest path trees. 
Note that, however, we can relate the quadratic loss to the linear case as follows:
\begin{equation}\label{eq:CS}
    \sum_{i\in \text{succ}(e)} d_i^2\leq \left( \sum_{i\in \text{succ}(e)} d_i\right)^2\leq n\times \sum_{i\in \text{succ}(e)} d_i^2,
\end{equation}
where the first inequality is by the non-negativity assumption, and the second one is due to the Cauchy-Schwarz inequality. Now looking at $d_i^2$ as the new demand of each node $i$, our quadratic loss lies between the two linear objectives, for which we have to solve the following optimization problem: $\min_{T\in \mathcal{T}} \sum_{e\in T} \left[r_e\times \sum_{i\in \text{succ}(e)} d_i^2\right]$. It is easy to show that the shortest-path tree (SPT) rooted at node $r$ (with respect to edge resistances, $r_e$) solves this optimization problem\footnote{See the problem \textit{SymT} in \cite{gupta2001provisioning} for example.}, which immediately implies the following result.
\begin{theorem}\label{thm:spt}
The shortest-path tree (with respect to resistances) rooted at $r$  is an $n$-approximation solution for problem \eqref{eq:p1}. Moreover, there exist graph instances for which the cost of a shortest-path tree (or BFS tree) is at least $\Omega(n)$ times the cost of an optimal tree.
\end{theorem}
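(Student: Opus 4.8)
The plan is to prove the $n$-approximation by combining the displayed sandwich inequality with the textbook optimality of shortest-path trees, and to establish tightness with an explicit bottleneck construction. For a tree $T$ write $Q(T) = \sum_{e\in T} r_e\left(\sum_{i\in\text{succ}(e)} d_i\right)^2$ for the objective of \eqref{eq:p1} and $L(T) = \sum_{e\in T} r_e \sum_{i\in\text{succ}(e)} d_i^2$ for its linearized counterpart, so that the displayed sandwich reads $L(T)\le Q(T)\le n\,L(T)$ for every $T$. First I would show that the shortest-path tree $T_{\mathrm{sp}}$ (with respect to resistances) minimizes $L(\cdot)$ over $\mathcal{T}$. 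Reversing the order of summation gives $L(T) = \sum_{i\in V} d_i^2 \cdot \mathrm{dist}_T(i,r)$, where $\mathrm{dist}_T(i,r)$ is the total resistance on the unique $i$--$r$ path in $T$; since a shortest-path tree simultaneously realizes the minimum root-to-node distance for every node, it minimizes each nonnegative term $d_i^2\,\mathrm{dist}_T(i,r)$ at once, hence minimizes the separable sum $L$.

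Chaining the inequalities with $T^*$ an optimal tree for \eqref{eq:p1} then yields the upper bound directly:
$$Q(T_{\mathrm{sp}}) \le n\,L(T_{\mathrm{sp}}) \le n\,L(T^*) \le n\,Q(T^*),$$
where the outer bounds are the sandwich and the middle bound is the optimality of $T_{\mathrm{sp}}$ for $L$. For the matching lower bound I would exhibit a family of unit-resistance instances (so that the shortest-path tree coincides with the BFS tree) in which the tree is forced through a single bottleneck edge. Take a root $r$ joined by an edge to a hub $h$, with $h$ joined to $k$ demand nodes $v_1,\dots,v_k$, each with $d_{v_i}=1$, so every $v_i$ sits two hops from $r$ through $h$; additionally give each $v_i$ a private three-hop detour $r$--$p_i$--$q_i$--$v_i$ through fresh zero-demand nodes. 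Since the two-hop route is strictly shorter, the shortest-path/BFS tree must route all $k$ units across the single edge $(r,h)$, incurring cost $\Theta(k^2)$. A competing tree that instead uses the detour paths carries only one unit on each of its edges, for total cost $O(k)$; as $n=\Theta(k)$, the ratio is $\Omega(n)$, which proves the claim for both tree types at once.

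The upper bound is essentially the sandwich inequality plus the standard optimality of shortest-path trees, so the only point requiring care there is justifying that minimizing the separable linear objective is achieved by a \emph{single} shortest-path tree (via the summation-reversal and the simultaneous-distance property). The crux of the argument is the lower-bound construction: I must ensure the bottleneck route is \emph{strictly} shorter than every alternative, so that the shortest-path/BFS tree is genuinely forced onto it with no favorable tie-break, while simultaneously guaranteeing that the cheap detour tree is a valid spanning tree carrying unit flow on each edge. Verifying these two cost computations, and checking that the $\Omega(n)$ gap survives the choice of unit resistances that collapses the shortest-path and BFS cases together, is where I expect the main obstacle to lie.
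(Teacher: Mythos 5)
Your proof is correct, and it splits naturally into a part that matches the paper and a part that does not. The upper bound is essentially the paper's own argument: the same sandwich $L(T)\le Q(T)\le n\,L(T)$ (non-negativity of demands for the left inequality, Cauchy--Schwarz for the right) combined with the optimality of the shortest-path tree for the linearized objective, followed by the same three-step chain; the only difference is that the paper cites a reference for the shortest-path-tree optimality, whereas you prove it directly via the summation reversal $L(T)=\sum_i d_i^2\,\mathrm{dist}_T(i,r)$ and the fact that a shortest-path tree realizes all root distances simultaneously, which makes your write-up self-contained. Where you genuinely diverge is the lower bound. The paper builds a chain of $n$ parallel triplets plus a terminal node, all with unit demand, and compares exact costs ($n^2+8n+1$ for the BFS tree versus $14n-8$ for a reconfigured tree). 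You instead use a hub-and-detours gadget: $k$ unit-demand nodes sitting two hops from the root behind a single hub $h$, each with a private three-hop zero-demand detour. Both constructions are valid and give $\Omega(n)$; yours is easier to verify, since the strictly shorter two-hop route forces every shortest-path/BFS tree (no tie-breaking issues) to push all $k$ units across the bottleneck edge $(r,h)$ for cost $\Theta(k^2)$, while the detour tree is a spanning tree of cost $O(k)$, with no exact constants needed. What the paper's example buys in exchange is that its gap is exhibited with \emph{uniform} demands ($d_i=1$ for every non-root node), the setting the paper highlights as NP-hard, whereas your gadget relies essentially on zero-demand detour nodes (which is allowed, since the formulation only requires $d_i\ge 0$, and the paper itself uses zero demands in its flow-relaxation examples). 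One small point of hygiene: you never fix the demand of the hub $h$; either choice (zero or one) preserves the $\Omega(n)$ ratio, but it should be stated explicitly.
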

\begin{proof}
The approximation factor follows from summing up equation \eqref{eq:CS} across all edges, and the fact that the shortest-path tree simultaneously minimizes both the costs in the left and right-hand side of the resulting inequality. For the lower bound, consider the graph shown in Figure~\ref{fig:BFS}. There are $n$ triplets of nodes in parallel, node $r$ as the root, and a final node labeled $3n + 1$. All nodes (except the root) have demand of $d_i=1$, and all resistances are equal. On the left, we have the shortest-path tree (BFS tree) whose cost 
can be calculated as:
$APX = n\times (1^2 +2^2)+(n+1)^2 +n\times 1^2= n^2 +8n + 1$.
On the right, we have the optimal tree which does not change the first $3$ triplets, but re-configures the rest as shown. The cost of this tree is 
$OPT =(n-3)\times (1^2 +2^2 +3^2)+3\times (1^2 +2^2)+4^2 +3\times 1^2 =14n-8$.
Comparing the two costs proves a lower bound of $\Omega(n)$ on the performance of SPTs.
\begin{figure}[t]
\centering
\begin{minipage}{.62\textwidth}
  \centering
  \includegraphics[width=0.96\linewidth]{fig2.pdf}
  \captionof{figure}{Lower-bound example on the performance of the shortest-path tree: (Left) shortest-path tree, versus (Right) optimal spanning tree.}
  \label{fig:BFS}
\end{minipage}%
\hspace{7pt}
\begin{minipage}{.32\textwidth}
  \centering
  \includegraphics[width=.9\linewidth]{fig3.pdf}
  \captionof{figure}{$n\times n$ grid with root at the top-left.}
  \label{fig:notation}
\end{minipage}
\end{figure}
\end{proof}


\subsection{Cut-based Lower Bounds}
{For the rest of this section, we assume that all the edges of the graph have the same resistances, and without loss of generality we set $r_e=1$ for all $e\in E$, and consider approximations using sets of cuts.} 


\begin{theorem}\label{thm:general_cuts}
Consider a graph $G=(V,E)$ with root $r\in V$, and $r_e=1, \forall e\in E$. Assume that $G$ has a family of cuts $S_1, S_2,..., S_\ell\subset V$ (with $r \in S_i, \forall i$) s.t.:
\begin{itemize}
    \item[(a)] each edge $e\in E$ appears in at most $M$ cuts, i.e., $\abs{\{i:e\in \delta(S_i)\}}\leq M$ for all $e\in E$,
    \item[(b)] the union of the cuts $\cup_{i} \delta(S_i)$ supports a spanning arborescence\footnote{An $r$-arborescence is a directed spanning tree such that for any vertex $v$, there is exactly one directed path from $r$ to $v$ (see \cite{schrijver2003combinatorial} for more details).} $A$ rooted at $r$ such that (directed) edge $(u,v)\in A$ only if $u\in S_i, v\notin S_i$ for all $i\in[\ell]$ such that $e\in \delta(S_i)$.
\end{itemize}
 Then, the approximation factor of the arborescence $A$ is at most $M\times \max_{i} |\delta(S_i)|$.
\end{theorem}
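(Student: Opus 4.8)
The plan is to sandwich the auxiliary quantity $\sum_i D_i^2$, where $D_i := \sum_{v\notin S_i} d_v$ denotes the total demand separated by cut $S_i$, between a lower multiple of the optimal energy and an upper multiple of the energy of $A$. Concretely, I would prove $\E(f_A)\le \sum_i D_i^2$ and $\sum_i D_i^2 \le M\cdot \max_i \abs{\delta(S_i)}\cdot \E(f_{T^*})$, where $T^*$ is the optimal spanning tree; chaining these gives the claimed factor. Throughout, recall $r_e=1$, so $\E(f)=\sum_e f_e^2$, and all flows are nonnegative on any arborescence since the demands are nonnegative.

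First I would establish the lower bound on $\E(f_{T^*})$. Fix any spanning tree $T$. Since $r\in S_i$, every unit of demand of a node outside $S_i$ must be routed across the cut from the root side, so the net flow across $\delta(S_i)$ has magnitude $D_i$, whence $\sum_{e\in T\cap\delta(S_i)}\abs{f_T(e)}\ge D_i$. Writing $k_i=\abs{T\cap\delta(S_i)}\le \abs{\delta(S_i)}$ and applying Cauchy--Schwarz to these $k_i$ crossing flows gives $\sum_{e\in T\cap\delta(S_i)} f_T(e)^2 \ge D_i^2/k_i \ge D_i^2/\max_j\abs{\delta(S_j)}$. Summing over $i$ and exchanging the order of summation, condition (a) ensures each edge of $T$ is counted at most $M$ times, so the left-hand side is at most $M\,\E(f_T)$. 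This yields $\E(f_{T^*}) \ge \tfrac{1}{M\,\max_j\abs{\delta(S_j)}}\sum_i D_i^2$.

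Next I would bound $\E(f_A)$ from above, which is where condition (b) does the real work. Because no edge of $A$ crosses any $S_i$ inward, the unique $r$--$w$ path in $A$ crosses $\delta(S_i)$ exactly once for each $w\notin S_i$; hence the successor sets of the edges $A_i:=A\cap\delta(S_i)$ partition $V\setminus S_i$, and $\sum_{e\in A_i} f_A(e)=\sum_{e\in A_i}\sum_{w\in \text{succ}(e)} d_w = D_i$. Since these flows are nonnegative, $\sum_{e\in A_i} f_A(e)^2 \le \bigl(\sum_{e\in A_i} f_A(e)\bigr)^2 = D_i^2$. As $A$ is supported on $\cup_i\delta(S_i)$, every edge of $A$ lies in at least one cut, so summing over $i$ overcounts each edge at least once: $\E(f_A)=\sum_{e\in A} f_A(e)^2 \le \sum_i\sum_{e\in A_i} f_A(e)^2 \le \sum_i D_i^2$.

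Combining the two displays gives $\E(f_A)\le \sum_i D_i^2 \le M\,\max_i\abs{\delta(S_i)}\,\E(f_{T^*})$, exactly the stated approximation factor. I expect the main obstacle to be the partition argument in the upper bound: one must verify carefully that condition (b) forbids any arborescence edge from re-entering a cut set, so that the crossing flows of $A$ sum to $D_i$ rather than merely having \emph{net} value $D_i$ --- this is precisely what upgrades the generic inequality $\sum f^2 \le (\sum\abs{f})^2$ into the clean bound $D_i^2$. The lower-bound direction is comparatively routine, the only care being the correct sign/orientation of the net cut flow for an arbitrary optimal tree.
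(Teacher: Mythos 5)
Your proof is correct and follows essentially the same route as the paper's: condition (b) makes the arborescence's cut-crossing flows nonnegative and sum to the separated demand $D_i$ (so the cut cost of $A$ is at most $D_i^2$), while balancing/Cauchy--Schwarz lower-bounds the optimal tree's cut cost by $D_i^2/\max_j\abs{\delta(S_j)}$, and condition (a) controls the overcounting by $M$. The only difference is organizational --- you chain the two bounds through the explicit intermediate quantity $\sum_i D_i^2$ globally, whereas the paper compares $A$ and $OPT$ cut by cut and then sums --- which does not change the substance of the argument.
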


\begin{proof}
Let $K=\max_i \abs{\delta(S_i)}$ be the size of the biggest cut. We can use assumption (b) of the theorem statement, to map any edge $(u,v)$ of the arborescence to a cut $S_i$ (if there is more than one cut, we can pick one arbitrarily). In this way, we split the cost of arborescence $A$ among different cuts, while ensuring that the edges are carrying flows directed out of the cuts. Considering any of these cuts, we show that the costs of the trees $A$ and $OPT$ restricted to that cut are within a factor $K$, i.e., $\sum_{e \in \delta(S_i)} \mathcal{E}_{A}(e) \leq K \sum_{e \in \delta(S_i)} \mathcal{E}_{OPT}(e)$ for all $i$.

Let $S$ be one of the cuts with $r\in S$, and let $k=\abs{\delta (S)}$ be the size of the cut. Let $a_1,...,a_{k}$ be the flow on the edges crossing the cut {(calculated based on tree $A$)}, where $a_i>0$ if the flow is going out of the cut and $a_i<0$ if it is flowing inwards. With this choice of directions, the arborescence $A$ has only non-negative $a_i$ values. We also know that the total flow going across this cut is equal to the total demand below the cut, i.e., $\sum_{i=1}^{k} a_i=\sum_{v\notin S}d_v$.

Let $b_1,...,b_{k}$ be the flow of the edges crossing this cut in the optimal tree, where some of these variables may be zero if the optimal tree is not using that edge, or even negative if they are being used in the opposite direction. However, we still know that $\sum_{i=1}^{k}b_i=\sum_{v\notin S}d_v$, and also the cost of this cut in the optimal tree is $\sum_{i=1}^{k} b_i^2$. This cost is minimized when all the $b_i$'s are equal ($b_i=\frac{1}{k}\sum_{v\notin S}d_v$ gives a lower bound on the cost even if it is not attainable by any spanning tree.) Therefore we get:
\begin{equation}\label{eq:cut_size_argument}\small
    \sum_{e \in \delta(S)} \mathcal{E}_{OPT}(e)=\sum_{i=1}^{k} b_i^2\geq k\left(\frac{1}{k}\sum_{v\notin S}d_v\right)^2=\frac{1}{k}\left(\sum_{i=1}^k a_i\right)^2\geq \frac{1}{K}\sum_{i=1}^{k} a_i^2=\frac{1}{K}\sum_{e \in \delta(S)} \mathcal{E}_{A}(e),
\end{equation}
where in the last inequality we dropped the (non-negative) cross-terms $a_ia_j$, and used the fact that the cut size $k\leq K$. If the cuts were disjoint, the $K$-approximation would directly extend to the entire objective function as well, as the cuts would divide the objective into separate additive objectives. 
However, we may double count what the optimal tree is paying since the cuts are not disjoint, but we know that each edge will be counted at most $M$ times. This gives the approximation ratio of $M\times K$ in total.
\end{proof}


We use Theorem \ref{thm:general_cuts} for an $n\times m$ grid {(where we let the number of nodes be $N = mn$)}, and consider diagonal cuts as shown in Figure~\ref{fig:notation}. Note that these cuts partition the edges of the grid, and their size is less than $2\sqrt{N}$. Any spanning tree that has edges only going to the right or the bottom, satisfies the second requirement of Theorem~\ref{thm:general_cuts}, and thus has cost at most $2\sqrt{N}$ times the optimal spanning tree.

\begin{corollary}\label{thm:rootn}
There exists an $\bigO(\sqrt{N})$-approximation algorithm for minimizing the loss on an $n\times m$ grid with $N$ nodes, when all the edges have the same resistance and the root is located at the corner of the grid.
\end{corollary}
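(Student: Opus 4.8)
The plan is to instantiate Theorem~\ref{thm:general_cuts} with the family of diagonal (anti-diagonal) cuts depicted in Figure~\ref{fig:notation}. Place the grid in coordinates so that each node is a pair $(x,y)$ with $x\in\{0,\dots,m-1\}$, $y\in\{0,\dots,n-1\}$, and the root sits at the corner $r=(0,0)$. I would define the cuts by level sets of the $\ell_1$-distance to the root: for $k=1,\dots,m+n-2$ let $S_k=\{(x,y): x+y<k\}$, so that $\delta(S_k)$ is exactly the set of edges joining a node on the anti-diagonal $x+y=k-1$ to a node on $x+y=k$. Since $r$ has $x+y=0<k$ for every $k$, we have $r\in S_k$ for all $k$, as required.

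Next I would verify the two hypotheses. For (a): every grid edge is either a ``right'' edge $(x,y)\to(x+1,y)$ or a ``down'' edge $(x,y)\to(x,y+1)$, and each increases the quantity $x+y$ by exactly one; hence each edge crosses exactly one anti-diagonal and therefore lies in exactly one cut $\delta(S_k)$. This gives $M=1$. For (b), since the cuts partition all edges, $\cup_k\delta(S_k)$ is the entire edge set and trivially supports a spanning arborescence; the content is in the orientation condition. I would take the ``comb'' arborescence $A$ that goes right along the top row and then down each column, which uses only right and down edges and reaches every node. For any directed edge $(u,v)\in A$ with $u$ on anti-diagonal $k-1$ and $v$ on anti-diagonal $k$, the unique cut containing it is $\delta(S_k)$, and indeed $u\in S_k$ (as $x_u+y_u=k-1<k$) while $v\notin S_k$ (as $x_v+y_v=k$); so the edge is oriented out of the cut, satisfying (b).

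Finally I would bound the cut sizes. An anti-diagonal $x+y=k-1$ contains at most $\min(m,n)$ nodes, and each such node contributes at most two edges (right and down) crossing into level $k$, so $|\delta(S_k)|\le 2\min(m,n)\le 2\sqrt{mn}=2\sqrt{N}$, giving $K:=\max_k|\delta(S_k)|\le 2\sqrt{N}$. Theorem~\ref{thm:general_cuts} then yields an approximation factor of $M\times K\le 2\sqrt{N}=\bigO(\sqrt{N})$ for the arborescence $A$. The only real care needed is the bookkeeping that the diagonal cuts genuinely partition the edges and that the chosen arborescence is consistent with the induced orientation of every cut it meets; the size bound is a routine counting argument, and the rectangular (non-square) case is handled by the inequality $\min(m,n)\le\sqrt{mn}$.
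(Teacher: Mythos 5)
Your proof is correct and takes essentially the same approach as the paper: both instantiate Theorem~\ref{thm:general_cuts} with the anti-diagonal cuts of Figure~\ref{fig:notation}, note that these cuts partition the edges (so $M=1$) and have size at most $2\min(m,n)\le 2\sqrt{mn}=2\sqrt{N}$, and use a spanning arborescence consisting only of right/down edges to satisfy condition (b). The only difference is that you make explicit a particular ``comb'' arborescence and the coordinate bookkeeping, which the paper leaves implicit.
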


Constructing the above described set of cuts for general graphs remains an open question; planar graphs would be a natural candidate. By the planar separator theorem {\cite{lipton1979separator}}, we know that any $N$-node planar graph has a vertex separator of size $\bigO(\sqrt{N})$ that splits the graph into two (almost) equal parts. However, it is not clear how to find the desired family of cuts by using the planar separator oracle. This is due to the second requirement of the cuts in Theorem~\ref{thm:general_cuts} which fixes a natural direction on any edge once it appears in a cut. This direction should be respected in future cuts that include this edge; however, the separator oracle is oblivious to edge directions. See Appendix~\ref{app:planar} for a discussion of the challenges of generalizing these results to planar graphs.

\section{A 2-Approximation for Uniform Grid}\label{sec:2apx}
We now propose a constant-factor approximation algorithm for an $n\times m$ uniform grid ($n\leq m$) with the root at a corner of the grid (see Figure \ref{fig:notation}), in which all demands are equal ($d_i=1$ for all $i \in V\backslash \{r\}$) and all resistances are equal ($r_e=1$ for all $e\in E$). The key idea of considering this case is to help us better understand the structure of optimal solutions through combinatorial techniques, \sg{that can be generalized for real-life distribution networks (Figure \ref{fig:realnetwork}). Even though the demands and resistances are uniform in the $n\times n$ grid,} it is non-trivial to connect the loads together in a way that avoids big flow values close to the root. For example, Figure~\ref{fig:notation} demonstrates an example tree on an $n \times n$ grid, that satisfies the properties of both Theorem \ref{thm:spt} and Theorem \ref{thm:general_cuts} solutions, and yet fails to provide a constant-factor approximation. The horizontal edges in this tree have flows of $n,2n,...,(n-1)n$, and therefore the cost of the tree is in the order of {$\sum_{i=1}^{n-1} (i\times n)^2=\Theta(n^5)$ (vertical edges have a total cost of $\bigO(n^4)$). On the other hand, the optimal cost in this grid is $\Theta(n^4\log n)$, as we will show an $\Omega(n^4\log n)$ lower bound in Lemma~\ref{lemma:n4logn} and prove a constant-factor approximation (with respect to the same lower bound) in the rest of this section.}
For the sake of brevity, we will  present our novel {\sc Min-Min} algorithm for a square $n\times n$ grid, while the results hold for rectangular grids as well and more general demands.

\begin{figure}[t]
    \centering
  \includegraphics[width = 0.4\textwidth]{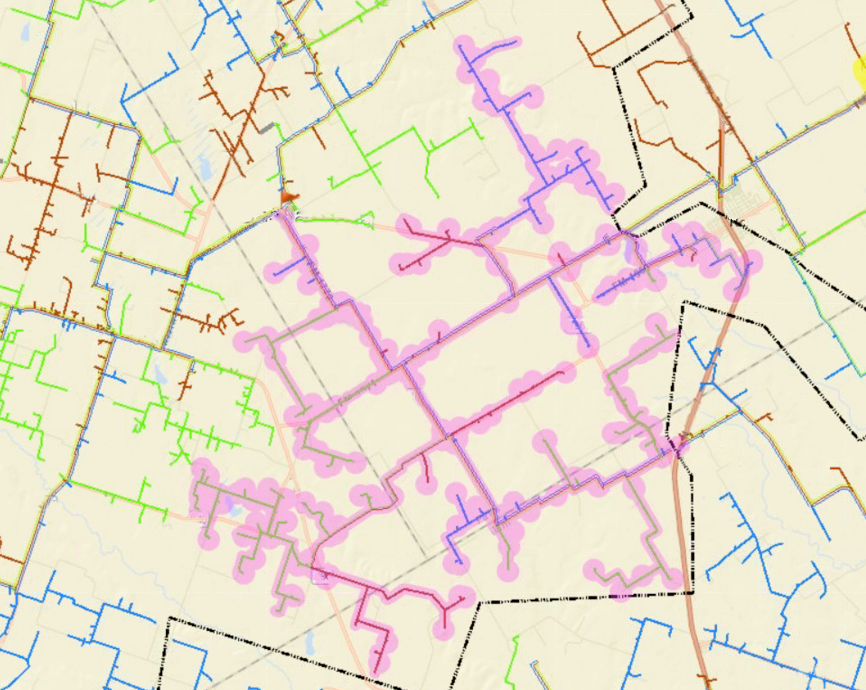}
    \includegraphics[scale = 0.9]{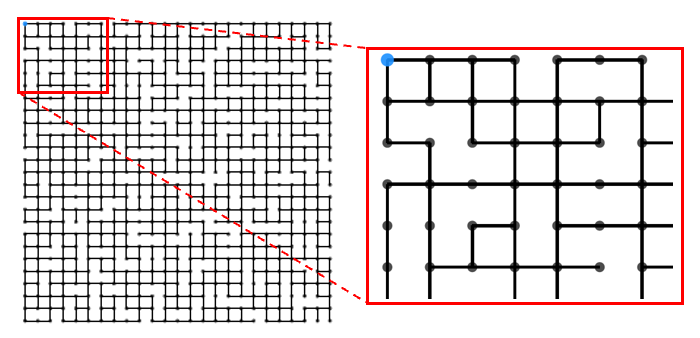}
\caption{Left: Snapshot of an anonymized real distribution network from a utility company in the US. These networks typically look like sparse subgraphs of grids, thus motivating our exploration of approximations in Section~\ref{sec:2apx} and computational experiments in Section \ref{sec:lm}. Right: An abstraction of a real-life distribution network through an instance of a sparsified $25\times25$ grid that was used in the computations, where the sparsification probability is $p = 0.2$ and the root is in the top left corner.}
\label{fig:realnetwork}
\end{figure}

\vspace{10pt}
\noindent {\bf Notation:} We let $n{\times n}$ be the size of the grid \sg{and the} total number of nodes be $N=n^2$. We consider the diagonal cuts as shown in Figure~\ref{fig:notation} and we name them $(u,i)$, $i=1,...,n-1$ for the upper triangle, and $(\ell,i)$ for the lower triangle. Note that the diagonal cuts cover all the edges and each edge appears in only one cut. Therefore, we can divide the cost of any spanning tree (either optimal or approximate tree) into the costs from each cut. Let $OPT^{(u,i)}$ and $Alg^{(u,i)}$ denote the cost of edges that cross cut $(u,i)$ in the optimal and approximate solution, respectively. Similarly we define $OPT^{(\ell,i)}$ and $Alg^{(\ell,i)}$ for the lower triangle. Finally, let $OPT^u=\sum_{i=1}^{n-1} OPT^{(u,i)}$ and $OPT^\ell=\sum_{i=1}^{n-1} OPT^{(\ell,i)}$. Then, we have $OPT=OPT^u+OPT^\ell$. Similarly, we define $Alg^u$ and $Alg^\ell$.

In Section~\ref{sec:lb}, we will use these diagonal cuts to find a lower bound for the cost of any spanning tree, explain our \minmin\ algorithm in Section~\ref{sec:minmin} followed by its performance guarantee in Section~\ref{sec:apx_analysis}.

\subsection{Lower bounds}\label{sec:lb}
Let $S_{u,i}$ and $S_{\ell,i}$ be the number of nodes below cuts $(u,i)$ and $(\ell,i)$, respectively. Then, 
$S_{u,i}=n^2-\sum_{j=1}^i j=n^2-\frac{i(i+1)}{2},$ and $ S_{\ell,i}=\sum_{j=1}^i j=\frac{i(i+1)}{2}$.

\vspace{10pt}
\noindent {\bf Upper triangle cuts:} By a quick look at the structure of the grid, we can observe that there are $2i$ edges that cross the cut $(u,i)$. These edges are connected to $i$ nodes on the root side of the cut, and $i+1$ nodes on the other side. Call these nodes $u_1,...,u_i$ on the root side and $v_1,...,v_{i+1}$ below the cut. We call an edge $(u_j,v_k)$ of the tree \emph{outgoing}, if $u_j$ is the parent of $v_k$ in the tree. We claim that the tree can have at most $i+1$ outgoing edges over this cut, although it can have all the $2i$ edges. This is because if we have more than $i+1$ outgoing edges, then by the pigeonhole principle, a node $v_k$ will have two parents and this creates a loop in the tree. 

In addition, all the $S_{u,i}$ nodes below the cut are connected to the root through (at least) one of these outgoing edges over this cut. This is true because we can traverse the path from the root to that node, and at some point we must cross the cut through an outgoing edge. It is possible to have multiple outgoing edges on that path if we cross the same cut multiple times, but all we need is that each node below the cut is counted as a successor for at least one of the outgoing edges. So the aggregate number of successors for the outgoing edges of cut $(u,i)$ is at least $S_{u,i}$, while there are at most $i+1$ such edges. Recall that when the summation of a number of variables is fixed, their sum of squares is minimized when all of them are equal. Hence, in the most balanced way any tree (including the optimal tree) has to pay the following cost over this cut:
\begin{equation}\label{eq:lb_uppercuts}
OPT^{(u,i)}\geq (i+1)\times \left(\frac{S_{u,i}}{i+1}\right)^2=\frac{S_{u,i}^2}{i+1}.
\end{equation}

{By summing \eqref{eq:lb_uppercuts} over different cuts, we get the following lower bound on the energy of any spanning tree over the entire upper triangle.} 
\begin{lemma}\label{lemma:n4logn}
The cost of the optimal tree over the upper triangle part of the grid is lower bounded by:
{$OPT^u \in  \Omega(n^4\log{n})$.}
\end{lemma}
\begin{proof}
To obtain a lower bound for $OPT^u$, we just plug the value of $S_{u,i}$ in \eqref{eq:lb_uppercuts} and sum over $i$:
\begin{align*}
OPT^u&\geq \sum_{i=1}^{n-1}\frac{S_{u,i}^2}{i+1}=\sum_{i=1}^{n-1}\frac{\left(n^2-i(i+1)/2\right)^2}{i+1}\nonumber\\
&= \sum_{i=1}^{n-1}\frac{n^4}{i+1}+\frac{1}{4}\sum_{i=1}^{n-1}i^2(i+1)-\sum_{i=1}^{n-1}n^2i\nonumber\\
&\geq n^4\big(\ln (n+1)-1\big)+\frac{(n-1)^2n^2}{16}+\frac{(n-1)n(2n-1)}{24}-\frac{n^3(n-1)}{2}\nonumber\\
&=n^4\ln (n+1)-\frac{23}{16}n^4+\frac{11}{24}n^3-\frac{1}{16}n^2+\frac{1}{24}n. \hspace{90 pt}
\end{align*}
\end{proof}
\vspace{10pt}
\noindent {\bf Lower triangle cuts:} The argument here is exactly the same as in the upper triangle except that there are $i+1$ nodes on the root side and $i$ nodes on the other side. Therefore, in the most balanced case when all the outgoing edges carry the same flow, we have $i$ outgoing edges with flows $S_{\ell,i}/i=(i+1)/2$, paying the total cost of:
\begin{equation}\label{eq:lb_lowercuts}
OPT^{(\ell,i)}\geq i\times \left(\frac{S_{\ell,i}}{i}\right)^2=\frac{i(i+1)^2}{4}.
\end{equation}

\subsection{{\sc \minmin } algorithm}\label{sec:minmin}
The \minmin\ algorithm builds a spanning tree which contains $n$ disjoint paths with different lengths over the lower triangle; see the blue paths in Figure~\ref{fig:minmin} (right). Then, in each cut of the upper triangle, exactly one pair of subtrees merge together. As the name suggests, we merge the two subtrees with the minimum number of successors in each step, and call it the \emph{merging step}. However, this requires those two subtrees to be next to each other. 
\begin{figure}[t]
\centering
\includegraphics[width=\linewidth]{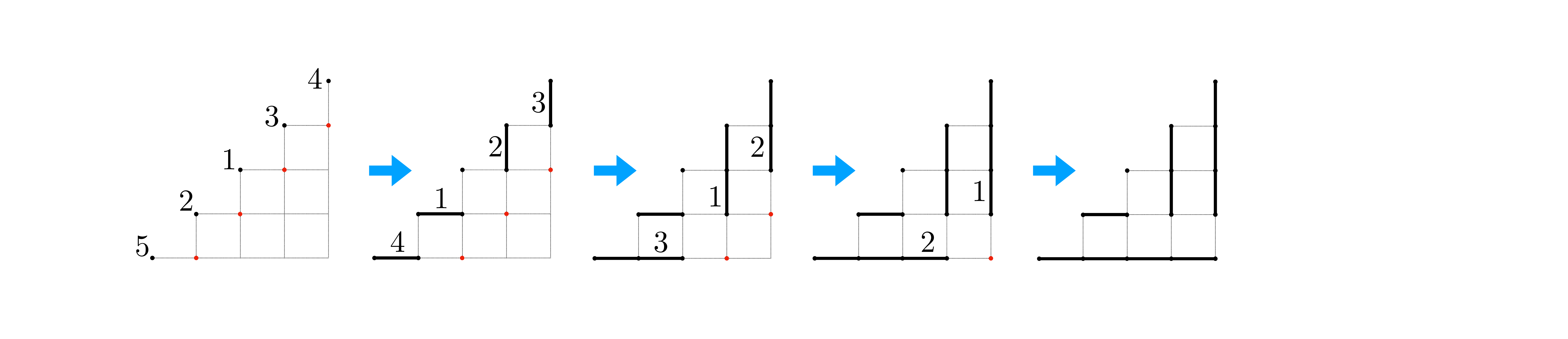}
\caption{Example of Lemma \ref{lemma:permutation}. In each step we decrease the number of successors desired by one and put them on the next diagonal (red nodes), except 1 which is already satisfied.}
\label{fig:permutation}
\end{figure}
Therefore, we need to order our disjoint paths in the lower triangle in a way that allows merging minimum load subtrees in the upper triangle; we call this the \emph{uncrossing step}. In the following lemma, we show that the number of successors of edges on the main diagonal can be any permutation of the numbers $1,2,...,n$.

\begin{lemma}[Disjoint paths]\label{lemma:permutation}
We can obtain a disjoint path decomposition of the lower triangle of the grid ({i.e., a set of $n$ paths from the $n$ points on the diagonal that are shortest between the end-points and cover all the vertices in the lower triangle}) for any ordering of numbers $1,2,...,n$, specifying the number of successors of edges on the left diagonal of the grid. 
\end{lemma}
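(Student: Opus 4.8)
The plan is to prove this by induction on $n$, realizing the prescribed permutation one anti-diagonal at a time. View the lower triangle as a nest of anti-diagonals of sizes $n, n-1, \ldots, 1$, where the outermost one is the left (main) diagonal carrying the $n$ edges that cross into the upper triangle. Write the target as a permutation $\pi=(\pi_1,\ldots,\pi_n)$ of $\{1,\ldots,n\}$, where $\pi_j$ is the desired number of successors of the $j$-th edge on the left diagonal, i.e.\ the number of nodes on the path anchored at the $j$-th diagonal node $D_j$ within the lower triangle. Since $\sum_j \pi_j = n(n+1)/2$ equals the number of nodes in the lower triangle, the $n$ paths must tile it exactly, which is the consistency check that makes the construction possible.

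For the inductive step I would locate the index $k$ with $\pi_k=1$: the path at $D_k$ is the single node $D_k$, which is already satisfied and needs no extension (this is the node labeled ``$1$'' in Figure~\ref{fig:permutation}). Every other path must grow, so I route each surviving diagonal node one monotone step onto the first inner anti-diagonal, which has exactly $n-1$ nodes. Using that each inner node $D'_i$ is grid-adjacent to precisely the two outer nodes $D_i, D_{i+1}$ flanking it, I assign $D_j \mapsto D'_j$ for $j<k$ and $D_j \mapsto D'_{j-1}$ for $j>k$. This is a bijection from the $n-1$ surviving outer nodes onto all $n-1$ inner nodes (the freed node $D'_k$ is picked up by $D_{k+1}$), and because the assignment is order-preserving, the chosen first edges are pairwise non-crossing, so the partial paths remain vertex-disjoint.

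After this single layer of extensions, the residual task is to decompose the inner triangle (anti-diagonals $n-1,\ldots,1$) into $n-1$ disjoint monotone paths whose lengths, read along the inner diagonal, form $\pi'$ with $\pi'_i=\pi_i-1$ for $i<k$ and $\pi'_i=\pi_{i+1}-1$ for $i\ge k$. Since $\{\pi_j-1 : j\ne k\}=\{1,\ldots,n-1\}$, the sequence $\pi'$ is itself an arbitrary permutation of $\{1,\ldots,n-1\}$, so the induction hypothesis applies verbatim and completes the construction; the base case $n=1$ is the single corner node. Every path only ever moves inward across successive anti-diagonals, hence is monotone and thus a shortest path between its endpoints, yielding a genuine shortest-path decomposition.

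The step I expect to be the main obstacle is the geometric bookkeeping in the inductive step, rather than the arithmetic (which is immediate). Concretely, I must verify that the order-preserving assignment of surviving outer nodes to inner nodes always lands on a grid-adjacent node, that it produces non-crossing vertex-disjoint initial segments no matter where the removed index $k$ sits, and that removing the length-one anchor never strands an inner node. Getting the flanking-adjacency and planarity of this ``uncrossing'' cleanly stated for every position of $k$ is the delicate part; once that is in place, the reduction to the inner triangle with an arbitrary permutation of $\{1,\ldots,n-1\}$ is forced.
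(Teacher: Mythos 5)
Your proof is correct and follows essentially the same route as the paper's: peel off the outermost anti-diagonal, leave the length-one path fixed, extend the remaining paths to the next anti-diagonal in an order-preserving way, and recurse on the resulting permutation of $\{1,\ldots,n-1\}$. The paper states this recursion in three sentences ("connect the rest of the nodes to the next diagonal in the same order"); your version simply makes the flanking-adjacency and non-crossing bookkeeping explicit, which is a faithful elaboration rather than a different argument.
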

\begin{proof}
We give a recursive construction which also proves the existence of such paths. Let $(a_1,a_2,...,a_n)$ be a permutation of $(1,2,...,n)$. Put these numbers on the main diagonal. Except $a_i=1$ which is already satisfied, connect the rest of the nodes to the nodes of {the} next diagonal, which has $n-1$ nodes, in the same order. Now we have to construct a permutation of $(1,2,...,n-1)$ on this new diagonal, because the previous numbers should be decreased by one. We can repeat the process. An example is performed in Figure~\ref{fig:permutation}.
\end{proof}

\begin{figure}[t]
\centering
\begin{minipage}{.25\textwidth}
  \centering
  \includegraphics[width=.9\linewidth]{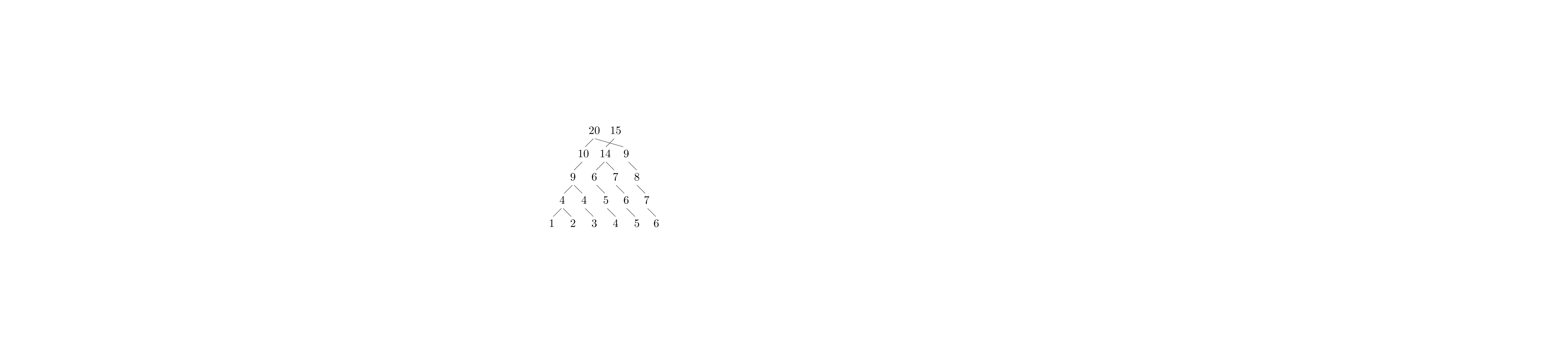}
\end{minipage}%
\begin{minipage}{.25\textwidth}
  \centering
  \includegraphics[width=.9\linewidth]{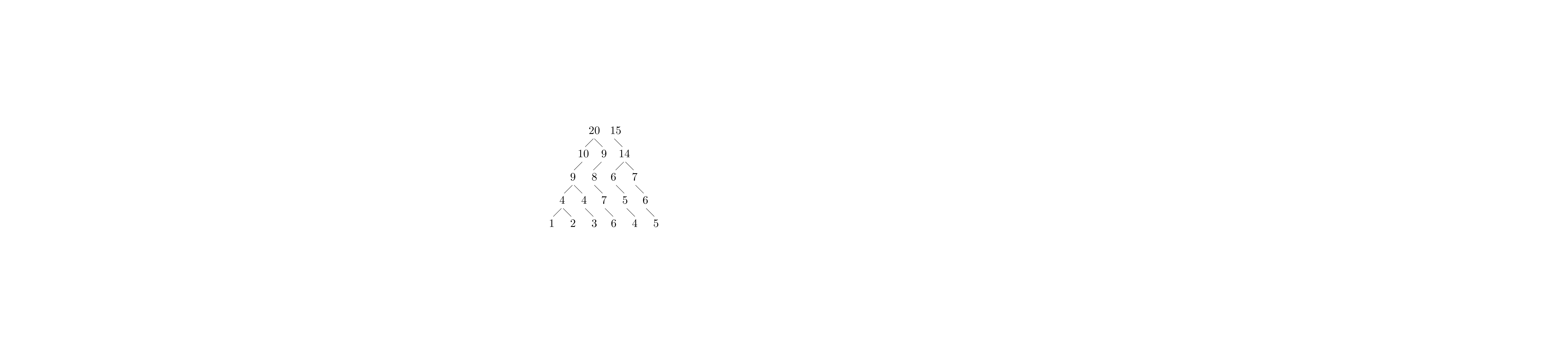}
\end{minipage}%
\begin{minipage}{.3\textwidth}
  \centering
  \includegraphics[width=.9\linewidth]{fig6c.pdf}
\end{minipage}
\caption{Example of the \minmin\ algorithm. {\bf Left}: Merging the two smallest numbers in each layer, starting from the $1,2,...,n$ sequence. {\bf Middle}: Same tree re-ordered from top to bottom to avoid crossings. {\bf Right}: The corresponding grid where the lower triangle is constructed by Lemma~\ref{lemma:permutation}, and the numbers in the upper triangle are merged in each diagonal layer according to the middle tree.}
\label{fig:minmin}
\end{figure}

Note that Lemma~\ref{lemma:permutation} ensures the adjacency of minimum subtrees in all upper triangle cuts. To get the right permutation, we can start from any permutation (say $1,2,...,n$) and do the \minmin\ merging as shown in Figure~\ref{fig:minmin}~(left). Then we can do the uncrossing from top to bottom as shown in Figure~\ref{fig:minmin}~(middle) and this gives the desired permutation on the main diagonal. Finally, we can construct the lower part of the spanning tree corresponding to that permutation using Lemma~\ref{lemma:permutation}, and construct the upper part of the spanning tree by merging minimum size subtrees in each layer. An example of the algorithm is shown in Figure~\ref{fig:minmin}~(right). The formal description of the algorithm for general rectangular grids is also tabulated under Algorithm~\ref{alg:minmin}, in which for the case of a rectangular grid, we use the middle part of the grid to connect the lower triangle to the upper triangle via parallel disjoint paths.
\subsection{Approximation factor for the {\sc \minmin } algorithm}\label{sec:apx_analysis}

\begin{algorithm}[t]\label{minmin} 
\caption{\minmin\ algorithm}
\label{alg:minmin}
\begin{algorithmic}[1]
\INPUT An $n\times m$ grid ($n\leq m$).\vspace{5 pt}
\State Start with sequence $(m-n)+1,(m-n)+2,...,(m-n)+n$.
\While{there are more than two numbers}
\State Merge the two smallest numbers.
\State Add one to the entire sequence.
\EndWhile
\State {\bf Uncrossing}: Backtracking the previous step, find the right permutation of the starting sequence to put on the diagonal cut $(u,n-1)$, to ensure the adjacency of smallest two numbers in all steps.
\State {\bf Parallel paths}: Use parallel paths of length $m-n$ to connect the nodes below cut $(u,n-1)$ to the nodes above cut $(\ell,n-1)$.
\State {\bf Disjoint paths}: Subtract $m-n$ from the sequence of Step 6. Now use Lemma~\ref{lemma:permutation} on this sequence to form the disjoint paths on the lower triangle.
\State {\bf Merging}: Complete the spanning tree by merging the two smallest subtrees in every upper triangle cut.
\end{algorithmic}
\end{algorithm}

\noindent {\bf Lower triangle:} {We first show that the cost of \minmin\ over the lower triangle is at most $4/3$ of the optimal tree over the lower triangle.}
\begin{lemma}\label{lemma:Lower_triangle}
The \minmin\ algorithm costs at most $4/3$ of the optimal over the cuts in the lower triangle. In other words, $$Alg^{(\ell,i)}\leq \frac{4}{3}OPT^{(\ell,i)},\quad i=1,2,\ldots,n-1,$$
where $Alg$ refers to the output of \minmin. Moreover, this implies the same approximation factor for the entire lower triangle energy, i.e., 
$Alg^\ell\leq \frac{4}{3}OPT^\ell$.
\end{lemma}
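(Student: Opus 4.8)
The plan is to establish the inequality separately on each diagonal cut $(\ell,i)$ and then sum. Since the diagonal cuts partition the edges of the grid (as set up in the notation paragraph), we have $Alg^\ell=\sum_{i=1}^{n-1}Alg^{(\ell,i)}$ and $OPT^\ell=\sum_{i=1}^{n-1}OPT^{(\ell,i)}$, so the per-cut bounds $Alg^{(\ell,i)}\le \tfrac{4}{3}OPT^{(\ell,i)}$ add directly to give the ``moreover'' claim $Alg^\ell\le\tfrac{4}{3}OPT^\ell$. Hence the whole task reduces to the per-cut inequality.

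First I would compute $Alg^{(\ell,i)}$ exactly. By Lemma~\ref{lemma:permutation}, \minmin\ routes the lower triangle as a decomposition into disjoint shortest (monotone) paths. Consider the region $R_i$ of $S_{\ell,i}=\tfrac{i(i+1)}{2}$ nodes separated by cut $(\ell,i)$: it is built from the consecutive diagonals of sizes $i,i-1,\ldots,1$, and exactly $i$ edges cross the cut into it, feeding the $i$ nodes of its top diagonal bijectively. Each of the $i$ disjoint paths entering $R_i$ occupies a contiguous block of diagonals starting at the top (monotonicity forces each step to move one diagonal deeper), so because every successive diagonal has one fewer node than the one above it, exactly one path must terminate at each diagonal. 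Thus the segment sizes -- equivalently the flows on the $i$ crossing edges -- are forced to be a permutation of $1,2,\ldots,i$, and since all these successors lie strictly inside the lower triangle the value is unaffected by the upper-triangle merging. Therefore
\begin{equation*}
Alg^{(\ell,i)} = \sum_{j=1}^{i} j^2 = \frac{i(i+1)(2i+1)}{6}.
\end{equation*}

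Next I would combine this with the balanced lower bound \eqref{eq:lb_lowercuts}, namely $OPT^{(\ell,i)}\ge \tfrac{i(i+1)^2}{4}$, and take the ratio:
\begin{equation*}
\frac{Alg^{(\ell,i)}}{OPT^{(\ell,i)}} \le \frac{i(i+1)(2i+1)/6}{i(i+1)^2/4} = \frac{2(2i+1)}{3(i+1)} = \frac{4i+2}{3i+3} \le \frac{4}{3},
\end{equation*}
where the last inequality is simply $4i+2\le 4i+4$. This proves the per-cut bound for every $i=1,\ldots,n-1$ (the ratio is strictly below $\tfrac{4}{3}$ and approaches it only as $i\to\infty$), and summing over $i$ gives $Alg^\ell\le\tfrac{4}{3}OPT^\ell$.

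The step I expect to be the main obstacle is the exact determination of the crossing flows in the second paragraph. One must argue carefully that any disjoint monotone path cover of the triangular region $R_i$ forces the crossing-edge flows to be \emph{precisely} the multiset $\{1,2,\ldots,i\}$ rather than merely summing to $S_{\ell,i}$; the cleanest route is to count, for each depth $t$, that exactly $i-t$ paths reach the diagonal of size $i-t$ (as all its nodes are covered and the paths are disjoint), which forces exactly one path of each size $1,\ldots,i$. I would also need to confirm that the uncrossing and merging steps -- which are driven entirely by the upper triangle -- neither alter this multiset nor create additional lower-triangle flow. Once the flow multiset is pinned down, the remainder is the elementary algebra above.
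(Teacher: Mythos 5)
Your proposal is correct and takes essentially the same route as the paper's proof: the crossing edges of cut $(\ell,i)$ carry flows forming a permutation of $1,\ldots,i$, so $Alg^{(\ell,i)}=i(i+1)(2i+1)/6$, which compared against the lower bound \eqref{eq:lb_lowercuts} gives the ratio $2(2i+1)/(3(i+1))<4/3$, and the per-cut bounds sum over the partition of edges into diagonal cuts. The counting argument you flag as the main obstacle is exactly the recursive invariant built into the construction of Lemma~\ref{lemma:permutation} (each diagonal of size $k$ carries successor counts that are a permutation of $1,\ldots,k$), so the paper simply cites that construction rather than re-deriving it.
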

\begin{proof}
By the construction of Lemma~\ref{lemma:permutation}, the edges of the proposed tree over cut $(\ell,i)$ have $1,2,...,i$ successors (in some order). Therefore,
$Alg^{(\ell,i)}=\sum_{j=1}^i j^2={i(i+1)(2i+1)}/6$, for $i=1,\ldots,n-1.$
Comparing to the lower bound \eqref{eq:lb_lowercuts} for lower triangle cuts:
$$\frac{Alg^{(\ell,i)}}{OPT^{(\ell,i)}}\leq \frac{4i(i+1)(2i+1)}{6i(i+1)^2}=\frac{2(2i+1)}{3(i+1)}<\frac{4}{3}.$$
Since this is true for all cuts, it also holds for the entire lower triangle.
\end{proof}
\vspace{10pt}
\noindent {\bf Upper triangle:} {To analyze the \minmin\ algorithm over the upper triangle, we obtain a relation between the cost of the algorithm over successive cuts}:
\begin{lemma}\label{lemma:invariant}
For $i=1,\dots,n-2$, we have
\begin{equation}\label{eq:invariant}
Alg^{(u,i)}\leq Alg^{(u,n-1)}+2\sum_{j=i+1}^{n-1}\left[S_{u,j}+\frac{S_{u,j}^2}{j(j+1)}+j\right].
\end{equation}
\end{lemma}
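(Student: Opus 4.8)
The plan is to prove \eqref{eq:invariant} by a telescoping argument over consecutive upper-triangle cuts, controlling the one-step increment $Alg^{(u,j-1)}-Alg^{(u,j)}$ for each $j=n-1,n-2,\dots,i+1$. First I would describe the merging process purely in terms of subtree sizes. Let $a_1\le a_2\le\cdots\le a_{j+1}$ be the numbers of successors of the $j+1$ outgoing tree edges crossing cut $(u,j)$; since only tree edges carry flow across the cut and each such flow equals the size of the subtree it feeds, we have $\sum_k a_k=S_{u,j}$ and $Alg^{(u,j)}=\sum_{k=1}^{j+1}a_k^2$. By the rule of \minmin\ — merge the two smallest subtrees, then absorb the $S_{u,j-1}-S_{u,j}=j$ new nodes lying between the two cuts, one into each of the resulting $j$ subtrees — the successor counts crossing cut $(u,j-1)$ are exactly $(a_1+a_2+1,\,a_3+1,\dots,a_{j+1}+1)$.

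Expanding $Alg^{(u,j-1)}=(a_1+a_2+1)^2+\sum_{k=3}^{j+1}(a_k+1)^2$, the squared terms reproduce $\sum_k a_k^2$, the linear terms collect to $2\sum_k a_k=2S_{u,j}$, the constants collect to $1+(j-1)=j$, and the only genuinely nonlinear leftover is the single cross-term $2a_1a_2$ produced by merging the two smallest subtrees. This gives the clean identity
\[
Alg^{(u,j-1)}-Alg^{(u,j)}=2a_1a_2+2S_{u,j}+j .
\]

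The crux is then to bound $2a_1a_2$. Here I would use that $a_1,a_2$ are the two smallest of $j+1$ numbers summing to $S_{u,j}$: each of the remaining $j-1$ numbers is at least $a_2\ge\tfrac12(a_1+a_2)$, so $S_{u,j}\ge(a_1+a_2)+(j-1)\tfrac{a_1+a_2}{2}=\tfrac{j+1}{2}(a_1+a_2)$, whence $a_1+a_2\le\tfrac{2S_{u,j}}{j+1}$. Combined with AM–GM this yields $a_1a_2\le\big(\tfrac{a_1+a_2}{2}\big)^2\le\tfrac{S_{u,j}^2}{(j+1)^2}\le\tfrac{S_{u,j}^2}{j(j+1)}$. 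Substituting into the identity and using $j\le 2j$ bounds each increment by $2\big[S_{u,j}+\tfrac{S_{u,j}^2}{j(j+1)}+j\big]$. Telescoping $Alg^{(u,i)}-Alg^{(u,n-1)}=\sum_{j=i+1}^{n-1}\big(Alg^{(u,j-1)}-Alg^{(u,j)}\big)$ and inserting the per-step bound gives \eqref{eq:invariant} exactly.

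I expect the main obstacle to be the careful accounting of the merging recursion, namely matching the algebraic operation ``merge the two smallest and add one to the whole sequence'' to the geometry (the $j+1$ subtrees below cut $(u,j)$ becoming $j$ subtrees below cut $(u,j-1)$ after absorbing precisely the $j$ intermediate nodes); an off-by-one here would corrupt the telescoping. Once the increment identity is established, the remaining product bound via AM–GM and the bottom-two-average inequality is routine.
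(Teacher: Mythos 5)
Your proof is correct and follows essentially the same route as the paper's: the same one-step identity $Alg^{(u,j-1)}=Alg^{(u,j)}+2S_{u,j}+2a_1a_2+j$ obtained by expanding the merge of the two smallest subtrees, a bound $a_1a_2\leq S_{u,j}^2/(j(j+1))$ (you get it via $a_1+a_2\leq 2S_{u,j}/(j+1)$ and AM--GM, the paper via $a_1\leq S_{u,j}/(j+1)$ and $a_2\leq S_{u,j}/j$, which is an immaterial difference), and then the same recursion, which you phrase as telescoping. The slack step $j\leq 2j$ matching the stated factor of $2$ is also exactly what the paper implicitly uses.
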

\begin{proof}
Let $a_1,a_2,...,a_{i+1}$ be the number of successors for the edges of cut $(u,i)$, in non-decreasing order ($i\geq 2$). We know that $\sum_{j=1}^{i+1}a_j=S_{u,i}$. Since we merge $a_1,a_2$ in the higher level, the edges of cut $(u,i-1)$ will have $(a_1+a_2+1), (a_3+1),...,(a_{i+1}+1)$ successors. Therefore the cost of cut $(u,i-1)$ is
\begin{align*}
Alg^{(u,i-1)}&=(a_1+a_2+1)^2+ (a_3+1)^2+...+(a_{i+1}+1)^2\\
&=\sum_{j=1}^{i+1} a_j^2+2\sum_{j=1}^{i+1} a_j+2a_1a_2+i=Alg^{(u,i)}+2S_{u,i}+2a_1a_2+i.
\end{align*}
Since $a_1$ is the smallest number, it is upper-bounded by the average, i.e. $a_1\leq S_{u,i}/(i+1)$. Similarly, $a_2$ is the smallest among the rest, therefore $a_2\leq (S_{u,i}-a_1)/i\leq S_{u,i}/i$. So the algorithm satisfies:
$$Alg^{(u,i-1)}\leq Alg^{(u,i)}+2S_{u,i}+2\frac{S_{u,i}^2}{(i+1)i}+i.$$
By recursively applying this upper bound, we get \eqref{eq:invariant}.
\end{proof}
Next, add equation \eqref{eq:invariant} across all upper-triangle cuts and use the lower bound of Lemma~\ref{lemma:n4logn} to upper bound the energy of $Alg$ over the upper triangle:
\begin{lemma}\label{lemma:Upper_triangle}
The output of the \minmin\ satisfies: $\frac{Alg^u}{OPT^u}\leq 2+\bigO(\frac{1}{\log n})$.
\end{lemma}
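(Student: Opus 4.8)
The plan is to bound $Alg^u$ by summing the recursive inequality from Lemma~\ref{lemma:invariant} over all upper-triangle cuts, and then to divide by the lower bound $OPT^u \geq n^4 \ln(n) + \bigO(n^4)$ from Lemma~\ref{lemma:n4logn}. First I would sum \eqref{eq:invariant} over $i = 1, \dots, n-2$. The left side gives $\sum_{i=1}^{n-2} Alg^{(u,i)}$, which together with $Alg^{(u,n-1)}$ forms the full upper-triangle cost $Alg^u = \sum_{i=1}^{n-1} Alg^{(u,i)}$. The right side produces $(n-2)\,Alg^{(u,n-1)}$ plus a double sum of the bracketed terms $S_{u,j} + \frac{S_{u,j}^2}{j(j+1)} + j$. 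So the key quantity to estimate is the aggregate of these correction terms, and I would handle each of the three pieces separately using $S_{u,j} = n^2 - \frac{j(j+1)}{2}$.

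The dominant term will be the one involving $\frac{S_{u,j}^2}{j(j+1)}$, since $S_{u,j}^2 \approx n^4$ for small $j$ and $\frac{1}{j(j+1)} \approx \frac{1}{j^2}$, but the harmonic-type sum that arises when this is weighted by the outer summation index is what controls the leading constant. I would expand $S_{u,j}^2 = n^4 - n^2 j(j+1) + \frac{j^2(j+1)^2}{4}$ and track only the $n^4$-order contributions carefully, absorbing everything of order $n^4$ (with no $\ln n$ factor) and lower into an $\bigO(n^4)$ remainder. The $S_{u,j}$ term contributes $\bigO(n^2 \cdot n^2) = \bigO(n^4)$ and the $+j$ term contributes $\bigO(n^3)$, so both land in the remainder. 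I also need a bound on $Alg^{(u,n-1)}$: on the top cut, the $\minmin$ tree's edge loads are determined by the merging construction, and this cost is $\bigO(n^4)$, so the $(n-2)\,Alg^{(u,n-1)}$ term is $\bigO(n^5)$, which would be \emph{too large} unless it is actually $\bigO(n^4)$ per the reordering; I expect the correct reading is that $Alg^{(u,n-1)}$ is a single cut with $\bigO(n)$ edges carrying loads summing to $S_{u,n-1} = \bigO(n^2)$, giving $Alg^{(u,n-1)} = \bigO(n^3)$ after balancing, hence $(n-2)\,Alg^{(u,n-1)} = \bigO(n^4)$.

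Assembling these estimates, I would show $Alg^u \leq 2 n^4 \ln(n) + \bigO(n^4)$, where the factor $2$ emerges precisely from the $2$ multiplying the double sum in \eqref{eq:invariant} combined with the leading $n^4 \sum_j \frac{1}{j^2}$-type contribution matching the $n^4 \ln n$ scale of the lower bound. Dividing by $OPT^u \geq n^4 \ln(n) + \bigO(n^4)$ then yields
\[
\frac{Alg^u}{OPT^u} \leq \frac{2 n^4 \ln(n) + \bigO(n^4)}{n^4 \ln(n) + \bigO(n^4)} = 2 + \bigO\!\left(\frac{1}{\log n}\right),
\]
as claimed.

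The main obstacle I anticipate is pinning down the exact leading constant in the aggregated correction terms: I must verify that the double summation of $\frac{S_{u,j}^2}{j(j+1)}$, after interchanging the order of summation so that each $j$ is weighted by how many outer cuts $i < j$ see it, produces a coefficient that exactly matches (rather than exceeds) the $n^4 \ln n$ in the lower bound, so that the ratio tends to $2$ and not to a larger constant. This requires carefully identifying which terms genuinely contribute at the $n^4 \ln n$ scale versus those that are safely $\bigO(n^4)$, and confirming that the weighting from the nested sum does not inflate the harmonic sum beyond $\ln n$.
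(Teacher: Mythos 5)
Your route is essentially the paper's: sum \eqref{eq:invariant} over $i$, interchange the order of summation so that each $j$ is weighted by $(j-1)$, verify that all but one of the resulting terms are $\bigO(n^4)$, and divide by the lower bound of Lemma~\ref{lemma:n4logn}. The one methodological difference is in how the dominant term is handled: you propose to evaluate $2\sum_{j}(j-1)S_{u,j}^2/\bigl(j(j+1)\bigr)$ explicitly as $2n^4\ln n + \bigO(n^4)$ and compare it against the explicit lower bound, whereas the paper short-circuits this by noting $2\sum_{j}(j-1)S_{u,j}^2/\bigl(j(j+1)\bigr)\leq 2\sum_j S_{u,j}^2/(j+1)\leq 2\,OPT^u$ directly from the per-cut bound \eqref{eq:lb_uppercuts}, giving $Alg^u\leq 2\,OPT^u+\bigO(n^4)$ without any harmonic-sum bookkeeping. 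Since Lemma~\ref{lemma:n4logn} is itself obtained by summing \eqref{eq:lb_uppercuts}, the two computations are the same arithmetic, and your worry that the nested weighting might inflate the constant beyond $2$ resolves correctly: $(j-1)/\bigl(j(j+1)\bigr)\leq 1/(j+1)$.

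The genuine gap is your treatment of $Alg^{(u,n-1)}$. You correctly identify that the proof collapses unless $(n-1)\,Alg^{(u,n-1)}=\bigO(n^4)$, but the justification you offer --- $\bigO(n)$ edges carrying total load $\bigO(n^2)$, ``giving $\bigO(n^3)$ after balancing'' --- argues in the wrong direction: balancing minimizes a sum of squares for a fixed total, so it produces \emph{lower} bounds, never upper bounds on what the algorithm pays. A priori, a tree could route the entire $\Theta(n^2)$ load across that cut through one edge, making $Alg^{(u,n-1)}=\Theta(n^4)$ and the term $\Theta(n^5)$, which would destroy the bound. What actually saves the argument is the \minmin\ construction itself: by Lemma~\ref{lemma:permutation} together with the uncrossing/merging steps, the successor counts of the tree edges crossing cut $(u,n-1)$ are exactly a permutation of $1,2,\dots,n$, so $Alg^{(u,n-1)}=\sum_{i=1}^{n}i^2=\tfrac{n(n+1)(2n+1)}{6}=\Theta(n^3)$, and hence $(n-1)\,Alg^{(u,n-1)}=\Theta(n^4)$ as required. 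With this fact substituted for your balancing heuristic, your remaining estimates (the $S_{u,j}$ term at $\bigO(n^4)$, the $+j$ term at $\bigO(n^3)$, the quadratic term at $2n^4\ln n+\bigO(n^4)$) are correct and the claimed ratio $2+\bigO(1/\log n)$ follows.
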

\begin{proof}
We first sum \eqref{eq:invariant} over $i$ to get:
\begin{align}
Alg^u&=\sum_{i=1}^{n-1}Alg^{(u,i)}\\
&\leq (n-1)Alg^{(u,n-1)}+2\sum_{i=1}^{n-2}\sum_{j=i+1}^{n-1}\left[S_{u,j}+\frac{S_{u,j}^2}{j(j+1)}+j\right]\\
&= (n-1)Alg^{(u,n-1)}+2\sum_{j=2}^{n-1}(j-1)\left[S_{u,j}+\frac{S_{u,j}^2}{j(j+1)}+j\right]\label{eq:algu}
\end{align}
Now, for the first term in \eqref{eq:algu} we have:
$$(n-1)Alg^{(u,n-1)}=(n-1)\sum_{i=1}^{n}i^2=\frac{(n-1)n(n+1)(2n+1)}{6}.$$
For the $S_{u,j}$ term in the summation of \eqref{eq:algu} we have:
$$2\sum_{j=2}^{n-1}(j-1)S_{u,j}=2\sum_{j=2}^{n-1}(j-1)\left(n^2-\frac{j(j+1)}{2}\right)=\frac{3}{4}n^4-\frac{5}{2}n^3+\frac{9}{4}n^2-\frac{1}{2}n.$$
For the quadratic term in \eqref{eq:algu}, we have:
$$2\sum_{j=2}^{n-1}\frac{(j-1)S_{u,j}^2}{j(j+1)}\leq 2\sum_{j=2}^{n-1}\frac{S_{u,j}^2}{j+1}\leq 2\sum_{j=2}^{n-1}OPT^{(u,j)}\leq 2OPT^u,$$
where the second inequality is due to \eqref{eq:lb_uppercuts}.
Finally, we can calculate the last term of \eqref{eq:algu}, as:
$$2\sum_{j=2}^{n-1}(j-1)j=\frac{2}{3}n^3-2n^2+\frac{4}{3}n.$$
Replacing these polynomials back into \eqref{eq:algu} we get:
\begin{equation}
Alg^u\leq 2OPT^u+\frac{13}{12}n^4-\frac{5}{3}n^3-\frac{1}{12}n^2+\frac{2}{3}n.
\end{equation}
Dividing this by $OPT^u$ and using Lemma~\ref{lemma:n4logn} completes the proof.
\end{proof}

\vspace{10pt}
\noindent {\bf Overall approximation:}
{The worse approximation factor between the guarantees for $OPT^\ell$ and $OPT^u$ determines the overall result:}
\begin{theorem}\label{thm:minmin_approximation}
For a rectangular $n\times m$ $(n\leq m)$ grid with loads satisfying $d_i\in [d_{\min},d_{\max}]$ for all nodes $i\in V\backslash\{r\}$, the \minmin\ algorithm for the network reconfiguration problem with uniform resistances gives an approximation factor of $\alpha^2\left(2+\bigO(\frac{1}{\log n})\right)$, where $\alpha=d_{\max}/d_{\min}$. In particular, if the loads are uniform and as $n\rightarrow \infty$, the \minmin\ algorithm gives a $2$-approximation. 
\end{theorem}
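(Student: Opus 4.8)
The plan is to first settle the uniform-demand case by combining the two triangle bounds already proved, and then lift the guarantee to general demands in $[d_{\min},d_{\max}]$ through a scaling argument. For uniform demands I would invoke Lemma~\ref{lemma:Lower_triangle}, which gives $Alg^\ell\leq \frac{4}{3}OPT^\ell$, and Lemma~\ref{lemma:Upper_triangle}, which gives $Alg^u\leq (2+\bigO(\frac{1}{\log n}))OPT^u$, and then apply the mediant inequality. Because the diagonal cuts partition the edges of the grid, both costs decompose additively as $OPT=OPT^u+OPT^\ell$ and $Alg=Alg^u+Alg^\ell$, so the overall ratio satisfies $Alg/OPT\leq \max\{Alg^u/OPT^u,\,Alg^\ell/OPT^\ell\}=2+\bigO(\frac{1}{\log n})$ for large $n$. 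This already yields the asymptotic $2$-approximation claimed for uniform loads as $n\to\infty$.

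For general demands, the key observation is that \minmin\ outputs a fixed combinatorial tree $T_{MM}$ whose structure depends only on the successor counts of each edge, not on the actual demand values. Fix any spanning tree $T$ and any edge $e$ whose successor set $\text{succ}(e)$ has size $k$. Since every demand lies in $[d_{\min},d_{\max}]$, we have $d_{\min}^2 k^2\leq \big(\sum_{i\in\text{succ}(e)}d_i\big)^2\leq d_{\max}^2 k^2$. Summing over all edges of $T$ gives the sandwich
$$d_{\min}^2\,\mathcal{E}_U(T)\leq \mathcal{E}_d(T)\leq d_{\max}^2\,\mathcal{E}_U(T),$$
where $\mathcal{E}_d(\cdot)$ and $\mathcal{E}_U(\cdot)$ denote the energy of a tree under the true demands and under unit demands, respectively.

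I would then chain these bounds. Write $OPT_U$ for the uniform-demand optimum and let $T^*$ be the optimal tree for the true demands. Applying the upper half of the sandwich to $T_{MM}$, followed by the uniform guarantee and the optimality of $OPT_U$, yields $\mathcal{E}_d(T_{MM})\leq d_{\max}^2\,\mathcal{E}_U(T_{MM})\leq d_{\max}^2(2+\bigO(\frac{1}{\log n}))OPT_U$. Applying the lower half to $T^*$ together with $\mathcal{E}_U(T^*)\geq OPT_U$ gives $\mathcal{E}_d(T^*)\geq d_{\min}^2\,OPT_U$. Dividing the two inequalities produces $\mathcal{E}_d(T_{MM})/\mathcal{E}_d(T^*)\leq \alpha^2(2+\bigO(\frac{1}{\log n}))$ with $\alpha=d_{\max}/d_{\min}$, which is exactly the stated factor.

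The main obstacle is making the comparison against the demand-dependent optimum rigorous: one must avoid silently comparing two different trees when relating their uniform and non-uniform costs. This is resolved by the single fact that $OPT_U$ minimizes uniform energy over \emph{all} spanning trees, so $\mathcal{E}_U(T^*)\geq OPT_U$ holds no matter which tree $T^*$ happens to be; this is what lets the lower bound on $\mathcal{E}_d(T^*)$ and the upper bound on $\mathcal{E}_d(T_{MM})$ share the common quantity $OPT_U$ and cancel cleanly. The resulting $\alpha^2$ factor is intrinsic to this structure-only approach, since \minmin\ commits to a tree before seeing the demand magnitudes and hence cannot adapt to them.
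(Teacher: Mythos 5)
Your proposal is correct and takes essentially the same route as the paper's proof: combine Lemmas~\ref{lemma:Lower_triangle} and \ref{lemma:Upper_triangle} for the uniform case, then compare against the uniform-demand optimum via a scaling sandwich (the paper uses the modified instance with $d_i=d_{\min}$ rather than unit demands and writes your per-edge bound as the flow inequality $\tilde{f}\leq f\leq \alpha\tilde{f}$, but the argument, including your key step $\mathcal{E}_d(T^*)\geq d_{\min}^2\, OPT_U$, is identical). The only point the paper covers that you omit is the genuinely rectangular case $n<m$, which it dispatches in one sentence by connecting the two triangles through the middle section with parallel disjoint paths so that the square-grid analysis carries over.
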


\begin{proof}
For square grids with uniform loads, the approximation result follows immediately from Lemmas~\ref{lemma:Lower_triangle} and \ref{lemma:Upper_triangle}. Moreover, for the rectangular grid with uniform loads, we can apply the \minmin\ algorithm on the lower and upper triangle parts, and use the middle section to connect the two triangles simply by parallel disjoint paths. The analysis is then similar to the square case. 

For non-uniform loads, we consider the uniform counterpart of this instance, which is the same graph with $d_i=d_{\min}$ for all $i\in V\backslash\{r\}$. Running the \textsc{\textsc{Min-min}} algorithm on this uniform case outputs a tree $T$ whose loss is at most twice of the optimal tree in the uniform setting (call the optimal tree $T^*_u$, and its loss $OPT_u$). Let $f$ and $\tilde{f}$ be the electrical flows on tree $T$ with the actual and modified loads, respectively; then we have $\tilde{f}\leq f\leq \alpha \tilde{f}$. This gives the following inequality regarding energies:
$\E(f)\leq \alpha^2\E(\tilde{f})\leq \alpha^2\big(2+\bigO(\frac{1}{\log n})\big)OPT_u$.

It only remains to argue that $OPT_u\leq OPT$, where $OPT$ is the loss of the optimal tree (call it $T^*$) in the original instance. This is true because if we reduce the loads on $T^*$ to $d_{\min}$, we decrease its loss, but on the other hand, the resulting energy should still be more than $OPT_u$, by the optimality assumption of $OPT_u$.
\end{proof}

Note that $\alpha$ in the above approximation captures the ratio between the biggest and the smallest loads, and is usually independent of $n$ in practice (typically, loads do not vary a lot in a realistic scenario). The above approximation ratio can be thought of as $\min \{2\alpha^2,n\}$, where $n$ comes from the general result of Theorem~\ref{thm:spt} in the case of a large $\alpha$.

\section{A Generalization of {\sc Min-Min} and Computational Results} \label{sec:lm}




\subsection{The Layered-Matching Heuristic}
We extend our intuition from the theoretical results and propose a generalization of {\sc Min-Min}. The main idea is to partition the graph into layers and connect each layer to the upper layer in a balanced way, similar to {\sc Min-Min}. To partition the graph into layers, we can use an arbitrary breadth first search tree rooted at $r$, which results in layers based on the hop-distance from the root (this is like the diagonal cuts on the grid), starting with layer $L_1$ comprised of nodes adjacent to the root. To connect layers $L_1, \hdots, L_k$ in a balanced manner, we use the flow relaxation, and find the best matching that creates a flow which is close to the relaxed solution in terms of $L_{\infty}$ norm. In particular, to connect nodes in layer $L_k$ to the upper layer $L_{k-1}$, using $f^r$ the solution of the flow relaxation, $d_i$ be the demand of node $i$, let $x_{ij}$ be the indicator of picking an edge $(i,j)\in E$. Then, finding the best matching reduces to solving the following integer program:
\begin{equation}\label{eq:matching_IP}
    \begin{aligned}
    \min &~~~ \epsilon\\
    \text{s.t.} &~ \sum_{\substack{i \in L_{k-1}:\\ (i,j) \in E}} x_{ij} = 1 && \forall \; j \in
    L_k,\\
    &~ - \epsilon \leq x_{ij} d_j-  f^{r}_{ij} x_{ij} \leq \epsilon, && \forall \; i \in
    L_{k-1}, j \in  L_{k} : (i,j) \in E,\\
        &~  x_{ij} \in \{0,1\}, && \forall \; i \in
    L_{k-1}, j \in  L_{k} : (i,j) \in E. 
\end{aligned}
\end{equation}
This IP can be solved very fast in practice since it finds a (local) matching between two layers of nodes (as discussed in the next section). Once we find the matching between layers $k$ and $k-1$, we contract each node of $L_{k-1}$ with its successors, while replacing the demand of that node with the total demand of the corresponding subgraph. By repeating this process in a bottom-up fashion, all the nodes get connected to the root via a single path, hence we obtain a valid spanning tree. The full description of the heuristic is in Algorithm \ref{alg:heuristic}.


\begin{algorithm}[t]\label{LM} 
\caption{Layered-Matching (LM) Heuristic}
\label{alg:heuristic}
\begin{algorithmic}[1]
\State Compute BFS tree rooted at the root $r$.
\State Let $L_0,\dots,L_t$ be the layers formed in the BFS tree, where $L_0=\{r\}$.
\For{$k=t,t-1,...,1$}
\State Match each node of $L_k$ to exactly one node in $L_{k-1}$ via solving IP \eqref{eq:matching_IP}.
\For{$j\in L_{k-1}$}
\State Contract subgraph rooted at $j$ into a single node with demand equal its entire subgraph.
\EndFor
\EndFor
\end{algorithmic}
\end{algorithm}

\subsection{Computational Results for $25\times 25$ Sparsified Grids} As previously discussed, electricity distribution networks resemble sparsified grids. For computational experiments, we constructed {25} instances {for each sparsification probability} on $25 \times 25$ grids where edges are deleted independently with some probability $p \in \{0.05,0.1, 0.2\}$ as long as they do not disconnect the (current) graph; an example of such sparsified grids is given in Figure~\ref{fig:realnetwork}. Demands for each vertex are sampled uniformly randomly in $[0.5, 1.5]$, resistances are also sampled uniformly randomly in $[1, 10]$ and the root node is placed in the corner. To incorporate the acyclic support constraint, we utilized Martin's \cite{martin1991} extended formulation for spanning trees; we refer the reader to Appendix \ref{computaions-appendix} for more details on the MIP formulation and computational plots.\footnote{The code used for the simulations in this paper can be found here: \url{https://github.com/hassanmortagy/Electrical-Flows-over-Spanning-Trees}}

We benchmarked\footnote{We implemented all algorithms in \texttt{Python} 3.7, utilizing \texttt{numpy} and \texttt{networkx} for some of our functions. We used these packages from the Anaconda 4.7.12 distribution, with \texttt{Gurobi} 9 \cite{Gurobi2020} as a solver for the MIP.} the performance of depth-first search (DFS) trees, \textcolor{blue}{shortest-path trees (SPT)}, \textsc{RIDe}, the Layered-Matching (LM) heuristic, the branch exchange heuristic and the convex integer program. The branch exchange variant we utilized starts a new iteration once an improving solution is found, as opposed to looking for the exchange that results in the most improvement. We also considered a variant that uses binary search on the value of the improvement where we only do an exchange if it results in an improvement of value at least $T$; if no such improvement exists we divide $T$ by two and proceed. We found that the latter version was significantly slower.

\begin{figure}
    \centering
        \includegraphics[scale = 0.46]{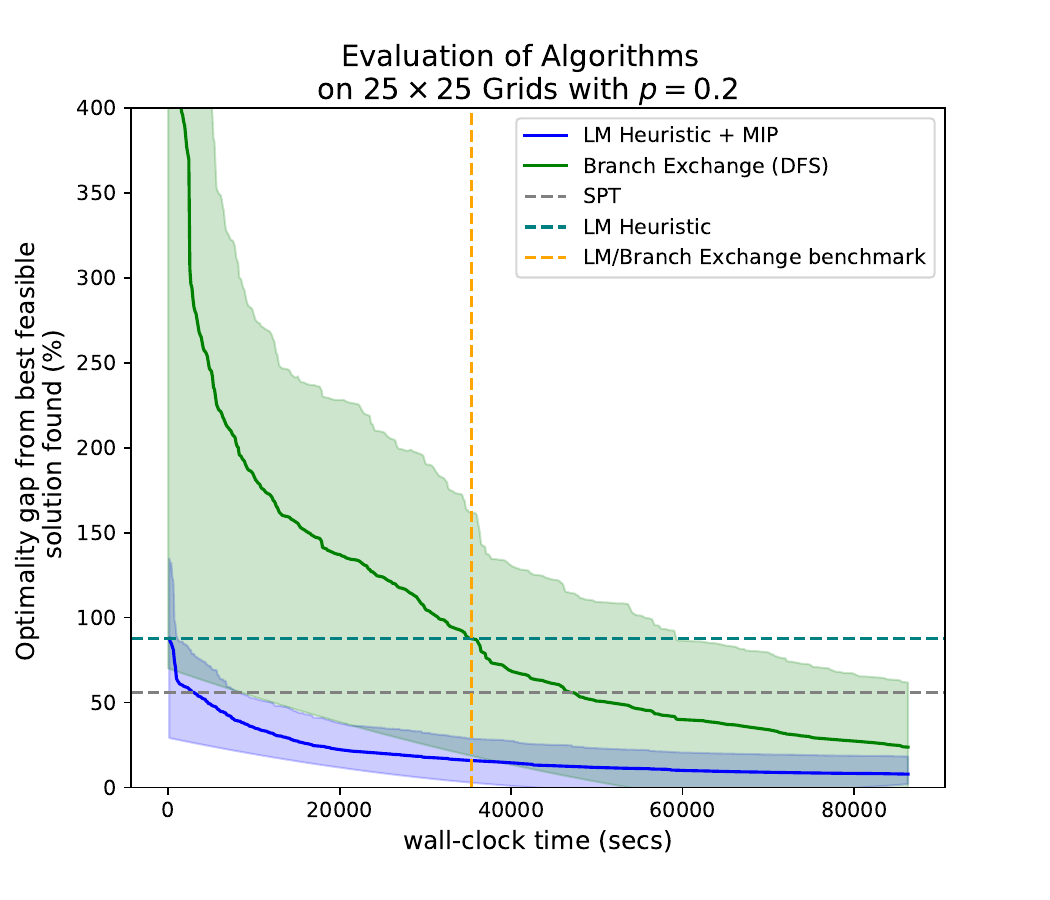}
        \includegraphics[scale = 0.46]{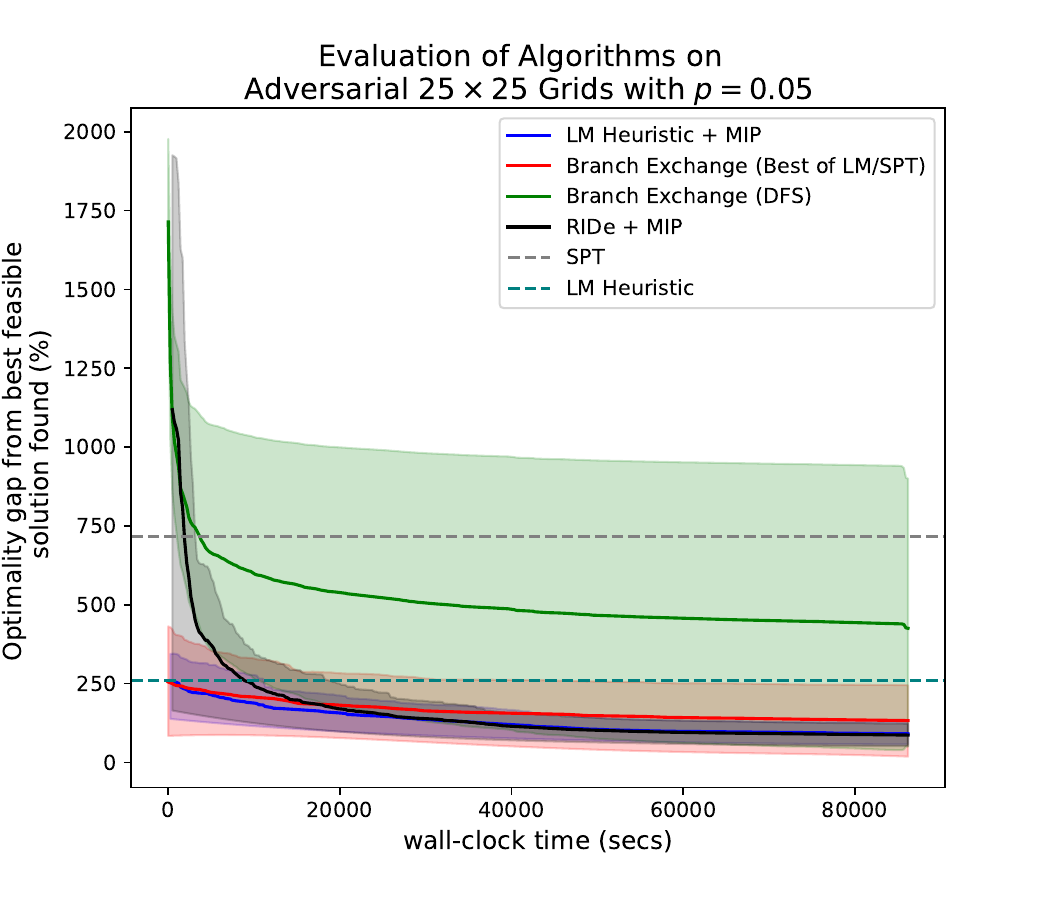}
    \caption{\textbf{Left:} Plots comparing performance of branch exchange (initialized with DFS trees) and the mixed integer program initialized with the layered matching heuristic solution, on sparse $25 \times 25$ grids with sparsification probability $p=0.2$. The demands and resistances are randomly chosen in $[0.5,1.5]$ and $[1,10]$ respectively. The dotted horizontal line compares the quality of the Layered Matching heuristic solution and the color margins represent confidence intervals across the different instances. {\textbf{Right:} Benchmarking performance of all algorithms on instances where resistances are chosen \emph{adversarially} as explained in Appendix~\ref{app:SPT}. Further, the demands are also randomly chosen in $[0.5,1.5]$ and the sparsification probability $p=0.05$.}}
    \label{fig:comp}
    \vspace{-13 pt}
\end{figure}

Our computations show that the algorithms proposed in this paper are orders of magnitude faster than branch exchange (initialized with a random DFS tree) while having comparable performance. For example, on $25 \times 25$ grid instances with sparsification probability $p = 0.2$, the mean \textcolor{blue}{time} taken by the LM heuristic is 1.35 seconds, whereas the mean time it takes the branch exchange heuristic to attain the same cost as LM is around 35,000 seconds (see Table~\ref{tab:comp-25-table-alg}). We believe that this \textcolor{blue}{significant improvement} will allow system operators to minimize losses even on an hourly basis as demand patterns change.

Moreover we find that \texttt{Gurobi} mostly failed to even find a feasible solution for sparsified $25\times 25$ grids instances in a 24-hour time limit. To help the IP solver, we provide a warm start solution using different algorithms and let the solver run for 24 hours. This outperforms the branch exchange heuristic after running for the same one day time limit (see Figure \ref{fig:comp}-(left)). In addition, running the MIP with the LM output as a warm-start obtains the same performance as initializing branch exchange with the LM output (as opposed to a DFS tree), however the MIP additionally gives provable optimality gaps. \textcolor{blue}{We report the gap in solution quality with respect to the best feasible solution found after running all algorithms for 24 hours. In particular,  we found that the best feasible solution was obtained by running the MIP with the LM output as a warm-start in 46.7\% of the instances, and branch exchange initialized with the LM output in 53.3\% of the instances.\footnote{For these solutions, the provable gaps using MIPs were around 15-30\%.}}

\begin{table}[t]
\centering
{\renewcommand{\arraystretch}{1.3}
\begin{tabular}{|p{0.7cm}||p{0.5cm}p{0.5cm}|p{0.5cm}p{0.5cm}|p{0.5cm}p{0.5cm}|p{0.5cm}p{0.5cm}|c|} \hline
			\multirow{2}{*}{ \small $p$}    & \multicolumn{2}{c|}{\small DFS}  & \multicolumn{2}{c|}{\small {SPT}}           & \multicolumn{2}{c|}{\small \textsc{RIDe}}                           & \multicolumn{2}{c|}{\small \textsc{LM}} & \multirow{2}{2.25 cm}{\small{Branch Ex. time to attain LM} gap}\\
			     & \multicolumn{1}{c}{ \small Gap} & \multicolumn{1}{c|}{\small Time} &
			       \multicolumn{1}{c}{ \small {Gap}} & \multicolumn{1}{c|}{\small {Time}}& \multicolumn{1}{c}{\small Gap} & \multicolumn{1}{c|}{\small Time} & \multicolumn{1}{c}{\small Gap} & \multicolumn{1}{c|}{\small Time} & \\ \hline \hline
$0.05$ &  60.80 & 0.57 &  \textcolor{blue}{0.56} & {\textcolor{blue}{0.14}} & 8.13&196.8& 1.22 & {\bf 2.93}  &  {\bf 5680.30}  \\ \hline
  $0.1$ & 49.69 &  0.35&  \textcolor{blue}{0.58} & { \textcolor{blue}{ 0.12}} & 6.36& 125.9& 1.12 & {\bf 1.92} & {\bf 9380.28} \\ \hline
  $0.2$ &  39.43 &  0.21&  \textcolor{blue}{0.56} & { \textcolor{blue}{0.10}} & 5.73& 77.07& 0.90& {\bf 1.35} &  {\bf 35181.11}\\ \hline \hline			  $\textcolor{blue}{0.05^*}$ & \textcolor{blue}{17.13} & \textcolor{blue}{0.54} &  {\bf\textcolor{blue}{7.15}} &  \textcolor{blue}{0.10} &\textcolor{blue}{10.92}& \textcolor{blue}{134.9}& \textcolor{blue}{\bf{0.80}}& {\bf \textcolor{blue}{0.73}} &  {\bf \textcolor{blue}{--}}\\ \hline		
		\end{tabular}}
\caption{Comparing the mean optimality gap from best feasible solution found ($\times $100\%) and mean running time (secs) of different algorithms on sparse $25 \times 25$ grids with varying probability of sparsification ($p$). \textcolor{blue}{The last column denotes the time taken by branch exchange to reach the same solution quality as LM heuristic. The first three rows provide our results on the first set of instances. The last row provides the results for the second set of adversarial instances, where branch exchange did not attain the same cost as LM within 24 hours.}}
    \label{tab:comp-25-table-alg}
\end{table}

\textcolor{blue}{For the set of random instances considered, SPTs outperformed LM and \textsc{Ride} (see Table \ref{tab:comp-25-table-alg}). However, since SPTs are oblivious to the node demands this behavior is not generalizable. To explore this further, we  constructed sparsified grid instances with edge resistances generated using adversarial distributions (mimicking the worst-case instances for SPTs, given by a Hamiltonian path of low resistance). In these instances, we find that the performance of SPTs is arbitrarily bad and the ratio of the cost between SPTs and LM grows linearly in the dimension of the graph (see Appendix \ref{app:SPT} for more details). Moreover, we find that LM outperformed all the other methods, as shown in Figure \ref{fig:comp}-(right). Since LM is robust to the structure of the graph, edge resistances and node demands and SPTs are very cheap to compute, we propose taking the best of both for initializing the MIPs.} 

Finally, the computations suggest that the approximation factor obtained by \textsc{RIDe} in practice is much better than the worst-case theoretical bounds we show. Further, the performance of \textsc{RIDe} (as expected) and LM significantly improve as the graph gets sparser (see Table \ref{tab:comp-25-table-alg}), and their improvement over branch exchange becomes more pronounced. This is very desirable since electricity distribution networks are indeed typically sparse in practice.

\section{Conclusions and Open Questions}\label{sec:conclusion}\vspace{-5pt}
In this paper we studied the network reconfiguration problem from power systems (for distribution networks) through the lens of approximation algorithms. We provided approximation algorithms for different scenarios with restrictions on graph structure, line resistances, and node demands.
\textcolor{blue}{Some open questions} still remain, including the
extension of the $\sqrt{n}$-approximation (or even constant-factor approximation) to planar graphs,
analysis for the iterative deletion of the min-flow edge (introduced by Shirmohammadi and Hong~\cite{shirmohammadi1989reconfiguration}),
and the hardness of the problem for grids or planar graphs.
\subsection*{Acknowledgements}
The authors would like to thank the participants of the Real-Time Decision Making Reunion Workshop,  Mixed Integer Programming Workshop, and the IEEE Power \& Energy Systems General Meeting for valuable feedback. In particular, we would like to thank David Williamson for pointing us to low-stretch spanning trees, Alexandra Kolla for pointing us to spectrally-thin trees, \textcolor{blue}{Tasos Sidiropoulos for pointing us to the grid embedding of planar graphs, and the anonymous referees for numerous useful suggestions to improve this manuscript.} A part of this work was done while the authors were visiting the Simons Institute, UC Berkeley. We gratefully acknowledge the financial support from NSF grants CRII-1850182, CCF-1733832, CCF-1331863, and CCF-1350823.

\bibliographystyle{plain}
\bibliography{ref.bib}

\appendix

\newpage

\section{Detailed Overview of Existing Techniques} \label{sec:relwork}
\paragraph{Related work in power systems:} The problem of reconfiguring the electric distribution network to minimize line losses was first introduced by Civanlar \etal \cite{civanlar1988distribution} and Baran and Wu \cite{baran1989network} where they introduced and implemented an algorithm called ``Branch Exchange'', which tries to locally improve the objective by swapping two edges of the graph. Unlike branch exchange that maintains a feasible spanning tree during its execution, there are other algorithms that start with the entire graph and delete edges one by one until a feasible solution is obtained \cite{shirmohammadi1989reconfiguration}. We discussed this approach further in Section~\ref{sec:flow_relax}. Subsequently, many other heuristic algorithms were proposed for the reconfiguration problem, including but not limited to genetic algorithms \cite{enacheanu2008radial}, particle swarm optimization \cite{kumar2014power}, artificial neural networks \cite{salazar2006artificial}, etc.
The missing part in all these heuristics is a rigorous theoretical performance guarantee that shows why/when these algorithms perform well. To that end, Khodabakhsh \etal \cite{khodabakhsh2018submodular} recently showed that the reconfiguration problem is equivalent to a supermodular minimization problem under a matroid constraint. 

\paragraph{Existing techniques in combinatorial optimization:} One may ask the question if a simple spanning tree like breadth-first or depth-first search tree will be a good solution to the reconfiguration problem. It is shown in \cite{khodabakhsh2018submodular} that if the edges are identical, the optimal tree will include all the edges incident to the root. So the BFS tree might be a better candidate. However, as we showed in Section~\ref{sec:n_and_logn}, a BFS tree can have a loss of $\Omega(n)$ times the optimal loss. Depth-first search can be worse; for example, a cycle with spokes and root in the center has a gap of $\Omega(n^2)$. (the optimal tree is a star with linear cost, while the DFS tree will take the cycle with cubic cost.)

Our problem looks like a {\it bicriteria tree} approximation; on the one hand, we want to connect demands to the root via shortest paths, on the other hand, we want the paths to be disjoint, i.e., degree of the nodes (except the root) in the tree must be small. Similar problems have been studied in the Computer Science literature: K{\"o}nemann and Ravi \cite{konemann2002matter} consider finding a minimum cost spanning tree subject to maximum degree at most $B$. 
The variation where each node has its own specified degree bound is also studied by Fekete \etal \cite{fekete1997network} and Singh and Lau \cite{singh2007approximating}. Khuller \etal \cite{khuller1995balancing} also define Light Approximate Shortest-path Trees (LAST), in which a tree is $(\alpha,\beta)$-LAST if the distances to the root are increased by at most a factor of $\alpha$ (compared to the original graph), while the cost of the tree is at most $\beta$ times the minimum spanning tree. However, the main difficulty 
in using these approximations for the reconfiguration problem is accounting for the resultant flow in the spanning trees. The cost on the edges of the tree are then no longer linear (as in the above mentioned results) or even quadratic. 

Electrical energy is minimized when considering {\it edge-disjoint paths} to connect nodes  \cite{seymour1980disjoint}. However, the complete disjointness is rarely achievable in our problem (unless the graph is a star with root in the center), and hence we want to limit the number of flows that merge together. This is more related to the Edge-Disjoint Path with Congestion (EDPwC) problem, studied by
Andrews \etal \cite{andrews2010inapproximability}, which is as follows: Given an undirected graph with $V$ nodes, a set of terminal pairs and an integer $c$, the objective is to route as many terminal pairs as possible, subject to the constraint that at most $c$ demands can be routed through any edge in the graph. They show hardness of approximation for EDPwC problem. The main differences with our problem are that in EDPwC there is a hard constraint on the flow routed through any edge, while in our problem there is a quadratic cost associated with that flow, as well as the additional spanning tree constraint in the reconfiguration problem.

A natural question is if there are some known graph families where one can exploit existing structures to find an approximation. We consider planar graphs, also motivated by the application since many distribution networks are designed that way. As we saw in Section~\ref{sec:n_and_logn}, one way to obtain useful lower bounds for the objective function is via generating a packing of cuts that are small in size, i.e., do not have many edges. A celebrated result in the theory of planar graphs is the existence of a small set of vertices, called the {\bf vertex-separator}, that can disconnect the graph into components of almost equal size  \cite{lipton1979separator}. This can even be applied in a recursive manner, as shown by Frederickson \cite{federickson1987fast}, to divide the graph into $\bigO(n/r)$ regions with no more than $r$ vertices each, and $\bigO(n/\sqrt{r})$ boundary vertices in total. However, it is unclear how to bound the cost of these cuts or regions in the reconfiguration problem, unless we have more information about the direction of the resultant flow on boundary edges.

\paragraph{Existing techniques in electrical flows:} Electrical flows have been an active area of research in the past two decades, due to their computational efficiency and numerous applications to graph theory problems. In particular, it was shown that one can compute an electrical flow in a graph in near-linear time \cite{peng_2011,kelner_2013,cohen_2019}. Moreover, various novel graph algorithms involve computing an electrical flow as a subroutine. For example, Madry \cite{Madry_2016} and Christiano \etal \cite{Madry_2010} use electrical flows to obtain the fastest algorithms for the \textsc{Max-flow} problem so far.

If we relax the spanning tree constraint, our problem reduces down to the standard problem of computing an electrical flow in a graph that satisfies the demands.  Even though the support of an electrical flow in its full generality does not form a spanning tree, still, it maybe be beneficial to use the minimum energy electrical flow in the graph as a starting point. Shirmohammadi and Hong \cite{shirmohammadi1989reconfiguration} follow this approach and propose an iterative algorithm for the reconfiguration problem, where in each iteration they compute the electrical flow and delete the edge with smallest flow such that the graph remains connected. They demonstrated experimentally that their iterative algorithm performs well in practice but they provide no theoretical guarantees. In Section \ref{sec:flow_relax}, we proposed a similar iterative edge-deletion algorithm and prove its approximation bound in Theorem \ref{rand theorem}.

At the heart of these edge deletions is the question of whether we can delete edges from the graph without increasing the energy cost too much. One approach that can be used to address this question is {\it spectral sparsification} \cite{spielman_2008}, which aims to reduce the number of edges in the graph while maintaining $(1 \pm \epsilon)$ approximations of the Laplacian quadratic form. In their classic result, Spielman and Srivastava \cite{spielman_spars} show that one can construct such a sparsifier with $\Tilde{\bigO}(n/\epsilon^2)$ edges. Chu \etal \cite{peng_2018} slightly improve upon the results of Spielman and Srivastava, bringing the number of edges down to $\Tilde{\bigO}(n/\epsilon)$ for some specific instances. 
Using the fact that electric flows are fully characterized by the Laplacian quadratic form, one may conclude that by using such a sparsifier we can reduce the number of edges in the graph without significantly increasing the energy cost. However, this is not true, because to obtain such sparsifier they compensate the deletion of edges by changing the weights (i.e. resistances) on the edges. Thus, since we assume resistances are fixed, to the best of our knowledge, the existing spectral sparsification approach does not extend to our problem. This motivates the need of a novel approach to handle edge deletions without increasing the energy too much.

\paragraph{Uniform Spanning Trees:} To deal with the iterative edge deletions, we consider sampling from  distributions over spanning trees. Random spanning trees are one of the most well-studied probabilistic combinatorial structures in graphs. Recent work has specifically considered \emph{product distributions} over spanning trees where the probability of each tree is proportional to the product of its edge weights. (This is motivated by the desirable properties and numerous applications of such product distributions.) For example, Asadpour \etal \cite{Goemans_2010} break the $\bigO(\log n)$ barrier of the \textsc{ATSP} problem by rounding a point in the relative interior of the spanning tree polytope by sampling from a maximum entropy distribution over spanning trees; this maximum entropy distribution turns out to be a product distribution. Moreover, a beautiful property of product distributions over spanning trees is the 
fact that the marginal probability of an edge being in a random spanning tree is exactly equal to the product of the edge weight and the effective resistance of the edge (see Appendix \ref{sec:Uniform Spanning Trees}). This is a fact that we exploit in our \textsc{RIDe} algorithm.



\paragraph{Low-stretch trees:} Finally, another relevant approach in the electric flows literature, entails {\it low stretch trees}. Given a weighted graph $G$, a low-stretch spanning tree $T$ is a spanning tree with the additional property that it approximates distances between the endpoints of any edge in $G$. 
In particular,\footnote{We follow the definition of Elkin \etal \cite{spielman_2004}, but this definition slightly differs in the denominator from others given in the literature. Abraham and Neiman \cite{ofer_2012} and Abraham {\etal} \cite{ofer_2008} define the stretch as $ \frac{d_T (u,v)}{d_G(u,v)}$, where $d_G$ is the shortest-path metric on $G$ with respect to the edge weights. Note that these definitions are equivalent if the edge weights are uniform.} the \emph {stretch} of an edge $e =(u,v)$ is the ratio of the (unique) shortest path distance between $u$ and $v$ in $T$ to $r_e$ (the weight of edge $e$ in $G$). Furthermore, the \emph {total stretch} of $T$ is defined as the sum of the stretch of all edges in $G$. Kelner \etal \cite{kelner_2013} show that for any tree, the gap between the energy of the flow in that tree and the flow in the original graph, is at most the total stretch of that tree. Naturally, one may wonder if there exists a low value for the (total) stretch such that all graphs have a spanning tree with that stretch. The answer to that question is unfortunately no.  Abraham and Neiman \cite{ofer_2012} show that one can construct a spanning tree $T$ for any connected graph with total stretch at most $\bigO(m \log n \log \log n)$ in near-linear time (Theorem 2.11 in \cite{kelner_2013}); this bound is tight up to an $\bigO(\log \log n)$ factor because Alon et. al \cite{karp_1995} show that the total stretch is $\Omega(m\log n)$ for certain graph instances. Thus, this implies that the energy cost of $T$ is at most $\tilde{\bigO}(m)$ times that of the original graph. We improve upon this approximation result using our \textsc{RIDe} algorithm.



\section{Missing background Information} \label{sec:background}
We give a review of preliminaries on electrical flows, graph Laplacians and their pseudoinverse, and matrix inversion results. We refer the reader to \cite{williamson2019network,lyons_2017} for more details. 

\subsection{The Graph Laplacian} \label{sec:graph lap}
Let $G = (V,E)$ be a connected and undirected graph with $\abs{V}=n$, $\abs{E}=m$. Each edge $e\in E$ is also associated with a \emph{resistance} $r_e>0$. The inverse of the resistance is called \emph{conductance}, defined by $c_e=1/r_e$. 
Let $B \in \mathbb{R}^{n \times m}$ be the vertex-edge incidence matrix upon orienting each edge in $E$ arbitrarily. Also, let $R$ be an $m\times m$ diagonal resistance matrix where $R_{e,e} = r_{e}$. We define the weighted Laplacian $L := BCB^T$, where $C = R^{-1}$. Since $C$ is a positive definite and symmetric matrix, we could write  $ L = (C^{1/2}B^T)^T (C^{1/2}B^T)$, which implies that $L$ is positive-semi definite, since $x^TLx \geq 0$ for all $x \in \mathbb{R}^n$.

Let $\mathbf{1}$ be the all-ones vector. For any matrix $A \in \mathbb{R}^{m \times n}$, denote the span of the columns of $A$ by $\mathrm{im}(A) \subseteq \mathbb{R}^{m}$. It is well known that if $G$ is connected, the only vector in the nullspace of the Laplacian $L$ is the all-ones vector $\mathbf{1}$. In what follows, we will use the Moore-Penrose pseudoinverse, denoted by $L^\dagger$, to invert the Laplacian.
Since $L$ is symmetric and positive semi-definite, we can write $L$ in terms of its eigen-decomposition $L= \sum_{i=1}^n \lambda_i u_i u_i^T$, where $0 = \lambda_1  \leq \dots \leq \lambda_n$ are the eigenvalues of $L$ sorted in increasing order and $u_i$ are the corresponding singular orthonormal vectors. Now, the pseudoinverse could be conveniently characterized using $L^\dagger = \sum_{i=2}^n \frac{1}{\lambda_i}u_i u_i^T$. Observe that $L L^\dagger = \sum_{i=2}^n u_i u_i^T$ and is thus a projection matrix that projects onto $\mathrm{im}(L)$. In other words, for any vector $x \in \mathbb{R}^{n}$ such that $x^T \mathbf{1} = 0$, $L L^\dagger x = x$.

\subsection{An Introduction to Electrical Flows} \label{intro to elec flows}
Given a graph $G = (V,E)$, a root $r \in V$ and demands $d_i \geq 0$ for all $i \in V \setminus\{r\}$, we begin by assigning a demand $d_r = - \sum_{i \in V \setminus\{r\}} b_i$ to the root, and we collect these demands into a demand vector $b \in \mathbb{R}^{n}$. An electrical flow is a feasible flow that satisfies demands, while also minimizing the electrical energy. Hence, computing an electrical flow amounts to solving the following problem:
\begin{equation}\label{energy}
    \begin{aligned}
     \min &\quad \E(f) = f^TRf\\
    \text{s.t.} &\quad Bf = b
\end{aligned} 
\tag{P2}
\end{equation}
The optimality conditions\footnote{Ohm's Law says that the electrical flow on any edge is equal to the potential difference divided by the
resistance of the edge (or equivalently multiplied by the conductance).} of \eqref{energy} (i.e. the problem of computing an electrical flow) imply the existence of a vector of potentials on the nodes (dual variables) $\phi  \in \mathbb{R}^n$ such that
\begin{equation} \label{ohm}
    f_{u,v}^* =  \frac{\phi_v - \phi_u}{r_{u,v}}
\end{equation}
or $f^* = CB^T \phi$ in matrix notation. By pre-multiplying this equation with $B$ on both sides, we have $\phi = L^\dagger b$, which is well-defined since $\mathbf{1}^T b = 0$. Using these facts, one can easily show that
\begin{equation}\label{energy-potentials}
\E(f^*) = R^Tf^*R =  \phi^T L \phi = b^T \phi = b^T L^\dagger b.
\end{equation}

For any pair of vertices $u$ and $v$, let  $\chi_{uv} \in \mathbb{R}^{n}$ be a vector with a $-1$ in the coordinate corresponding to $u$, a 1 in the coordinate corresponding to $v$, and all other coordinates equal to 0. The \emph{effective} resistance between a pair of vertices $u,v$ is defined as
 \begin{equation}\label{effective resis}
     \Reff(u,v) := \chi_{uv}^T L^\dagger\chi_{uv}.
 \end{equation}
In other words, it is the energy of sending \emph{one} unit of electrical flow from $u$ to $v$. 

For any edge $e = (u,v)$, it is well known that $\Reff(e) \leq r_e$, where equality holds if and only if $e$ forms the only path between $u$ and $v$ (see for example Theorem D in \cite{Klein_1993}). Intuitively, if $\Reff(e) = r_e$, then, upon sending one unit of electrical flow between the endpoints of $e$, all that flow goes through $e$, which implies that $e$ is \emph{bridge} since otherwise we could otherwise reroute some of the flow through another $u-v$ path and decrease the effective resistance, which would contradict the optimality of the electrical flow.

\subsection{Uniform Spanning Trees} \label{sec:Uniform Spanning Trees}
In a weighted graph, a uniform distribution of spanning trees is one such that probability of each tree is proportional to the product of the weight of its edges. 
\begin{definition}
For $w : E \to \mathbb{R}_{++}$, we say $\lambda$ is a $w-$uniform spanning tree distribution if it is a product distribution and for any spanning tree $T \in \mathcal{T}$
$$\mathbb{P}[T] \propto{\prod_{e \in T}w(e)}.$$
\end{definition}
Let $\lambda_e := \mathbb{P}_{T \sim \lambda} (e \in T)$ be the marginal probability of an edge $e \in E$. It is known that (see for example \cite{Durfee_2016,lyons_2017,williamson2019network})
$$  \lambda_e = w(e) \chi_e^T L_w^\dagger\chi_e  \qquad \text{and} \qquad \sum_{e \in E} \lambda_e = n-1,$$
where $L_w$ is the weighted Laplacian defined with respect to the weights $w$. In particular, the vector of marginal probabilities $\lambda_e$, $e \in E$, is in the spanning tree polytope. In this work, we specifically consider the case when we choose $w$ to be the conductances, i.e. $\lambda$ is a $c-$uniform distribution. Hence, using \eqref{effective resis} we know that $\lambda_e = c_e \chi_e^T L^\dagger\chi_e = c_e \Reff(e)$, where $L$ is the weighted Laplacian defined with respect to the conductances (see Appendix \ref{sec:graph lap}). Therefore, for any spanning tree $T \in \mathcal{T}$,
\begin{equation} \label{prob_dist}
    \mathbb{P}[T] = \frac{\prod_{e \in T}c_e}{K} \qquad \text{and} \qquad \sum_{e \in E} c_e \Reff(e) = n-1,
\end{equation}
where $K = \sum_{T \in \mathcal{T}}{\prod_{e \in T}c_e}$ is the normalization factor. 

Observe that under a $c-$uniform spanning tree distribution, if an edge $e = (u,v)$ has a low marginal probability $c_e \Reff(e)$, then there are relatively a lot of paths between $u$ and $v$ excluding $e$. Therefore upon deleting an edge $e$ from the graph, it would not be costly to reroute the flow going through edge $e$. Similarly, if an edge has a high marginal probability, then rerouting the flow upon deleting that edge would be relatively very costly. This is the crucial observation that we use in our \textsc{RIDe} algorithm.

\subsection{A Note on Reactive Power}\label{app:reactive}
In this paper, we assumed that the demands ($d_i$'s) are real-valued parameters. However, in energy systems, demands are usually complex numbers $d=p+\mathbf{i}q$, capturing the active $(p)$ and reactive $(q)$ parts of the demand. Consequently, the loss on each line will be $r_e[(\sum_{i\in \text{succ}(e)}p_i)^2+(\sum_{i\in \text{succ}(e)}q_i)^2]$. Note that in this case, the objective function can be decomposed into two additive parts, in which one is only a function of real demands ($p$), and the other is only a function of the reactive part ($q$). We argue that our results would still hold. In particular, the approximate solutions in Theorems~\ref{rand theorem},\ref{thm:spt},\ref{thm:general_cuts} are independent of the demands; hence, the approximation factor would hold for both the active and reactive parts of the objective function. In Theorem~\ref{thm:minmin_approximation}, the \minmin\ algorithm would output the same spanning tree if performed with either $p$ or $q$, given the uniform assumption on (complex) loads; therefore, the approximation factor holds for both parts of the objective function. 

\section{Convex Optimization over the Flow Polytope}\label{sec:convex}
One can think of the reconfiguration problem \eqref{eq:p0} as minimizing a convex function over the vertices of a (flow) polytope. However, the general results from convex optimization over flow polytopes do not lead to good guarantees. We propose a randomized algorithm \textsc{RIDe} that rounds the fractional solution obtained from the flow relaxation to vertices of the flow polytope, while providing an $\bigO(m-n)$ approximation guarantee. But before that, here we review this new perspective on our reconfiguration problem, and some main results of interest in convex optimization over polytopes.

For a directed and connected graph $G$ with vertex-edge incidence matrix $B$ and a demand vector $b:V \to \mathbb{R}$, the general flow polytope is given by $P = \{f\in \mathbb{R}^m:\, Bf =b, f \geq 0\}$. It is known that the support of the vertices of the flow polytope, denoted by $\text{vert}(P)$, forms a spanning tree (see Theorem 7.4 in \cite{bertsimas_97}). This follows from the fact that there is a one-to-one correspondence between the bases of the graphic matroid defined by $G$ and the linear matroid defined on the incidence matrix $B$. In other words, there is a one-to-one correspondence between the spanning trees of $G$ and subsets of $n-1$ linearly independent columns of $B$ (note that the rank of $B$ is $n-1$). Therefore, a basic solution of the flow polytope will be one in which the flow is sent along a spanning tree (when ignoring the edge directions). If such a basic solution (spanning tree) additionally satisfies the flow conservation constraints while taking the edge directions into account, we obtain a basic \emph{feasible} solution or a vertex of $P$. Since for undirected graphs we can replace each edge with two directed edges, we can replace the flow conservation constraints of the network reconfiguration problem given in \eqref{flow} using the constraints given by $P$ above, where an extreme point of that polytope will then correspond to a flow sent along a spanning tree rooted at the root $r$. Such an extreme point is precisely one of the feasible solutions of \eqref{eq:p0} and \eqref{eq:p1}. Hence, we have arrived at the following formulation of the network reconfiguration problem:
$$ \min \{f^TRf \mid f \in \text{vert}(P)\}.$$

In particular, if all the resistances are uniform, then the problem is equivalent to finding a vertex with the smallest Euclidean norm, which is known to be NP-hard (see, for example, Lemma 4.1.4 in \cite{haddock_2018}). To the best of our knowledge, there do not exist any approximation algorithms for minimizing convex functions over vertices of a polytope. Even if we just require the solution to be integral (instead of lie at a vertex), and make strong assumptions on the objective function like strong convexity and Lipschitz gradients, Baes~\etal \cite{Baes_2012} have proved the following result:

\begin{theorem}[Theorem 2 in \cite{Baes_2012}]
Let $\mathcal{F} = P \cap \mathbb{Z}^n$ be presented by an oracle for solving quadratic minimization problems of the type $\min c^T x + \frac{\tau}{2} \|x\|_2^2$ with varying $c \in \mathbb{Q}^n$ and $\tau \in \mathbb{Q}_+$. There is no polynomial
time algorithm that can produce for every $\mathcal{F} = P \cap \mathbb{Z}^n$ and every strongly convex function $h: \mathbb{R}^n \to \mathbb{R}$ with Lipschitz gradients
a feasible point $\bar{x}$ such that $h(\bar{x}) - h (x^*) \leq n^2 - n$, where $x^* = \argmin_{x\in \mathcal{F}}h(x)$.
\end{theorem}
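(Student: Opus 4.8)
The plan is to read this as a query-complexity (oracle) lower bound rather than a $\mathrm{P}\neq\mathrm{NP}$ reduction: the feasible set $\mathcal{F}=P\cap\mathbb{Z}^n$ is accessed only through the quadratic-minimization oracle, so ``no polynomial-time algorithm'' should be interpreted as ``no algorithm making polynomially many oracle queries.'' The first thing I would record is what the oracle actually computes. Completing the square, $\min_{x\in\mathcal{F}} c^Tx+\frac{\tau}{2}\|x\|_2^2 = \min_{x\in\mathcal{F}} \frac{\tau}{2}\|x+c/\tau\|_2^2 + \mathrm{const}$, so every query returns the integer point of $\mathcal{F}$ that is closest, in the isotropic Euclidean metric, to the freely chosen center $-c/\tau$. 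The whole argument will hinge on the mismatch between this isotropic oracle and the anisotropic objective $h$ we are asked to minimize.

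First I would fix a single polytope $P$ and its integer hull $\mathcal{F}$, and take the adversary's objective family to be anisotropic quadratics $h(x)=\tfrac12 (x-p)^TQ(x-p)$, with $Q\succ0$ whose eigenvalues are pinned inside a fixed interval $[\mu,L]$ so that strong convexity (parameter $\mu$) and the Lipschitz-gradient property (parameter $L$) hold uniformly across the entire family. The center $p$ and the shape $Q$ are the hidden parameters. The design goal is that the $Q$-nearest integer point $x^*=\argmin_{x\in\mathcal{F}}h(x)$ can be placed at many different vertices of $\mathcal{F}$ by varying $(Q,p)$, while the Euclidean-nearest integer points --- the only information the oracle ever reveals --- are insensitive to which member of the family is in play.

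Next I would run an adversary (consistency) argument. The adversary commits to $P$, $\mathcal{F}$, and a fixed answering rule for the oracle, but keeps $(Q,p)$ undetermined within a large set $\mathcal{H}$ of candidate objectives, all of which are consistent with every answer given so far. Each oracle query returns a single integer point, hence can ``use up'' only a controlled number of candidates from $\mathcal{H}$; after polynomially many queries the algorithm has seen only polynomially many integer points of $\mathcal{F}$, and a rich subfamily of $\mathcal{H}$ survives. Once the algorithm commits to its output $\bar x$, the adversary selects a surviving $h\in\mathcal{H}$ whose true minimizer $x^*$ is a vertex far (in the $Q$-metric) from every point the algorithm ever evaluated, forcing $h(\bar x)-h(x^*)>n^2-n$. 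The key structural fact to establish is that the Euclidean-closest-point data can be reproduced identically by many distinct anisotropic objectives whose $Q$-optima are spread across $\mathcal{F}$.

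The hard part will be the simultaneous calibration of three quantities in the construction: (i) the eigenvalues of every $Q\in\mathcal{H}$ must stay inside the fixed window $[\mu,L]$, so that the strong-convexity and Lipschitz hypotheses are genuinely met and cannot themselves be exploited by the algorithm; (ii) the oracle answers must be provably uninformative, i.e., a single Euclidean nearest-point computation must remain consistent with a subfamily of $\mathcal{H}$ large enough to survive polynomially many queries; and (iii) the resulting additive gap between the best evaluated point and the hidden optimum must be driven up to exactly the order $n^2-n$. Tying this last, dimension-dependent gap to the geometry --- showing that in dimension $n$ an anisotropic quadratic can separate its $Q$-optimal integer vertex from all ``isotropically reasonable'' candidates by a value scaling like $n(n-1)$ --- is, I expect, the crux of the entire proof; the adversary bookkeeping and the eigenvalue control are comparatively routine once that separating gadget is in hand.
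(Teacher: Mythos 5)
First, a point of context: the paper never proves this statement --- it is quoted as Theorem~2 of \cite{Baes_2012} in Appendix~\ref{sec:convex}, purely to contrast the generic $m^2-m$ guarantee for convex minimization over integer points with the paper's $\bigO(m-n)$ guarantee for \textsc{RIDe}. So your proposal can only be measured against what a correct proof of the cited result must look like, and there it has a genuine structural flaw: you hide the wrong object. Your adversary ``commits to $P$, $\mathcal{F}$, and a fixed answering rule'' and keeps the objective $h$ (through $(Q,p)$) hidden. But the oracle in the theorem answers questions about $\mathcal{F}$ only; each answer is a function of $\mathcal{F}$ and the query $(c,\tau)$ alone and does not depend on $h$ at all. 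Hence no oracle answer can ever ``use up'' candidates from $\mathcal{H}$ --- your consistency bookkeeping is vacuous, since every $h$ remains consistent forever --- and your adversary wins if and only if the algorithm has \emph{no} access to $h$ whatsoever. In that reading the theorem is trivially true without any of your machinery (take $\mathcal{F}$ to be two far-apart points and note that with $h$ invisible, no finite additive gap can be guaranteed, even with eigenvalues pinned in $[\mu,L]$, because the polytope's diameter is unconstrained); but this proves a vacuous variant in which the constant $n^2-n$, the quadratic oracle, and your calibration program (i)--(iii) play no role. It is also incompatible with the fact, noted in the paper right after the statement, that $n^2-n$ is \emph{tight}, i.e., attained by an actual algorithm --- impossible if the algorithm cannot see $h$.

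If instead you grant the algorithm access to $h$ (explicitly, or via a first-order oracle, which is the model of \cite{Baes_2012}), your family of anisotropic quadratics collapses immediately: $\nabla h(x) = Q(x-p)$, so gradient evaluations at $n+1$ affinely independent points identify $(Q,p)$ exactly after $\bigO(n)$ queries, after which nothing is hidden and an \emph{unconditional} ``no polynomial-time algorithm'' conclusion is unobtainable without complexity assumptions. The roles must be inverted: keep $h$ fixed and fully known --- your observation that $h$ must be anisotropic is correct and essential, since any isotropic quadratic $\|x-p\|_2^2$ is itself a single oracle call with $c=-2p$, $\tau=2$ --- and hide the \emph{feasible set} $\mathcal{F}$ among an exponentially large family of lattice sets whose members are indistinguishable under any polynomially long sequence of isotropic queries. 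The adversary then forces any output to be either infeasible (feasibility can only be certified via points the oracle has returned) or separated from the hidden optimum by more than $n^2-n$. That is the shape of the argument behind the cited theorem, and it is there that the dimension-dependent constant emerges from the geometry of the hidden family; so your item (iii) correctly identifies the crux, but the separating gadget has to be built for hidden feasible sets, not hidden objectives.
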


The authors also show that this $n^2 - n$ approximation is tight. Observe that this bound translates to an $m^2 - m$ approximation in the context of the network reconfiguration problem. However, the $\textsc{RIDe}$ algorithm we propose gives an $O(m-n)$ approximation in the stronger setting in which we require the solution to be a vertex of the flow polytope.

More recently, Hildebrand \etal \cite{Hildebrand_2016} show that there is an FPTAS for solving problems of the form $\min_{P \cap \mathbb{Z}^n} x^TQx$, where $P\subseteq \mathbb{R}^n$ is a polyhderon and $Q \in \mathbb{Z}^{n \times n}$ is a symmetric matrix with at most one negative eigenvalue. Recall that the objective function of the network reconfiguration is $f^T R f$, where $R$ is positive definite and symmetric. Hence, if we additionally assume all the resistances are integral, then using the result of Hildebrand \etal \cite{Hildebrand_2016}, there exists an FPTAS for minimizing the energy of the flow over integral flows. However, this result clearly does not extend to the network reconfiguration problem, since one can obtain an integral flow whose support does not form a spanning tree.

\section{Challenges for Generalizing Cut-based Results to Planar Graphs} \label{app:planar}
 {As mentioned in Section~\ref{sec:n_and_logn}, one potential generalization of the cut-based approximation results is to find such a family of cuts for planar graphs and to try to get an $\bigO(\sqrt{n})$-approximation for planar graphs with uniform edge resistances. In particular, the planar separator theorem can be interesting to solve this problem, as it guarantees the existence of small cuts for any planar graph. Here we discuss potential roadblocks towards this approach.}
 
 \noindent 
    {{\it 1. Edge separators:} All planar graphs do not have small edge separators. For example, if we consider a wheel graph with $n$ nodes as shown in Fig.~\ref{fig:limitations}-(left), any cut that splits the graph into two (almost) equal-size parts has at least a constant fraction of $n$ edges. Even though each planar graph has a vertex separator of size $\bigO(\sqrt{n})$, when we consider edge separators, one can only guarantee an edge separator of size $\bigO(\sqrt{\Delta n})$, where $\Delta$ is the maximum degree in the graph \cite{diks1988edge}}.
    
    \noindent 
    {{\it 2. Constrained separator cuts:} Suppose that the planar graph has a bounded degree, the second limitation is that the choice of the $k$th cut is constrained by the choice of previous cuts. The separator theorem is oblivious to the edge directions in the following sense: 
    once we find a cut of $\bigO(\sqrt{\Delta n})$ in our planar graph as shown in Fig.~\ref{fig:limitations}-(middle), it induces a natural direction on the corresponding edges, where the subsequent cuts have to respect those directions. In other words, if edge $(u_i,v_i)$ appears in our cut, there cannot be a future cut $S \ni r$ such that $v_i\in S$ and $u_i\notin S$. Since the opposite is valid ($u_i\in S, v_i\notin S$), contracting $u_i,v_i$ is not possible, without loss of generality.}
    \begin{figure}[t]
\centering
\begin{minipage}{.21\textwidth}
  \centering
  \includegraphics[width=3cm]{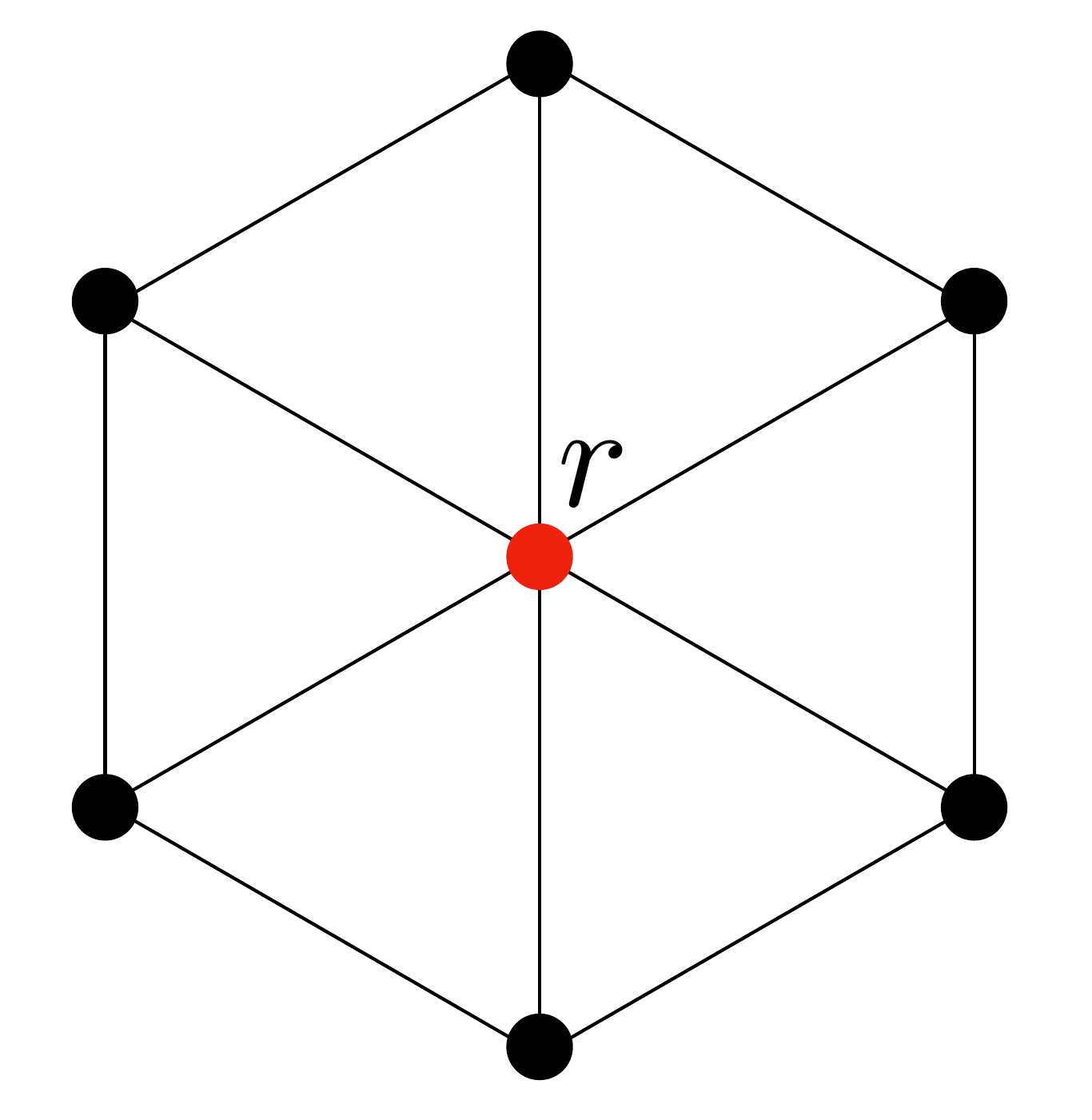}
\end{minipage}%
\begin{minipage}{.38\textwidth}
  \centering
  \includegraphics[width=5.5cm]{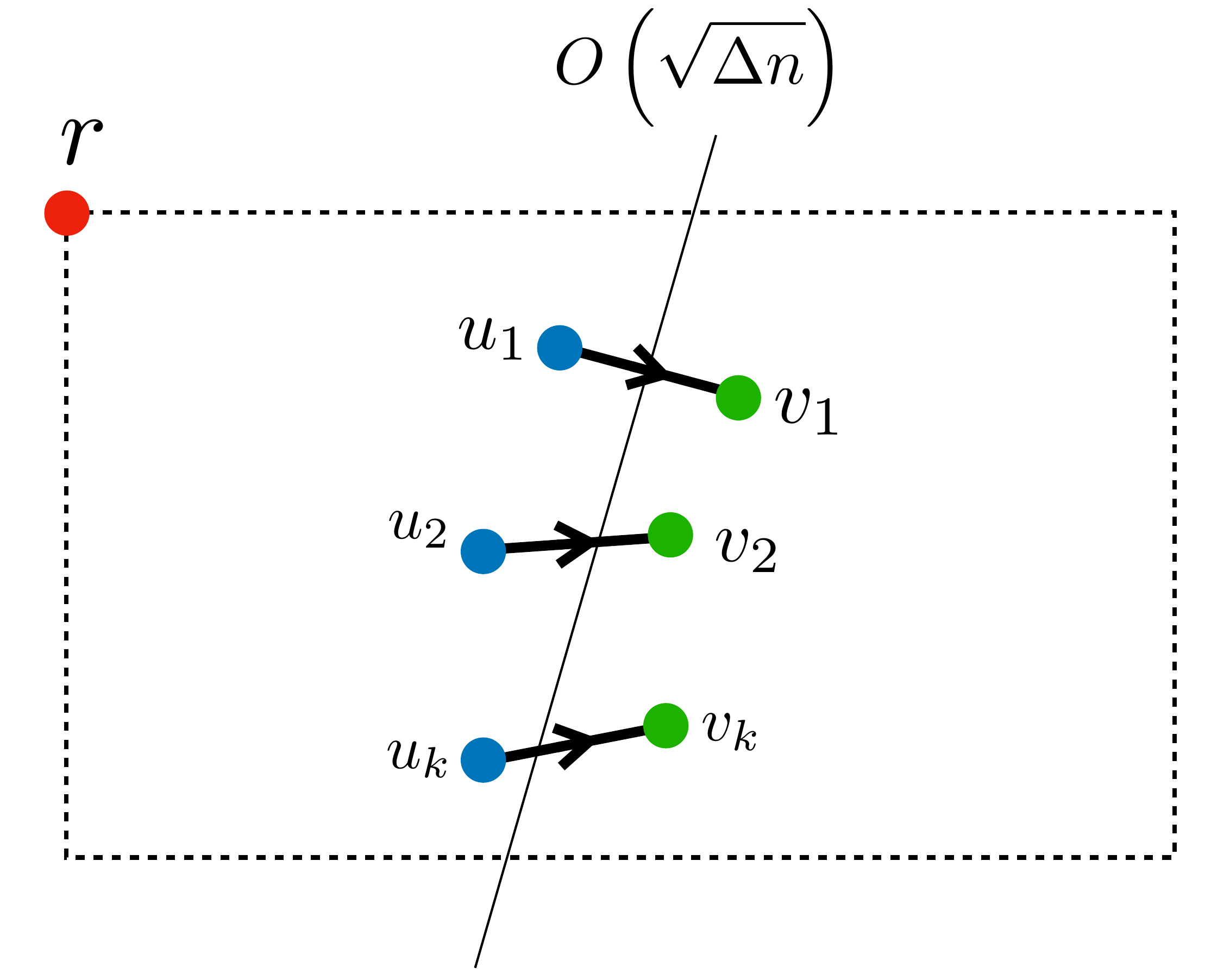}
\end{minipage}
\begin{minipage}{.38\textwidth}
  \centering
  \includegraphics[width=5.5cm]{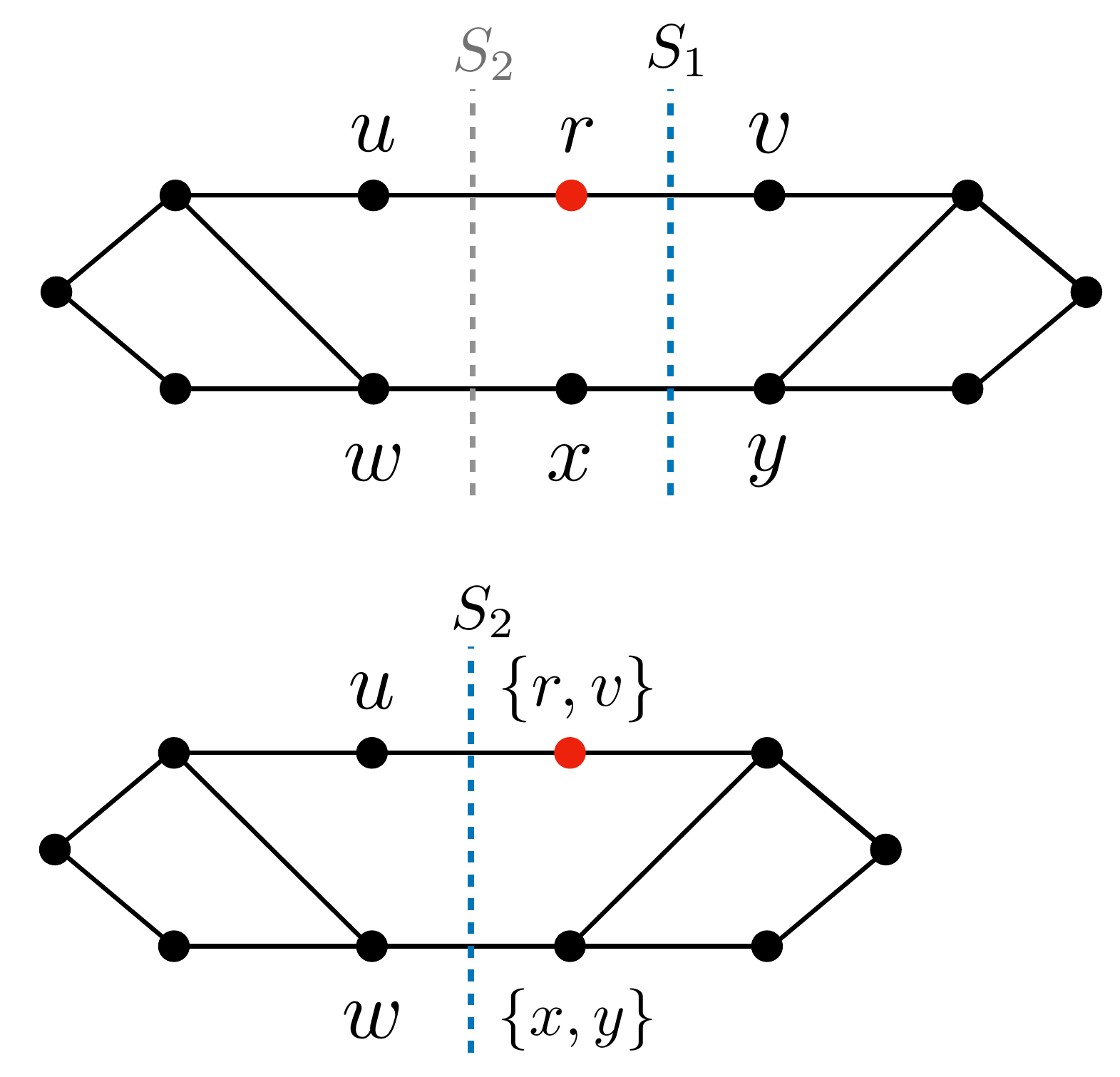}
\end{minipage}
\caption{Limitations of cut-based lower bounds for planar graphs.}
\label{fig:limitations}
\end{figure}
    
    {The example in Fig.~\ref{fig:limitations}-(right) shows how finding appropriate cuts might be challenging even with contracting edges (note that by considering edge contraction, we are already losing the option of cut intersections, hence searching over a smaller family of cuts). The idea in this figure is that after we find our first (small) cut $S_1$, we contract the edges that intersect the cut, in an attempt to avoid using those edges in violating directions in the future cuts. Now when we find our next cut $S_2$ in the contracted graph (shown at the bottom), notice that node $x$ becomes unreachable from the root. This is because cut $S_1$ induces the direction of edge $(x,y)$ to be from $x$ to $y$, and cut $S_2$ sets the direction of edge $(x,w)$ from $x$ to $w$. In other words, even with just 2 cuts, it has become infeasible to find a spanning tree that respects the cut directions.
    
   If one can somehow incorporate these one-way constraints in the separator oracle, i.e.,  once a node $v_i$ is picked in $S$, the corresponding $u_i$'s (could be more than one) from previous cuts should be picked in $S$ as well, then one could recursively use this oracle and hope for an $\bigO(\sqrt{n})$ approximation.}

\section{{Performance of Shortest-Path Trees}} \label{app:SPT}
\textcolor{blue}{The problem with shortest-path trees is that they are oblivious to the node demands. We now give a simple construction that demonstrates this behavior. Consider the complete undirected graph $G_n = (V,E)$ on $n$ nodes, where $n \geq 3$. Fix a root $r$ arbitrarily and let $P$ be any Hamiltonian path in $G_n$ starting at $r$. Now, suppose that $r_e = 1$ for all $e \in P$ and $r_e = n$ for all other edges $e \in E\setminus P$. Finally, let all the demands $d_i = 1$ for all $i \in V$.}
    
    \begin{figure}[t]
    \centering
    \includegraphics[scale = 0.4]{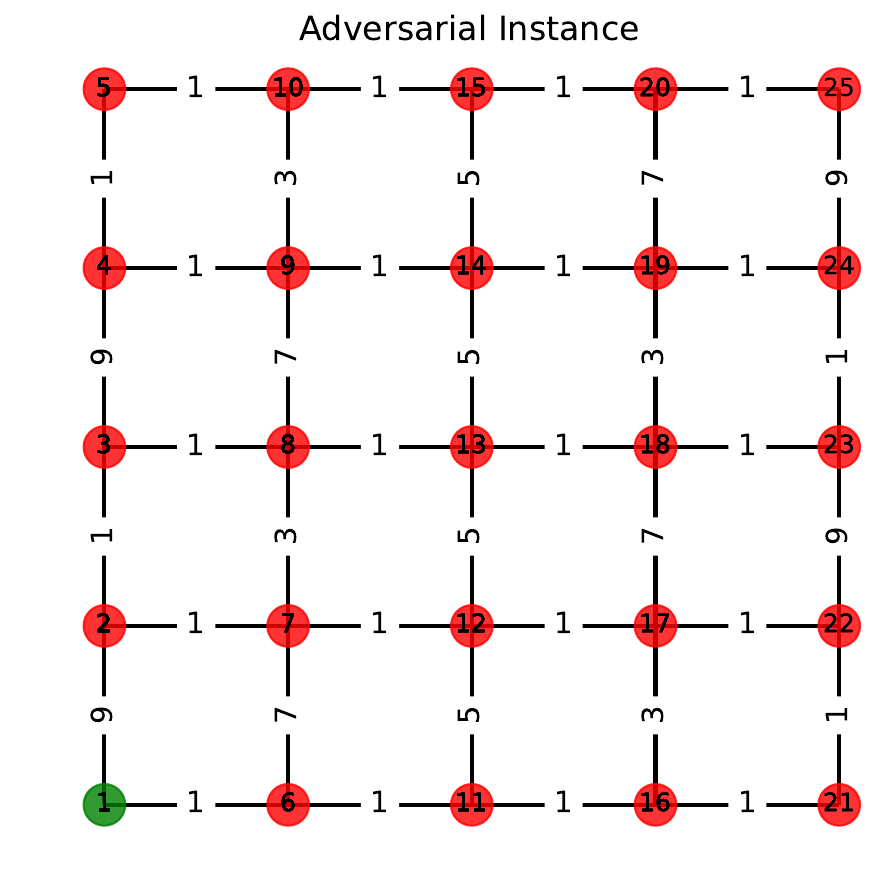}
    \includegraphics[scale = 0.4]{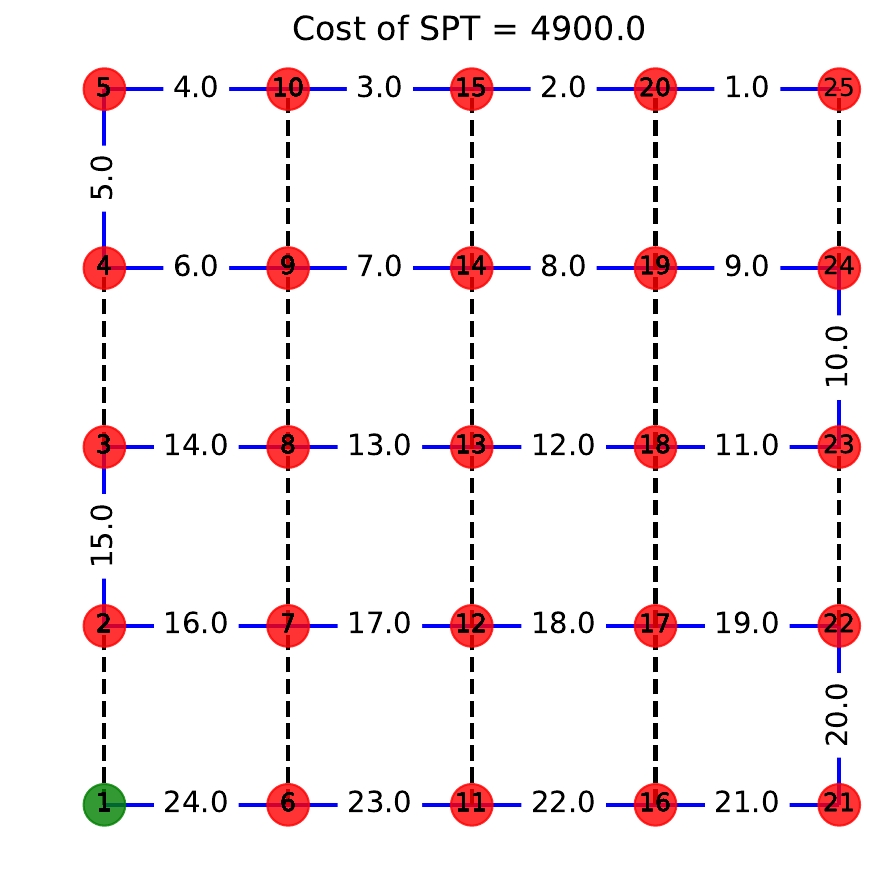}
    \centering
    \includegraphics[scale = 0.4]{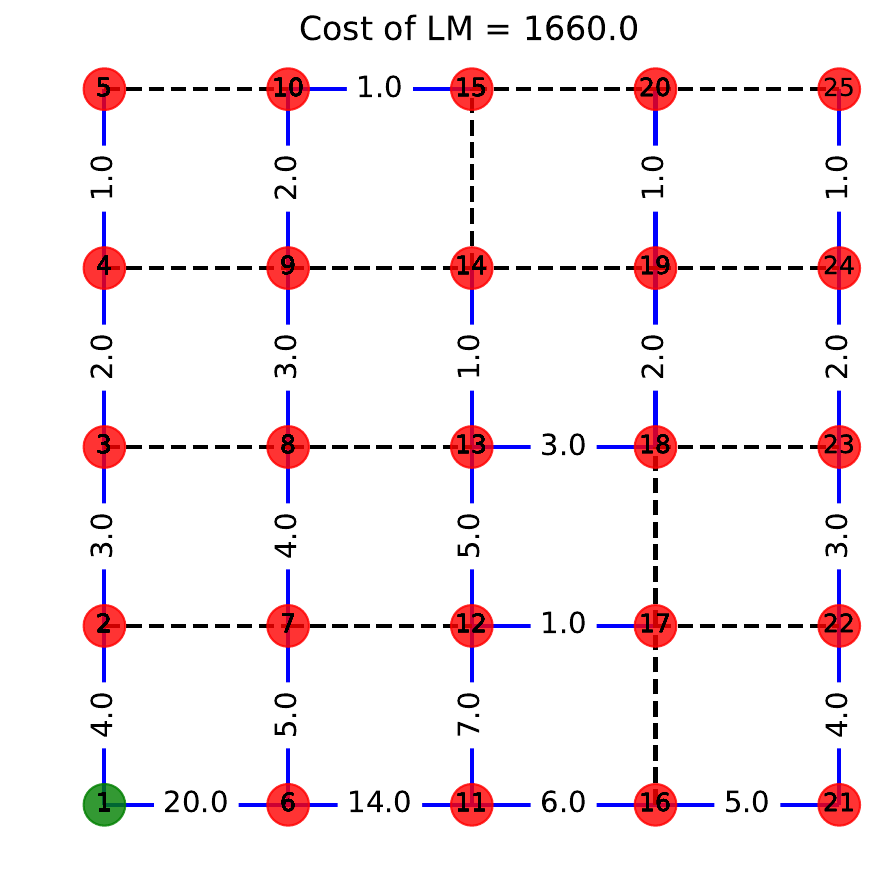}
    \includegraphics[scale = 0.4]{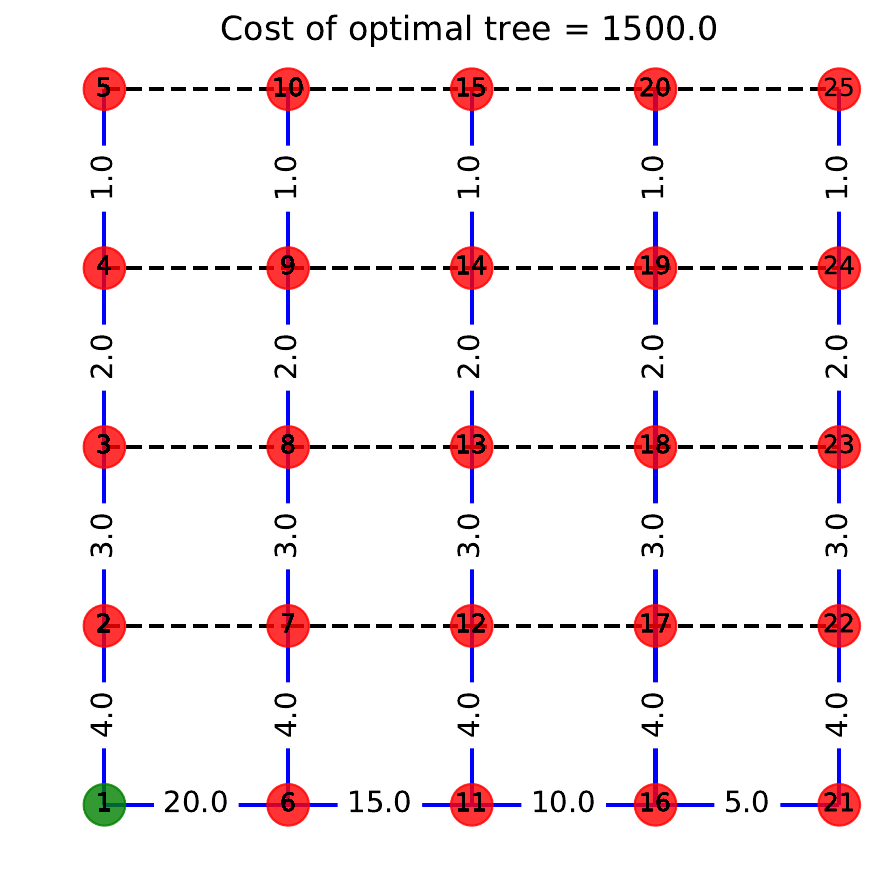}
    \caption{\small \textcolor{blue}{ \textbf{Top left:} An example of a $5\times 5$ grid, where the root $r$ is the green node in the bottom-left corner, all the demands are 1, and the resistances for each edge are given in the figure. These edge resistances are chosen in an adversarial manner to demonstrate how SPTs are oblivious to node demands. \textbf{Top right:} The SPT with respect to the adversarial edge resistances and its cost. \textbf{Bottom left:} The LM tree for adversarial instance and its cost. \textbf{Bottom right:} The optimal tree for adversarial instance and its cost}.}
    \label{fig:adver}
\end{figure}
    
\textcolor{blue}{By construction, in this example the shortest path tree rooted at $r$ will be $P$. The cost of that tree is 
    $$\sum_{i=1}^{n-1} (1)i^2 = \frac{1}{6} (n-1)n(2n - 1) = \frac{1}{6}(2n^3 - 3n^2 + n) \geq \frac{1}{6}(2n^3 - 3n^2)  \geq \frac{n^3}{6},$$
    where the last inequality follows since $n \geq 3$.
    We now compare this with the LM Heuristic, which also attains the optimal tree in this example. The LM Heuristic will start by computing a BFS tree rooted at $r$. Since $G_n$ is a complete graph, this tree will only have two layers: (i) layer 1 will just contain $r$, (ii) layer 2 will contain all other nodes in the graph. Now, the LM Heuristic will match the nodes in layer 2 to the nodes in layer 1 by solving the LP given in the main paper. However in this specific case, there is only one possible matching: match all nodes in layer 2 to the root node $r$ in layer 1. This gives us a final tree, where all nodes $i \in V\setminus \{r\}$ are connected to the root through the edge $(r,i)$. The cost of that tree is $(1)(1)^2 + (n)(n-1)(1)^2 = n^2 - n + 1 \leq n^2$. Hence the gap between the SPT and the LM tree (which is also the optimal tree) is at least $n/6$ or $\Omega(n)$. Note that the same gap could be achieved by the wheel gap, which is planar.}

\textcolor{blue}{We can also extend this construction to grid graphs. Consider and $n \times n$ grid and fix a root $r$ in the corner of the grid. Further, consider a DFS tree rooted at $r$. A DFS tree in this case will be a Hamiltonian path $P$ with $n^2-1$ edges that starts at $r$ and visits every other node in the graph. Now, suppose that $r_e = 1$ for all $e \in P$ and the resistances for other edges are greedily chosen as small as possible so that the shortest-paths tree coincides with $P$. Finally, let all the demands $d_i = 1$ for all $i \in V$. We give an example for a $5\times 5$ grid in Figure \ref{fig:adver}}.

\begin{figure}[t]
\centering
\includegraphics[scale = 0.5]{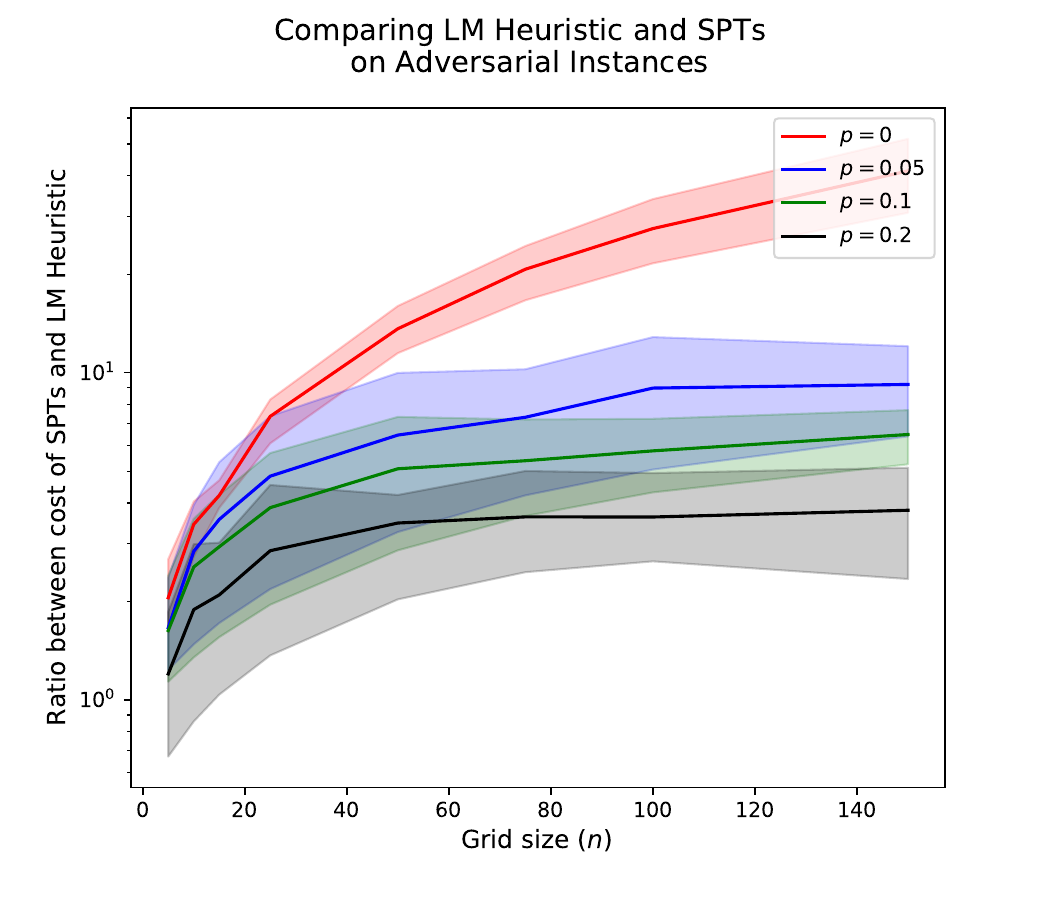}
\caption{\small \textcolor{blue}{Comparison of LM Heuristic and SPTs on adversarial instances for varying $n \times n$ grid sizes (the $x$-axis represents the value of $n$). In those instances, we also add noise to the edge resistances and consider sparsification probabilities $p \in \{0, 0.05,0.1,0.2\}$.}}
\label{fig:adver2}
\end{figure}

\textcolor{blue}{We see in Figure \ref{fig:adver} that the LM tree is near optimal and the ratio between the cost of SPT and LM tree is around 3. We extend this construction for varying grid sizes, and in line with the original experiments presented in the main body of the paper, we add noise to the edge resistances that is sampled from a normal distribution with mean $0$ and standard deviation of 0.5, while also considering sparsification probabilities $p\in \{0, 0.05,0.1,0.2\}$. This is how we generated the resistances for the adversarial second set of instances used in the computations. In Figure \ref{fig:comp}-(right), we present the performance of our algorithms on those adversarial instances, where we find that LM is the best performing algorithm. The best feasible solution was obtained by running the MIP with the LM output as a warm-start in 64\% of the instances, and Branch Exchange initialized with the LM output in 36\% of the instances.}

\textcolor{blue}{As shown in Figure \ref{fig:adver2}, we find that the cost of SPTs can be arbitrarily bad due to the fact that they are oblivious to node demands. We also see that as the sparsification probability increases, the ratio between SPTs and LM decreases, which is expected as the cost of the DFS tree decreases as the sparsification probability increases.}

\section{Computations} \label{computaions-appendix}

\subsection{MIP Formulation Used in Computations}
Let $G=(V,E)$ ($\abs{V}=n$, $\abs{E}=m$) be a connected and undirected graph with: root $r \in V$, resistances $r_e>0$ for each edge $e\in E$ and demands $d_i\geq 0$ for each node $i\in V\backslash \{r\}$ supplied by the root node (which implies that $d_r = - \sum_{i\in V\setminus \{r\}} d_i$). Also, let $\delta^+(v)$ and $\delta^-(v)$ denote the sets of incoming and outgoing edges of $v$ (after fixing an arbitrary orientation on the edges).
To incorporate acyclic support constraints we utilized Martin's \cite{martin1991} extended formulation for spanning trees, which has $O(n^3)$ constraints and variables. The {\em network reconfiguration} problem could be formulated as follows:
\begin{equation*} \label{formulation2}
\begin{aligned}
     \min &~~ \sum_{e \in E} r_e f_e^2\\
    \text{s.t}     &~~ \sum_{e \in \delta^+(u)} f_e - \sum_{e\in \delta^-(u)} f_e = d_u &&  \forall \; u \in V\\
    &~~ \sum_{e \in E} x_e = n-1 \\
     &~~x_{\{v,w\}} = z_{v,w,u} + z_{w,v,u} &&  \forall\; \{v,w\} \in E, u \in V\setminus \{v,w\}\\
   &~~ \displaystyle x_{\{v,w\}} + \sum_{\substack{u \in V\setminus \{v,w\}: \\ \{v,u\} \in E}} z_{v,u,w} = 1 &&  \forall\; \{v,w\} \in E\\
    &~~ -M x_e \leq f_e \leq Mx_e  &&  \forall\; e \in E\\
          &~~x_{\{v,w\}}  \in \{0,1\}, z_{v,w,u}  \in \{0,1\}  &&  \forall \; \{v,w\} \in E, u \in V \setminus \{v,w\}
\end{aligned}
\end{equation*}
where $M$ is a sufficiently large scalar. In particular it suffices to set $M =\sum_{u \in V\setminus \{r\}} d_u = -d_r$. Note that the pairs $\{v, w\}$ are unordered, while $v, v, w$ are an ordered triple of distinct vertices with $\{v, w\} \in E$ (and $u \in V\setminus \{v,w\}$).
 
 \subsection{More computational Plots}
Recall that for our first set of computational experiments, we constructed instances on $25 \times 25$ grids with sparsification probability $p \in \{0.05,0.1, 0.2\}$. Moreover we consider demands and resistances randomly chosen in $[0.5,1.5]$ and $[1,10]$ respectively. We benchmarked the performance of depth-first search (DFS) trees, \textsc{RIDe}, the Layered-Matching (LM) heuristic, the Branch Exchange heuristic and the convex integer program given in the previous section. \textcolor{blue}{In the main body of the paper we only presented the plots for when the sparsification probability was $p = 0.2$. Here, we present additional plots (with more details about the performance of \textsc{Ride} and Branch Exchange initialized with the LM tree) for varying sparsification probabilities $p \in \{0.05,0.1, 0.2\}$}.

\begin{figure}[H]
    \centering
        \includegraphics[scale = 0.45]{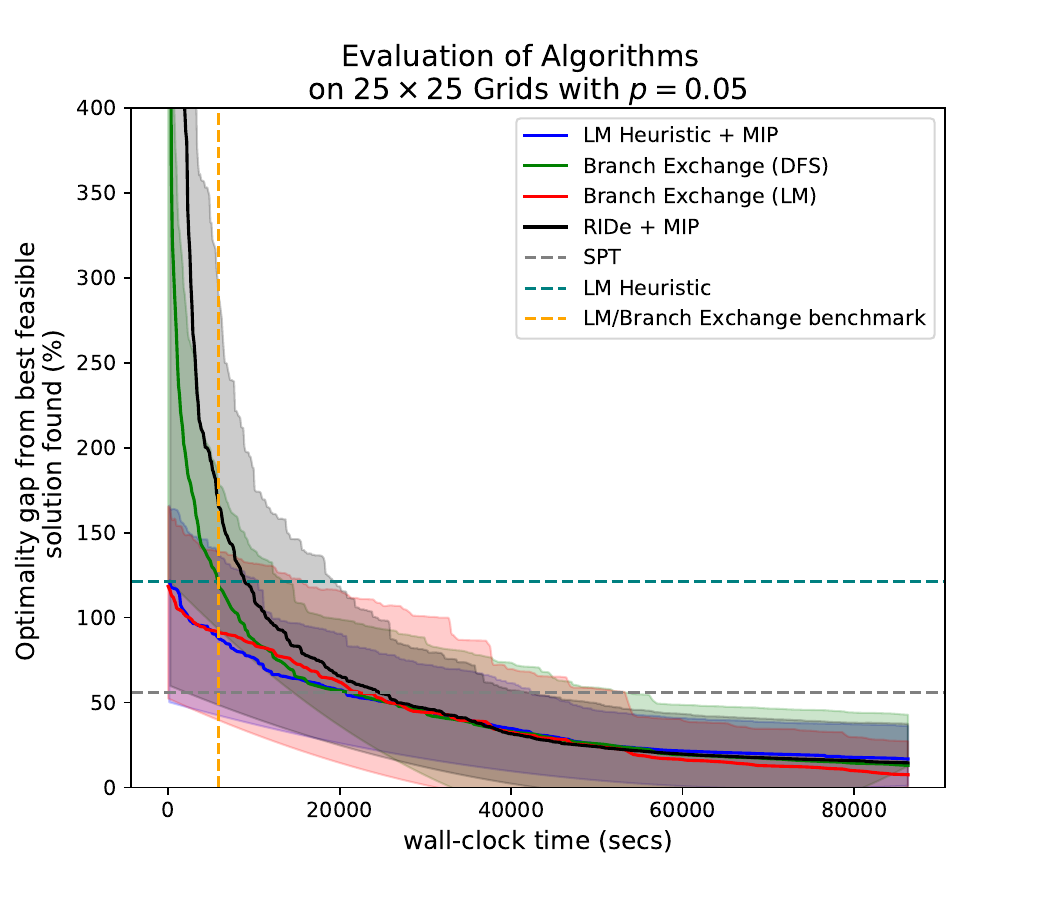}
        \includegraphics[scale = 0.45]{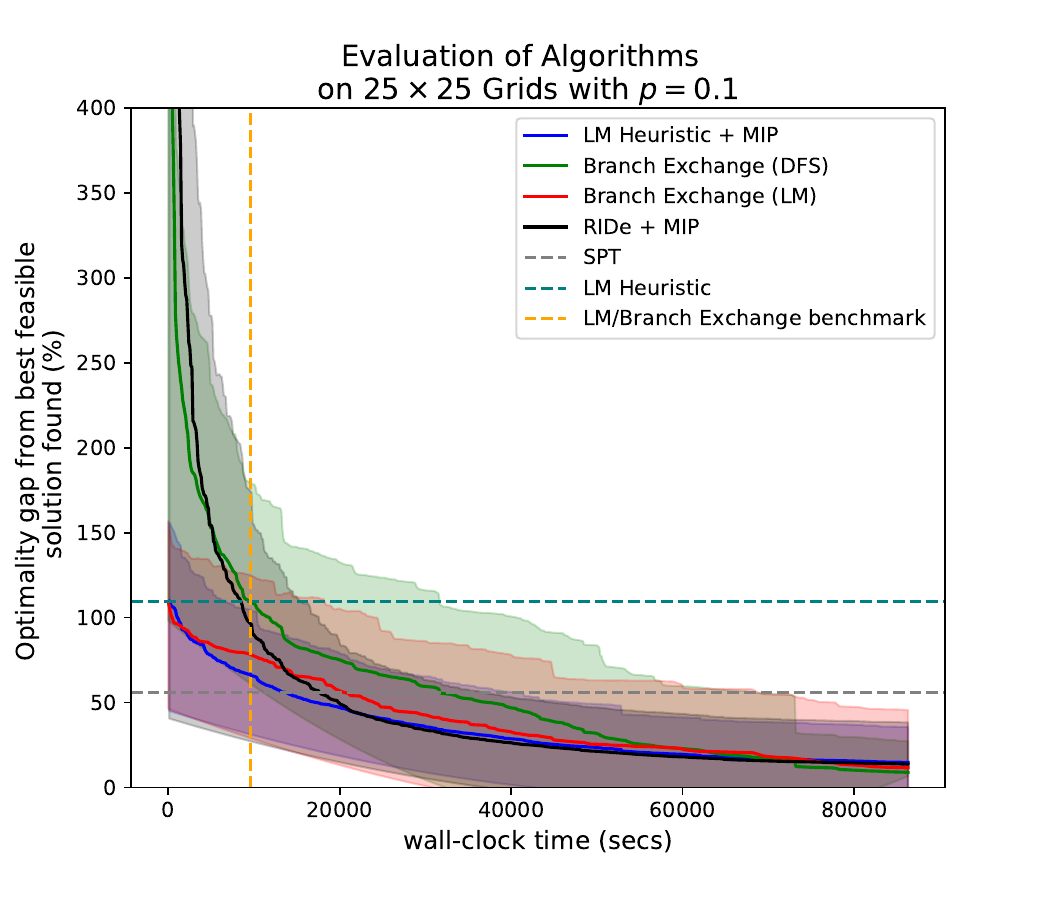}
        \includegraphics[scale = 0.45]{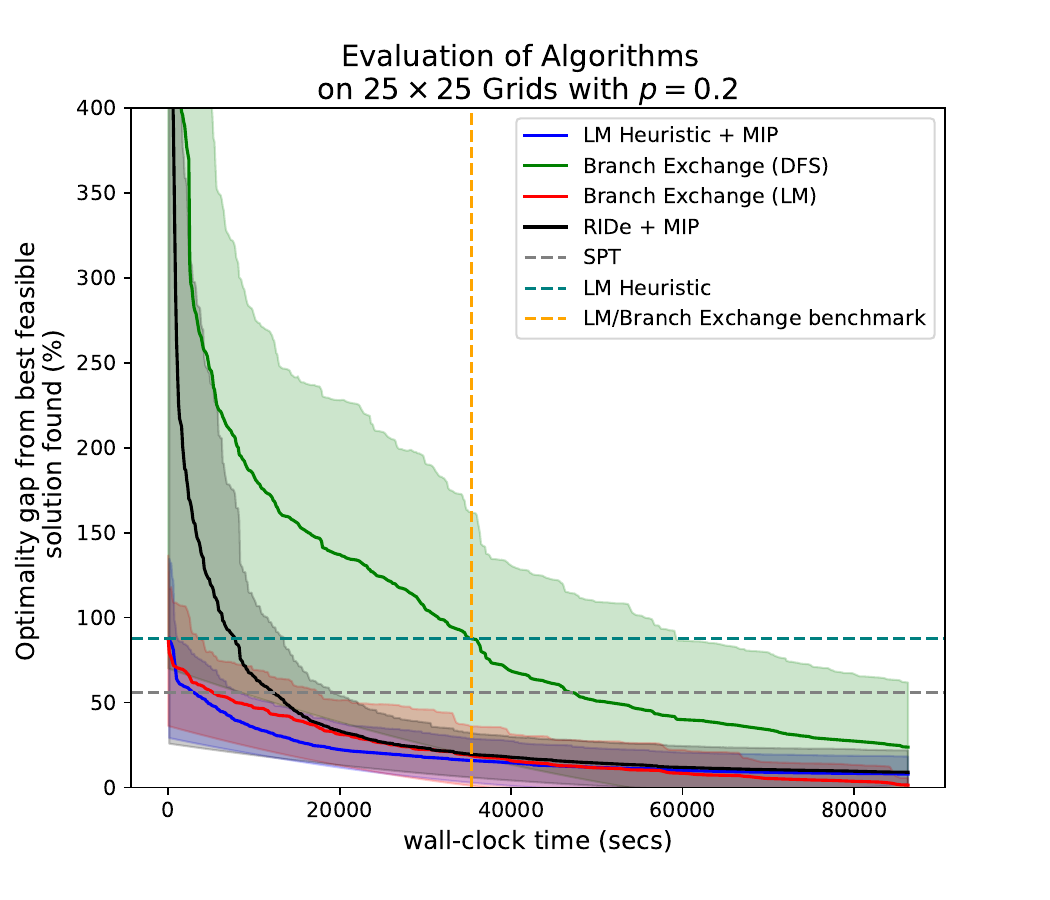}
    \caption{Plot comparing performance of different algorithms and MIP with different warm-start solutions on sparse $25 \times 25$ with sparsifcation probability $p \in \{0.05,0.1,0.2\}$. The demands and resistances are randomly chosen in $[0.5,1.5]$ and $[1,10]$ respectively. The dotted horizontal line compares the quality of the Layered Matching heuristic solution and the color margins represent confidence intervals across the different instances.}
    \label{fig:comp1}
    \vspace{-10 pt}
\end{figure}

\end{document}